\documentclass[11pt]{article}
\usepackage{fullpage} 
\usepackage[margin=1in]{geometry}

\usepackage{pdfpages}
\usepackage{subcaption}
\usepackage[utf8]{inputenc}

\usepackage[english]{babel}

\renewcommand\thesubfigure{\roman{subfigure}} 
\usepackage[intoc,refpage]{nomencl}

\usepackage{xstring} 
\usepackage{xpatch}

\patchcmd{\thenomenclature}
{\leftmargin\labelwidth}
{\leftmargin\labelwidth}
{}{}

\usepackage{microtype}
\usepackage{pgfplots,wrapfig}
\usepackage{enumitem}
\usepackage[framemethod=TikZ]{mdframed}
\usepackage{lmodern}

\usepackage{tikz}
\usetikzlibrary{positioning}
\usepackage{xcolor}
\definecolor {processblue}{cmyk}{0.96,0,0,0}
\usepackage{setspace}
\usepackage{pgfplots}

\usetikzlibrary{shapes,snakes,shadows,chains}
\usepackage{amsfonts} 
\usepackage{amsmath} 
\usepackage{amsthm}

\usepackage{thmtools}
\usepackage{thm-restate}

\usepackage[colorlinks = True,	linkcolor= blue,citecolor={blue}, linkbordercolor = white]{hyperref}

\usepackage{algorithm}
\usepackage[noend]{algpseudocode}
\usepackage{mathtools}

\DeclarePairedDelimiter{\ceil}{\lceil}{\rceil}

\newtheorem{theorem}{Theorem}
\newtheorem{lemma}{Lemma}

\newtheorem{con}{Conjecture}

\newtheorem{obs}{Observation}[lemma]
\newtheorem{pro}{Proposition}[section]

\newtheorem{prop}{Proposition}[lemma]
\theoremstyle{remark}

\theoremstyle{definition}


\makeatletter
\def\namedlabel#1#2{\begingroup
	#2%
	\def\@currentlabel{\textbf{#2}}%
	\phantomsection\label{#1}\endgroup
}
\makeatother

\usepackage{ifthen}
\newcommand{\fivecyclethree}
{\begin{tikzpicture}[
	vertex/.style={circle,fill=blue!15,draw,minimum size=30pt,inner sep=0pt}]

	\def \n {5}
	\def \radius {3cm}
	\def \margin {8} 
	
	\foreach \s in {1,...,\n}
	{
		
		\node[draw, circle,vertex] at ({360/\n * (\s - 1)}:\radius) {\ifthenelse{\s=2 \OR \s=1}
			{$1$}
			{\ifthenelse{\s=4 }
				{$1,2$}
				{}}};
		\draw[-, >=latex] ({360/\n * (\s - 1)+\margin}:\radius) 
		arc ({360/\n * (\s - 1)+\margin}:{360/\n * (\s)-\margin}:\radius);
	}

	\end{tikzpicture}
	
}

\newcommand{\fivecycletwo}
{\begin{tikzpicture}[
	vertex/.style={circle,fill=blue!15,draw,minimum size=30pt,inner sep=0pt}]

	\def \n {5}
	\def \radius {3cm}
	\def \margin {8} 
	
	\foreach \s in {1,...,\n}
	{
		
		\node[draw, circle,vertex] at ({360/\n * (\s - 1)}:\radius) {\ifthenelse{\s=2 \OR \s=4}
			{$1,2$}
			{}};
		\draw[-, >=latex] ({360/\n * (\s - 1)+\margin}:\radius) 
		arc ({360/\n * (\s - 1)+\margin}:{360/\n * (\s)-\margin}:\radius);
	}

	\end{tikzpicture}
	
}


\newcommand{\sixcyclefour}
{\begin{tikzpicture}[
	vertex/.style={circle,fill=blue!15,draw,minimum size=30pt,inner sep=0pt}]

	\def \n {6}
	\def \radius {3cm}
	\def \margin {8} 
	
	\foreach \s in {1,...,\n}
	{
		
		\node[draw, circle,vertex] at ({360/\n * (\s - 1)}:\radius) {\ifthenelse{\s=3 \OR \s=2 \OR \s=1}
			{$1$}
			{\ifthenelse{\s=5 }
				{$1,2$}
				{}}};
		\draw[-, >=latex] ({360/\n * (\s - 1)+\margin}:\radius) 
		arc ({360/\n * (\s - 1)+\margin}:{360/\n * (\s)-\margin}:\radius);
	}

	\end{tikzpicture}
	
}

\newcommand{\sixcyclethree}
{\begin{tikzpicture}[
	vertex/.style={circle,fill=blue!15,draw,minimum size=30pt,inner sep=0pt}]

	\def \n {6}
	\def \radius {3cm}
	\def \margin {8} 
	
	\foreach \s in {1,...,\n}
	{
		
		\node[draw, circle,vertex] at ({360/\n * (\s - 1)}:\radius) {\ifthenelse{\s=2 \OR \s=4 \OR \s=6}
			{$1,2$}
			{}};
		\draw[-, >=latex] ({360/\n * (\s - 1)+\margin}:\radius) 
		arc ({360/\n * (\s - 1)+\margin}:{360/\n * (\s)-\margin}:\radius);
	}

	\end{tikzpicture}
	
}


\newcommand{\sixcyclethreeb}
{\begin{tikzpicture}[
	vertex/.style={circle,fill=blue!15,draw,minimum size=30pt,inner sep=0pt}]

	\def \n {6}
	\def \radius {3cm}
	\def \margin {8} 
	
	\foreach \s in {1,...,\n}
	{
		
		\node[draw, circle,vertex] at ({360/\n * (\s - 1)}:\radius) {\ifthenelse{\s=2 \OR \s=4 \OR \s=6}
			{$1,2,3$}
			{}};
		\draw[-, >=latex] ({360/\n * (\s - 1)+\margin}:\radius) 
		arc ({360/\n * (\s - 1)+\margin}:{360/\n * (\s)-\margin}:\radius);
	}

	\end{tikzpicture}
	
}

\newcommand{\sixcyclefourb}
{\begin{tikzpicture}[
	vertex/.style={circle,fill=blue!15,draw,minimum size=30pt,inner sep=0pt}]

	\def \n {6}
	\def \radius {3cm}
	\def \margin {8} 
	
	\foreach \s in {1,...,\n}
	{
		
		\node[draw, circle,vertex] at ({360/\n * (\s - 1)}:\radius) 
		{\ifthenelse{\s=3  \OR \s=1 \OR \s=2}{$1,3$}
			{\ifthenelse{\s=5 }	{$1,2,3$}{}}};
		\draw[-, >=latex] ({360/\n * (\s - 1)+\margin}:\radius) 
		arc ({360/\n * (\s - 1)+\margin}:{360/\n * (\s)-\margin}:\radius);
	}

	\end{tikzpicture}

}


\newcommand{\fivecycletwob}
{\begin{tikzpicture}[
	vertex/.style={circle,fill=blue!15,draw,minimum size=30pt,inner sep=0pt}]

	\def \n {5}
	\def \radius {3cm}
	\def \margin {8} 
	
	\foreach \s in {1,...,\n}
	{
		
		\node[draw, circle,vertex] at ({360/\n * (\s - 1)}:\radius) {\ifthenelse{\s=2 \OR \s=4}
			{$1,2,3$}
			{}};
		\draw[-, >=latex] ({360/\n * (\s - 1)+\margin}:\radius) 
		arc ({360/\n * (\s - 1)+\margin}:{360/\n * (\s)-\margin}:\radius);
	}

	\end{tikzpicture}
	
}

\newcommand{\fivecyclethreeb}
{\begin{tikzpicture}[
	vertex/.style={circle,fill=blue!15,draw,minimum size=30pt,inner sep=0pt}]

	\def \n {5}
	\def \radius {3cm}
	\def \margin {8} 
	
	\foreach \s in {1,...,\n}
	{
		
		\node[draw, circle,vertex] at ({360/\n * (\s - 1)}:\radius) {\ifthenelse{\s=2\OR \s=1}
			{$1,3$}
			{\ifthenelse{\s=4 }
				{$1,2,3$}
				{}}};
		\draw[-, >=latex] ({360/\n * (\s - 1)+\margin}:\radius) 
		arc ({360/\n * (\s - 1)+\margin}:{360/\n * (\s)-\margin}:\radius);
	}

	\end{tikzpicture}
	
}

\newcommand\addfiguretwo{
	\setlength{\intextsep}{0pt}%
	\setlength{\columnsep}{18pt}%
	\begin{wrapfigure}{r}{0.5\textwidth}
		\centering
		\resizebox{0.9\linewidth}{!}{
			\begin{tikzpicture}[-latex,node distance =2.4 cm,on grid,
			vertex/.style ={ circle,top color = white, bottom color = processblue!20,draw,processblue, text=black, minimum size=0.8cm},
			state/.style ={ circle,top color = white, bottom color = red!20,draw, red, text=black }, edge/.style = {color=black,-}]
			\node[vertex] [star](0){} ;
			
			\node[vertex][rectangle, right=of 0] (1){\small $v_2^2$};
			
			\node[vertex][circle, right=1.2cm of 1] (2){\small $v_3^2$};
			\node[vertex][star, right=of 2] (3){};
			
			\node[vertex][circle, below=1.2cm of 2] (b){\small $v_3^3$};
			\node[vertex][star, left = of b](a){} ;
			\node[vertex][rectangle, right=1.2cm of b] (c){\small $v_2^3$};
			\node[vertex][star, right=of c] (d){};
			
			\node[vertex][star, above =1.1cm of 1] (A){};
			\node[vertex][star,  left=of A] (B){};
			
			\node[vertex][star, below =1.2cm of c] (G){};
			\node[vertex][star,  right=of G] (F){};
			
			\draw[edge] (1) to (2);
			
			\draw[edge] [red](2) to (b);
			
			\draw[edge] [red](1) to (A);
			\draw[edge] [red](c) to (G);
			
			\draw[edge] (b) to (c);
			\path (0) to node {\dots $P_2^{'}$\dots} (1);
			\path (2) to node {\dots $P_2^{''}$\dots} (3);
			
			\path (a) to node {\dots $P_3^{''}$\dots} (b);
			\path (c) to node {\dots $P_3^{'}$\dots} (d);
			
			\path (A) to node {\dots $P_1$\dots} (B);
			\path (G) to node {\dots $P_4$\dots} (F);
			
			\end{tikzpicture}
		}
		\caption{replacing four paths $P_1, P_2, P_3, P_4$ by three paths}\label{example:2}
		
	\end{wrapfigure}

}

\newcommand\addfigure{
	
	\setlength{\intextsep}{-4pt}%
	\setlength{\columnsep}{18pt}%
	\begin{wrapfigure}{r}{0.5\textwidth}
		\centering
		\resizebox{0.75\linewidth}{!}{
			
			\begin{tikzpicture}[-latex,scale=0.3, auto,swap,node distance =2.2 cm,on grid]
			\tikzstyle{vertex} =[ circle,top color = white, bottom color = processblue!20,draw,black, text=black, minimum size=0.8cm]
			\tikzstyle{edge} =[color=black,-]
			\node[vertex][star] (0){} ;
			\node[vertex][rectangle, right=of 0] (1){\small $u$};
			\node[vertex][rectangle, right=1.3cm of 1] (2){\small $v$};
			\node[vertex][star, right=of 2] (3){};
			
			\node[vertex][star,bottom color =green,black, below=1.3cm of 0] (a){};
			\node[vertex][star,bottom color =green,black, right=of a] (b){};
			\node[vertex][star,bottom color =purple,black, above=1.3cm of 2] (c){};
			\node[vertex][star,bottom color =purple,black, right=of c] (d){ };
			
			\draw[edge] (1) to (2);

			\path (0) to node {\small\dots $P^{'}$\dots} (1);
			\path (2) to node {\small\dots $P^{''}$\dots} (3);
			
			\path (a) to node {\small\dots $P_u$\dots} (b);
			\path (c) to node {\small\dots $P_v$\dots} (d);

			\draw[edge] [red] (1) to (b);
			\draw[edge] [red] (2) to (c);

			\end{tikzpicture}
		}
		\caption{\small replacing three paths $P,P_u,P_v$ by two paths}\label{example:1}
	\end{wrapfigure}
}



\newcommand\crossing{

\begin {tikzpicture}[-latex,node distance =2.4 cm,on grid,
vertex/.style ={ circle,top color = white, bottom color = processblue!20,draw,processblue, text=black, minimum size=0.8cm},
state/.style ={ circle,top color = white, bottom color = red!20,draw, red, text=black }, edge/.style = {color=black,-}]
{
	\node[vertex][star] (0){\small $o_1$} ;
	
	\node[vertex][rectangle, right=of 0] (1){\small $x_1$};

	\node[vertex][rectangle, right=1.2cm of 1] (2){\small $x_2$};

	\node[vertex][star, right=of 2] (3){\small $o_2$};
	
	\draw[edge] (1) to (2);

	\draw[edge] [blue,bend left = 50] (1) to (3);
	\draw[edge] [red,bend left = 50] (0) to (2);
	
	\path (0) to node {\dots $P_1$\dots} (1);
	\path (2) to node {\dots $P_2$\dots} (3);
	
}
\end{tikzpicture}

}

\newcommand\inacceptor{

\begin {tikzpicture}[-latex,node distance =2.4 cm,on grid,
vertex/.style ={ circle,top color = white, bottom color = processblue!20,draw,processblue, text=black, minimum size=0.8cm},
state/.style ={ circle,top color = white, bottom color = red!20,draw, red, text=black }, edge/.style = {color=black,-}]
{
\node[vertex][star] (0){\small $o_1$} ;

\node[vertex][rectangle, right=of 0] (1){\small $x_1$};

\node[vertex][rectangle, right=1.2cm of 1] (2){\small $x_2$};

\node[vertex][circle, below=1.2cm of 2] (a){\small $P_{x_2}$};

\node[vertex][star, right=of 2] (3){\small $o_2$};

\draw[edge] (1) to (2);
\draw[edge][blue] (2) to (a);

\draw[edge] [red,bend left = 50] (1) to (3);

\path (0) to node {\dots $P_1$\dots} (1);
\path (2) to node {\dots $P_2$\dots} (3);

}
\end{tikzpicture}

}



\newcommand\knonouter{

\begin {tikzpicture}[-latex,node distance =2.5 cm,on grid,
vertex/.style ={ circle,top color = white, bottom color = processblue!20,draw,processblue, text=black, minimum size=1cm},
state/.style ={ circle,top color = white, bottom color = red!20,draw, red, text=black }, edge/.style = {color=black,-}]
{
\node[vertex][star] (0){\Large $o_1$} ;

\node[vertex][rectangle, right=of 0] (1){\Large $x_1$};
\node[vertex][ rectangle,right=1.3 cm of 1] (2){\Large $x_2$};
\node[vertex][rectangle, right=of 2] (3){\Large $x_i$};

\node[vertex][rectangle, right=of 3] (k){\Large $x_{k-1}$};
\node[vertex][rectangle, right=1.3cm of k] (4){\Large $x_k$};

\node[vertex][ star, right=of 4] (5){\Large $o_2$};
\draw[edge] (1) to (2);
\draw[edge] (k) to (4);


\draw[edge][bend left = 50,red] (0) to (1);
\draw[edge][bend left = 50,red] (0) to (3);
\draw[edge][bend left = 50,red] (0) to (2);
\draw[edge] [bend left = 50,red] (3) to (5);
\draw[edge] [bend left = 50,red] (4) to (5);
\draw[edge] [bend left = 50,red] (k) to (5);

\path (0) to node {  \dots $P_1$\dots} (1);
\path (2) to node {  \dots } (3);
\path (3) to node {  \dots} (k);

\path (4) to node { \dots $P_2$\dots} (5);
}
\end{tikzpicture}

}

\newcommand\koutercycles{

\begin {tikzpicture}[-latex,node distance =2.5 cm,on grid,
vertex/.style ={ circle,top color = white, bottom color = processblue!20,draw,processblue, text=black, minimum size=1cm},
state/.style ={ circle,top color = white, bottom color = red!20,draw, red, text=black }, edge/.style = {color=black,-}]
{
\node[vertex][star] (0){\Large $o_1$} ;

\node[vertex][rectangle, right=of 0] (1){\Large $x_1$};
\node[vertex][ rectangle,right=1.3cm of 1] (2){\Large $x_2$};
\node[vertex][rectangle, right=of 2] (3){\Large $x_i$};
\node[vertex][rectangle, right=1.3cm of 3] (31){\Large $x_{i+1}$};
\node[vertex][rectangle, right=of 31] (32){\Large $x_{j-1}$};
\node[vertex][rectangle, right=1.3cm of 32] (33){\Large $x_j$};

\node[vertex][rectangle, right=of 33] (k){\Large $x_{k-1}$};
\node[vertex][rectangle, right=1.3cm of k] (4){\Large $x_k$};

\node[vertex][ star, right=of 4] (5){\Large $o_2$};
\node[vertex][circle, below=2cm of 32] (a){\Large $C_1$};

\draw[edge] (1) to (2);
\draw[edge] (k) to (4);
\draw[edge] (32) to (33);
\draw[edge] (3) to (31);

\draw[edge][red] (3) to (a);
\draw[edge][red] (31) to (a);
\draw[edge][red] (32) to (a);
\draw[edge][red] (33) to (a);

\draw[edge][bend left = 50,red] (0) to (1);
\draw[edge][bend left = 50,red] (0) to (3);
\draw[edge][bend left = 50,red] (0) to (2);
\draw[edge] [bend left = 50,red] (33) to (5);
\draw[edge] [bend left = 50,red] (4) to (5);
\draw[edge] [bend left = 50,red] (k) to (5);

\path (0) to node {  \dots $P_1$\dots} (1);
\path (2) to node {  \dots } (3);
\path (31) to node {  \dots} (32);

\path (4) to node { \dots $P_2$\dots} (5);
}
\end{tikzpicture}

}

\newcommand\kouterpath{

\begin {tikzpicture}[-latex,node distance =2.5 cm,on grid,
vertex/.style ={ circle,top color = white, bottom color = processblue!20,draw,processblue, text=black, minimum size=1cm},
state/.style ={ circle,top color = white, bottom color = red!20,draw, red, text=black }, edge/.style = {color=black,-}]
{
\node[vertex][star] (0){$o_1$} ;

\node[vertex][rectangle, right=of 0] (1){\Large $x_1$};
\node[vertex][ rectangle,right=1.3cm of 1] (2){\Large $x_2$};
\node[vertex][rectangle, right=of 2] (3){\Large $x_i$};

\node[vertex][rectangle, right=of 3] (k){\Large $x_{k-1}$};
\node[vertex][rectangle, right=1.3cm of k] (4){\Large $x_k$};

\node[vertex][ star, right=of 4] (5){$o_2$};
\node[vertex][circle, below=2cm of 3] (a){$P_1$};
\draw[edge] (1) to (2);
\draw[edge] (k) to (4);

\draw[edge][red] (3) to (a);
\draw[edge][red]  (1) to (a);
\draw[edge][red]  (2) to (a);
\draw[edge][red]  (4) to (a);
\draw[edge][red]  (k) to (a);


\path (0) to node {  \dots $P_1$\dots} (1);
\path (2) to node {  \dots } (3);
\path (3) to node {  \dots} (k);

\path (4) to node { \dots $P_2$\dots} (5);
}
\end{tikzpicture}

}

\newcommand\ksequence{

\begin {tikzpicture}[-latex,node distance =2.5 cm,on grid,
vertex/.style ={ circle,top color = white, bottom color = processblue!20,draw,processblue, text=black, minimum size=1cm},
state/.style ={ circle,top color = white, bottom color = red!20,draw, red, text=black }, edge/.style = {color=black,-}]
{
\node[vertex][star] (0){$o_1$} ;

\node[vertex][rectangle, right=of 0] (1){\Large $x_1$};
\node[vertex][ rectangle,right=1.2cm of 1] (2){\Large $x_2$};
\node[vertex][rectangle, right=of 2] (3){\Large $x_i$};

\node[vertex][rectangle, right=of 3] (k){\Large $x_{k-1}$};
\node[vertex][rectangle, right=1.2cm of k] (4){\Large $x_k$};

\node[vertex][ star, right=of 4] (5){$o_2$};
\node[vertex][circle, below=2cm of 3] (a){$C_1$};
\node[vertex][circle,right=of a] (b){$C_2$};
\draw[edge] (1) to (2);
\draw[edge] (k) to (4);

\draw[edge][red] (3) to (a);
\draw[edge][red] (3) to (b);

\draw[edge][bend left = 50,red] (0) to (1);
\draw[edge][bend left = 50,red] (0) to (3);
\draw[edge][bend left = 50,red] (0) to (2);
\draw[edge] [bend left = 50,red] (3) to (5);
\draw[edge] [bend left = 50,red] (4) to (5);
\draw[edge] [bend left = 50,red] (k) to (5);

\path (0) to node {  \dots $P_1$\dots} (1);
\path (2) to node {  \dots } (3);
\path (3) to node {  \dots} (k);

\path (4) to node { \dots $P_2$\dots} (5);
}
\end{tikzpicture}

}


\newcommand\splittinginacceptors{

\begin {tikzpicture}[-latex,scale= 0.5, node distance =2.4 cm,on grid,
vertex/.style ={ circle,top color = white, bottom color = processblue!20,draw,processblue, text=black, minimum size=0.8cm},
state/.style ={ circle,top color = white, bottom color = red!20,draw, red, text=black }, edge/.style = {color=black,-}]
{
\node[vertex][star] (0){\small $o_1$} ;

\node[vertex][rectangle, right=of 0] (1){\small $u$};
\node[vertex][rectangle, right=of 1] (2){\small $x_1$};

\node[vertex][rectangle, right=1.2cmof 2] (3){\small $x_2$};

\node[vertex][star, right=of 3] (4){\small $o_2$};

\node[vertex][star,black,bottom color =green, below=1.5cm of 1] (a){\small $o_u$};

\node[vertex][star,black,bottom color =green, right= of a] (b){};
\draw[edge] (2) to (3);

\draw[edge] (1) to (a);

\draw[edge] [bend left = 50,red] (3) to (4);
\draw[edge][bend left = 50,red]  (0) to (2);

\path (0) to node {\dots $P_1^{'}$\dots} (1);
\path (1) to node {\dots $P_1^{''}$\dots} (2);
\path (3) to node {\dots $P_2$\dots} (4);

\path (a) to node {\dots $P_u$\dots} (b);

}
\end{tikzpicture}

}


\newcommand\dangerouslemma{

\begin {tikzpicture}[-latex,node distance =2.4 cm,on grid,
vertex/.style ={ circle,top color = white, bottom color = processblue!20,draw,processblue, text=black, minimum size=0.8cm},
state/.style ={ circle,top color = white, bottom color = red!20,draw, red, text=black }, edge/.style = {color=black,-}]
{
\node[vertex][star] (0){$o_1$} ;

\node[vertex][rectangle, right=of 0] (1){\small $x_1$};
\node[vertex][ circle,right=1.2cm of 1] (2){\small $v_3$};
\node[vertex][rectangle, right=1.2cm of 2] (3){\small $y_1$};
\node[vertex][circle, right=of 3] (4){\small $u$};
\node[vertex][rectangle, right=1.2cm of 4] (5){\small $x_2$};
\node[vertex][ star, right=of 5] (6){$o_2$};

\draw[edge] (1) to (2);
\draw[edge] (2) to (3);
\draw[edge] (4) to (5);

\draw[edge][blue, bend left = 40] (2) to (4);

\draw[edge][red,bend right = 40] (3) to (6);
\draw[edge][red,bend right = 40] (5) to (6);

\path (0) to node {\dots $P_1$\dots} (1);
\path (3) to node {\dots $P_2$\dots} (4);

\path (5) to node {\dots $P_3$\dots} (6);
}
\end{tikzpicture}

}

\newcommand\dangerouslemmatwo{

\begin {tikzpicture}[-latex,node distance =2.4 cm, on grid,
vertex/.style ={ circle,top color = white, bottom color = processblue!20,draw,processblue, text=black, minimum size=0.8cm},
state/.style ={ circle,top color = white, bottom color = red!20,draw, red, text=black }, edge/.style = {color=black,-}]
{
\node[vertex][star] (0){$o_1$} ;

\node[vertex][rectangle, right=of 0] (1){\small $x_2$};
\node[vertex][ circle,right=1.2cm of 1] (2){\small $u$};
\node[vertex][rectangle, right= of 2] (3){\small $y_1$};
\node[vertex][circle, right=1.2cm of 3] (4){\small $v_3$};
\node[vertex][rectangle, right=1.2cm of 4] (5){\small $x_1$};
\node[vertex][ star, right=of 5] (6){$o_2$};

\draw[edge] (1) to (2);
\draw[edge] (3) to (4);
\draw[edge] (4) to (5);

\draw[edge][blue, bend left = 40] (2) to (4);

\draw[edge][red,bend left = 40] (3) to (0);
\draw[edge][red,bend left = 40] (1) to (0);

\path (0) to node {\dots $P_1$\dots} (1);
\path (2) to node {\dots $P_2$\dots} (3);

\path (5) to node {\dots $P_3$\dots} (6);
}
\end{tikzpicture}

}

\newcommand\blockkindtwo[3]{
\begin {tikzpicture}[-latex,scale=0.3,node distance =2.4 cm,on grid,
vertex/.style ={ circle,top color = white, bottom color = processblue!20,draw,processblue, text=black, minimum size=0.8cm},
ver/.style ={ rectangle,top color = white, bottom color = white,draw,white, text=black, minimum size=0.8cm},
state/.style ={ circle,top color = white, bottom color = red!20,draw, red, text=black }, edge/.style = {color=black,-}]
{
\node[vertex][star] (0){$o_1$} ;

\node[vertex][rectangle, right=of 0] (1){\small $x_i$};
\node[vertex][ circle,right=1.2cm of 1] (2){\small $v_4^a$};
\node[vertex][ circle,right=1.2cm of 2] (7){\small $v_4^b$};

\draw[edge] (1) to (2);
\draw[edge] (2) to (7);

\path (0) to node {\dots $P_1$\dots} (1);

\ifthenelse{\equal{#2}{$+\frac{2}{3}$}}
{	\node[vertex][rectangle, right= of 7] (3){\small $y_1$};
\node[vertex][circle, right=1.2cm of 3] (4){\small $v_3$};
\node[vertex][rectangle, right=1.2cm of 4] (5){\small $x_1$};
\node[vertex][ star, right=of 5] (6){$o_2$};
\draw[edge][blue, bend left = 50] (2) to (4);
\draw[edge][red,bend right = 50] (1) to (0);
\draw[edge] (3) to (4);
\draw[edge] (4) to (5);
\path (7) to node {\dots $P_2$\dots} (3);
\path (5) to node {\dots $P_3$\dots} (6);
\node[ver][ below=1 of 1] (a){#1};
\node[ver][ below=1 of 2] (b){#2};
\node[ver][ below=1 of 7] (c){#3};
}
{
\node[vertex][ star, right=of 7] (3){$o_2$};
\path (7) to node {\dots $P_2$\dots} (3);
\node[ver][ below=1 of 1] (a){#1};
\node[ver][ below=1 of 2] (b){#2};
\node[ver][ below=1 of 7] (c){#3};

}

}
\end{tikzpicture}

}

\newcommand\blockkindthree[1]{
\begin {tikzpicture}[-latex,node distance =1.8 cm,on grid,
vertex/.style ={ circle,top color = white, bottom color = processblue!20,draw,processblue, text=black, minimum size=0.5cm},
ver/.style ={ rectangle,top color = white, bottom color = white,draw,white, text=black, minimum size=0.8cm},
state/.style ={ circle,top color = white, bottom color = red!20,draw, red, text=black }, edge/.style = {color=black,-}]
{


\node[vertex][rectangle] (4){\small $X_i$};

\ifthenelse{\equal{#1}{$1$}}
{ 	
\node[vertex][ circle,right= of 4] (A){\small $v_4^a$};
\node[vertex][ circle,right=1.2cm of A] (B){\small $v_4^b$};
\draw[edge] (A) to (B);
\node[ver][ below=1 of A] (b){$+\frac{2}{3}$};
\node[ver][ below=1 of B] (c){$+\frac{2}{3}$};
}
{\node[vertex][ circle,right= of 4] (A){\small $v_3$};
\node[ver][ below=1 of A] (b){$+1$};
}

\node[ver][ below=1 of 4] (a){$-\frac{2}{3}$};

\draw[edge] (4) to (A);


}
\end{tikzpicture}

}

\newcommand\blockkindoneheavy[1]{
\begin {tikzpicture}[-latex,node distance =1.8 cm,on grid,
vertex/.style ={ circle,top color = white, bottom color = processblue!20,draw,processblue, text=black, minimum size=1cm},
ver/.style ={ rectangle,top color = white, bottom color = white,draw,white, text=black, minimum size=0.8cm},
state/.style ={ circle,top color = white, bottom color = red!20,draw, red, text=black }, edge/.style = {color=black,-}]
{

\node[vertex][rectangle] (4){\small $x_i$};

\node[vertex][ circle,right= of 4] (A){\small $v_i$};


\node[ver][ below=1 of 4] (a){$-\frac{5}{3}$};

\draw[edge] (4) to (A);

\IfStrEqCase{#1}{%
{a}{ \node[ver][ below=1 of A] (b){$+\frac{5}{3}$};

}%
{b}{\node[vertex][ rectangle,right= of A] (B){\small $x_{i+1}$};
\node[ver][ below=1 of A] (b){$+\frac{4}{3}$};
\node[ver][ below=1 of B] (b){$-\frac{2}{3}$};
\draw[edge] (B) to (A);
}%
{c}{\node[vertex][ rectangle,right= of A] (B){\small $x_{i+1}$};
\node[ver][ below=1 of A] (b){$+1$};
\node[ver][ below=1 of B] (b){$-\frac{1}{3}$};
\draw[edge] (B) to (A);}	%

{d}{\node[vertex][ rectangle,right= of A] (B){\small $X_{i+1}$};
\node[ver][ below=1 of A] (b){$+1$};
\node[ver][ below=1 of B] (b){$-\frac{1}{3}$};
\draw[edge] (B) to (A);}	%

}%

}
\end{tikzpicture}

}

\newcommand\blockkindoneend{
\begin {tikzpicture}[-latex,node distance =1.8 cm,on grid,
vertex/.style ={ circle,top color = white, bottom color = processblue!20,draw,processblue, text=black, minimum size=1cm},
ver/.style ={ rectangle,top color = white, bottom color = white,draw,white, text=black, minimum size=0.8cm},
state/.style ={ circle,top color = white, bottom color = red!20,draw, red, text=black }, edge/.style = {color=black,-}]
{

\node[vertex][rectangle] (A){\small $B_i$};
\node[vertex][ rectangle,right= of A] (B){\small $X_{i+1}$};

\node[ver][ below=1 of A] (a){$-\frac{2}{3}$};
\node[ver][ below=1 of B] (b){$-\frac{1}{3}$};

\draw[edge] (B) to (A);

}
\end{tikzpicture}

}

\newcommand\blockkindone[2]{
\begin {tikzpicture}[-latex,node distance =1.8 cm,on grid,
vertex/.style ={ circle,top color = white, bottom color = processblue!20,draw,processblue, text=black, minimum size=0.5cm},
ver/.style ={ rectangle,top color = white, bottom color = white,draw,white, text=black, minimum size=0.8cm},
state/.style ={ circle,top color = white, bottom color = red!20,draw, red, text=black }, edge/.style = {color=black,-}]
{


\node[vertex][rectangle] (4){\small $x_i$};

\node[vertex][ circle,right= of 4] (A){\small $v_i$};
\node[ver][ below=1 of A] (b){#2};

\node[ver][ below=1 of 4] (a){#1};

\draw[edge] (4) to (A);


}
\end{tikzpicture}

}

\newcommand\oppositesides{


\begin {tikzpicture}[-latex,node distance =2.4 cm,on grid,
vertex/.style ={ circle,top color = white, bottom color = processblue!20,draw,processblue, text=black, minimum size=0.8cm},
state/.style ={ circle,top color = white, bottom color = red!20,draw, red, text=black }, edge/.style = {color=black,-}]
{
\node[vertex][star] (0){$o_1$} ;

\node[vertex][rectangle, right=of 0] (1){\small $x_1$};
\node[vertex][ circle,right=1.2cm of 1] (2){\small $v_3$};
\node[vertex][rectangle, right=1.2cm of 2] (3){\small $y_1$};
\node[vertex][circle, right=of 3] (4){\small $u$};
\node[vertex][rectangle, right=1.2cm of 4] (5){\small $x_2$};
\node[vertex][ star, right=of 5] (6){$o_2$};

\draw[edge] (1) to (2);
\draw[edge] (2) to (3);
\draw[edge] (4) to (5);

\draw[edge][bend left = 60] (2) to (4);
\path (0) to node {\dots $P_1$\dots} (1);
\path (3) to node {\dots $P_2$\dots} (4);

\path (5) to node {\dots $P_3$\dots} (6);
}
\end{tikzpicture}

}

\newcommand\caseexternalb{

\begin {tikzpicture}[-latex,node distance =2 cm,on grid,sibling distance=1cm,
vertex/.style ={ circle,top color = white, bottom color = processblue!20,draw,processblue, text=black, minimum size=1.3cm},
state/.style ={ circle,top color = white, bottom color = red!20,draw, red, text=black }, edge/.style = {color=black,-}]

\node[vertex][star](0){\Large $o_1$} ;
\node[vertex][circle, right=2cm of 0] (1){};
\node[vertex][circle, right=of 1] (2){};
\node[vertex][rectangle, right=2cm of 2] (3){\Large $x_1$};

\node[vertex] [star,bottom color =yellow,black, above=2cm of 3] (P){\Large $o_{x_1}$} ;
\node[vertex][star,bottom color =yellow,black, left=3cm of P] (Q){};

\draw[edge] (0) to (1);
\draw[edge] (2) to (3);
\path (1) to node {\dots } (2);
\tikzset{vertex/.append style = {bottom color =red,black}};

\node[vertex][star, below=2cm of 0] (a){\Large $o_2$};
\node[vertex][circle, right=2cm of a] (b){};
\node[vertex][circle, right=of b] (c){};
\node[vertex][rectangle, right=2cm of c] (d){\Large $x_2$};

\node[vertex][circle,bottom color =green,black, below=2cm of d] (A){\Large $C$};

\path (P) to node { \dots$P_{x_1}$ \dots} (Q);

\draw[edge] (d) to (A);

\draw[edge] (a) to (b);
\draw[edge] (c) to (d);
\path (b) to node {\dots } (c);

\draw[edge] (3) to (P);

\end{tikzpicture}
}

\newcommand\caseexternalc{

\begin {tikzpicture}[-latex,node distance =2 cm,on grid,sibling distance=1cm,
vertex/.style ={ circle,top color = white, bottom color = processblue!20,draw,processblue, text=black, minimum size=1.3cm},
state/.style ={ circle,top color = white, bottom color = red!20,draw, red, text=black }, edge/.style = {color=black,-}]

\node[vertex][star](0){\Large $o_1$} ;
\node[vertex][circle, right=2cm of 0] (1){};
\node[vertex][circle, right=of 1] (2){};
\node[vertex][rectangle, right=2cm of 2] (3){\Large $x_1$};

\node[vertex] [star,bottom color =yellow,black, above=2cm of 3] (P){\Large $o_{x_1}$} ;
\node[vertex][star,bottom color =yellow,black, left=3cm of P] (Q){};

\draw[edge] (0) to (1);
\draw[edge] (2) to (3);
\path (1) to node {\dots } (2);
\tikzset{vertex/.append style = {bottom color =red,black}};

\node[vertex][star, below=2cm of 0] (a){\Large $o_2$};
\node[vertex][circle, right=2cm of a] (b){};
\node[vertex][circle, right=of b] (c){};
\node[vertex][rectangle, right=2cm of c] (d){ \Large$x_2$};

\node[vertex][star,bottom color =green,black, below=2cm of d] (A){ \Large $o_{x_2}$};

\node[vertex][star,bottom color =green,black, left=3cm of A] (B){};

\path (A) to node { \dots$P_{x_2}$ \dots} (B);
\path (P) to node { \dots$P_{x_1}$ \dots} (Q);

\draw[edge] (d) to (A);

\draw[edge] (a) to (b);
\draw[edge] (c) to (d);
\path (b) to node {\dots } (c);

\draw[edge] [color=blue] (3) to (a);
\draw[edge] [color=red] (3) to (P);

\draw[edge][color=green] (3) to (B);

\draw[edge][bend left = 30,color=blue] (0) to (3);
\end{tikzpicture}
}

\newcommand\caseexternala{

\begin {tikzpicture}[-latex,node distance =2 cm,on grid,sibling distance=1cm,
vertex/.style ={ circle,top color = white, bottom color = processblue!20,draw,processblue, text=black, minimum size=1.3cm},
state/.style ={ circle,top color = white, bottom color = red!20,draw, red, text=black }, edge/.style = {color=black,-}]

\node[vertex][star](0){\Large $o_2$} ;
\node[vertex][circle, right=2cm of 0] (1){};
\node[vertex][circle, right=of 1] (2){};
\node[vertex][rectangle, right=2cm of 2] (3){\Large $x_2$};

\draw[edge] (0) to (1);
\draw[edge] (2) to (3);
\path (1) to node {\dots } (2);
\tikzset{vertex/.append style = {bottom color =red,black}};

\node[vertex][star, below=2cm of 0] (a){\Large $o_1$};
\node[vertex][circle, right=2cm of a] (b){};
\node[vertex][circle, right=of b] (c){};
\node[vertex][rectangle, right=2cm of c] (d){\Large $x_1$};

\node[vertex][star,bottom color =green,black, below=2cm of d] (A){\Large $o_{x_1}$};

\node[vertex][star,bottom color =green,black, left=3cm of A] (B){};

\path (A) to node { \dots$P_{x_1}$ \dots} (B);

\draw[edge] (d) to (A);

\draw[edge] (a) to (b);
\draw[edge] (c) to (d);
\path (b) to node {\dots } (c);

\draw[edge] [color=green] (3) to (a);

\draw[edge][bend left = 40,color=blue] (0) to (3);
\end{tikzpicture}
}

\newcommand\caseexternaltwo{
\begin {tikzpicture}[-latex,node distance =2 cm,on grid,sibling distance=1cm,
vertex/.style ={ circle,top color = white, bottom color = processblue!20,draw,processblue, text=black, minimum size=1.3cm},
state/.style ={ circle,top color = white, bottom color = red!20,draw, red, text=black }, edge/.style = {color=black,-}]

\node[vertex][rectangle] (a){\Large $x_1$};
\node[vertex][circle, right=2cm of a] (b){};
\node[vertex][circle, right=of b] (c){};
\node[vertex][rectangle, right=2cm of c] (d){\Large $x_2$};

\node[vertex] [rectangle,bottom color =yellow,black, above=2cm of a] (P){\Large $o_{x_1}$} ;
\node[vertex][star,bottom color =yellow,black, left=3cm of P] (Q){};

\node[vertex][star,bottom color =green,black, below=2cm of d] (A){\Large $o_{x_2}$};


\path (P) to node {\dots$P_{x_1}$ \dots} (Q);

\draw[edge] (d) to (A);

\draw[edge] (a) to (b);
\draw[edge] (c) to (d);
\path (b) to node {\dots } (c);

\draw[edge] [color=red] (P) to (a);



\end{tikzpicture}

}

\newcommand\between{

\begin {tikzpicture}[-latex,node distance =2.4 cm,on grid,
vertex/.style ={ circle,top color = white, bottom color = processblue!20,draw,processblue, text=black, minimum size=0.8cm},
state/.style ={ circle,top color = white, bottom color = red!20,draw, red, text=black }, edge/.style = {color=black,-}]
{	

\node[vertex][star] (0){} ;

\node[vertex][circle, right=of 0] (1){\small $u$};

\node[vertex][rectangle, right=1.2cm of 1] (2){\small $x_2$};

\node[vertex][rectangle, right=of 2] (3){\small $x_1$};

\node[vertex][circle, right=1.2cm of 3] (4){\small $v_3$};
\node[vertex][star, right=of 4] (5){};

\draw[edge] (1) to (2);
\draw[edge] (3) to (4);
\draw[edge][bend left = 50] (1) to (4);
\path (0) to node {\dots $P_1$\dots} (1);
\path (2) to node {\dots $P_2$\dots} (3);

\path (4) to node {\dots $P_3$\dots} (5);
}
\end{tikzpicture}

}

\newcommand\samesides[4]{


\begin {tikzpicture}[-latex,node distance =2.4 cm,on grid,
vertex/.style ={ circle,top color = white, bottom color = processblue!20,draw,processblue, text=black, minimum size=0.8cm},
state/.style ={ circle,top color = white, bottom color = red!20,draw, red, text=black }, edge/.style = {color=black,-}]
{	
\node[vertex][star] (0){} ;
\node[vertex][rectangle, right=of 0] (1){\small #1};
\node[vertex][circle, right=1.2cm of 1] (2){\small #2};

\node[vertex][rectangle, right=of 2] (3){\small #3};
\node[vertex][circle, right=1.2cm of 3] (4){\small #4};
\node[vertex][star, right=of 4] (5){};

\draw[edge] (1) to (2);
\draw[edge] (3) to (4);
\draw[edge][bend left = 50] (2) to (4);
\path (0) to node {\dots $P_1$\dots} (1);
\path (2) to node {\dots $P_2$\dots} (3);

\path (4) to node {\dots $P_3$\dots} (5);
}
\end{tikzpicture}

}

\newcommand\difpath[4]{

\begin {tikzpicture}[-latex,scale=0.3,node distance =2.2 cm,on grid,
vertex/.style ={ circle,top color = white, bottom color = processblue!20,draw,processblue, text=black, minimum size=0.8cm},
state/.style ={ circle,top color = white, bottom color = red!20,draw, red, text=black }, edge/.style = {color=black,-}]

\node[vertex][star] (0){} ;
\node[vertex][rectangle, right=of 0] (1){\small #1};
\node[vertex][circle, right=1.2cm of 1] (2){\small #2};
\node[vertex][star, right=of 2] (3){};

\node[vertex][circle, below=1.2cm of 2] (b){\small #3};
\node[vertex] [star, left =of b](a){} ;

\node[vertex][rectangle, right=1.2cm of b] (c){\small #4};

\node[vertex][star, right=of c] (d){};

\draw[edge] (1) to (2);

\draw[edge] (2) to (b);

\draw[edge] (b) to (c);
\path (0) to node {\dots $P_1^{'}$\dots} (1);
\path (2) to node {\dots $P_1^{''}$\dots} (3);

\path (a) to node {\dots $P_2^{''}$\dots} (b);
\path (c) to node {\dots $P_2^{'}$\dots} (d);

\end{tikzpicture}
}

\newcommand\difshiled[4]{
\vspace*{5mm}	
\begin {tikzpicture}[-latex,scale=0.3,node distance =2.2 cm,on grid,
vertex/.style ={ circle,top color = white, bottom color = processblue!20,draw,processblue, text=black, minimum size=0.8cm},
state/.style ={ circle,top color = white, bottom color = red!20,draw, red, text=black }, edge/.style = {color=black,-}]

\tikzset{vertex/.append style = {star}};
\node[vertex](A){};

\tikzset{vertex/.append style = {circle}};
\node[vertex][ right=of A] (B){\small #3};
\node[vertex][ right=1.2cm of B] (C){\small #4};
\node[vertex][star, right=of C] (D){};

\draw[edge] (C) to (B);
\path (A) to node {\dots $P_2^{''}$\dots } (B);
\path (D) to node {\dots $P_1^{''}$\dots } (C);

\tikzset{vertex/.append style = {bottom color =green,black}}

\node[vertex][ star,below=1.2cm  of A] (0){};
\node[vertex][ rectangle,below=1.2cm of D] (2){\small #1};

\node[vertex][circle, left=1.2cm of 2] (1){};

\draw[edge] (1) to (2);
\path (0) to node {$- - - P_1^{'}- - -$  } (1);
\tikzset{vertex/.append style = {bottom color =red,black}};

\node[vertex][star, below=1.2cm of 0] (a){};

\node[vertex][circle, below=1.2cm  of 1] (b){};
\node[vertex][rectangle, right=1.2cm  of b] (c){\small #2};

\draw[edge] (b) to (c);
\path (a) to node { $- - - P_2^{'}- - -$ } (b);
\end{tikzpicture}
}

\title{On the path partition number of 6-regular graphs}

\begin{document}

\author{Uriel Feige \thanks{The Weizmann Institute of Science, Israel. Email: {\tt uriel.feige@weizmann.ac.il}.}\and Ella Fuchs \thanks{The Weizmann Institute of Science, Israel. Email: {\tt ella.fuchs@weizmann.ac.il}.}}
\maketitle
\thispagestyle{empty}

\begin{abstract}
	
	A \textit{path partition} (also referred to as a linear forest)
	of a graph $G$ is a set of vertex-disjoint paths which together contain all the vertices of $G$. An isolated vertex is considered to be a path in this case. The path partition conjecture states that every $n$-vertices $d$-regular graph has a path partition with at most $\frac{n}{d+1}$ paths. The conjecture has been proved for all $d<6$. We prove the conjecture for $d=6$.
	
\end{abstract}

\section{Introduction}

Let $G(V,E)$ be a simple undirected graph, where the cardinality of $V$ and $E$ are respectively denoted by $n$ and $m$. A {\em path} (also referred to as a simple path) in a graph $G$ is a sequence of $t\geq 1$ distinct vertices, $(v_1, v_2,\dots, v_t)$ such that for every $1\leq i\leq t-1$, $(v_i, v_{i+1})\in E$. These edges $(v_i, v_{i+1})$ are referred to as the path edges. The size of a path is the number of vertices in the sequence. A path of size one has no edges.
A {\em path cover} of $G$ is a set of paths such that every vertex in $V$ belongs to {\em at least} one of the paths, whereas a {\em path partition} of $G$ is a set of paths such that every vertex in $V$ belongs to {\em exactly} one path. (In some of the related literature the term {\em path cover} is used in the sense that we refer to here as {\em path partition}.)
Given a path partition, we refer to each path in the set as a {\em component}. The cardinality of a path partition refers to the number of components in it.
\pgfdeclarelayer{background}
\pgfsetlayers{background,main}

{\em The path partition number} of $G$, denoted by $\pi_{p}(G)$, is the minimum cardinality among all path partitions of $G$. The problem of finding a path partition of minimum cardinality is called {\em the path partition problem}. This problem is NP-hard \cite{Hartmanis82}, since it contains the Hamiltonian path problem as a special case. There are polynomial-time algorithms for the path partition problem for some families of graphs, including (among others)  
forests \cite{skupien1974path}, interval graphs \cite{HUNG2011648}, circular arc graphs, bipartite permutation graphs \cite{srikant1993optimal}, block graphs \cite{pak1999optimal}, and cographs \cite{chang19962}.

An interesting theoretical question is to provide upper bounds on $\pi_{p} (G)$ that apply to large families of graphs. Ore~\cite{ore1961arc} proved that given a graph $G$ with $n$ vertices, if $\pi_{p}(G)\geq 2$ then $\pi_{p}(G) \leq n- \sigma_2(G)$, where $\sigma_2(G)$ is the minimum sum of degrees of two non-adjacent vertices. This theorem implies that a graph has a Hamiltonian path if $\sigma_2(G)\geq n-1$. 
This theorem was generalized by Noorvash have expanded the  \cite{noorvash1975} who found a relation between the number of edges in a graph and $\pi_{p}(G)$. Another early contribution to bound the path partition number is the Gallai-Milgram Theorem  \cite{gallai1960verallgemeinerung}, which states that the independence number $\alpha(G)$ is an upper bound for this number. Namely, $\pi_{p} (G) \leq  \alpha(G)$. This holds not only for undirected graphs, but also for a corresponding notion of path partitions for directed graphs. The minimum degree in $G$, denoted by $\delta(G)$ can also provide lower bounds on the path partition number. By the classical theorem of Dirac \cite{dirac1952some}, if $\delta(G) \geq n/2$ then $G$ has an Hamiltonian cycle.  
It is worth mentioning that there are graphs where the path partition number can be very large. For example, the star graph on $n$ vertices, requires at least $n-2$ components in any path partition.

Our work relates to the following conjecture of Magnant and Martin \cite{MagnantM09}.

\begin{con}{Partition number conjecture:} \label{con1} For every $d$-regular graph $G$, \[ \pi_{p} (G)\leq \frac{n}{d+1}\]
\end{con}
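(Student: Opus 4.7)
The plan is to prove the conjecture by a minimum-counterexample argument combined with a discharging scheme that generalises the approach used for small $d$. Fix a $d$-regular graph $G$ on $n$ vertices and let $\mathcal{P}$ be a path partition of minimum cardinality $k = \pi_p(G)$. Suppose for contradiction that $k > n/(d+1)$, so the average number of vertices per path is strictly less than $d+1$; the aim is then to exhibit a local modification of $\mathcal{P}$ that reduces the number of paths, contradicting minimality.

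First, I would establish a collection of exchange lemmas that constrain the local structure around the $2k$ path endpoints. The prototype is: if $o$ is an endpoint of a path $P\in\mathcal{P}$ and every $G$-neighbour of $o$ lies inside $P$, then iterating P\'osa-style rotation on $P$ produces a large set of ``alternative endpoints'' for $P$, each of which must also have all its $d$ neighbours inside $P$. Combining such rotations with edges that cross between paths, I would derive a long list of forbidden configurations; the recurring theme is that any two short paths $P_1, P_2$ whose endpoints see each other across a short distance can be spliced into a single path. Quantitatively, the goal is to prove that for every path $P$ with $|V(P)| \leq d$, a prescribed number of endpoint-edges of $P$ must land on internal vertices of \emph{other} paths, and that those landing sites (``acceptors'') are themselves highly constrained.

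Second, I would set up a discharging argument. Assign to each path $P$ the initial charge $c(P) = |V(P)| - (d+1)$, so that $\sum_{P\in\mathcal{P}} c(P) = n - k(d+1) < 0$. Using the exchange lemmas, redistribute charge along the edges identified above: short paths (negative charge) send their deficit to acceptor internal vertices on longer paths (positive charge), and the forbidden configurations bound the total deficit any single acceptor may absorb. The aim is to show that after redistribution every path has nonnegative charge, which contradicts $\sum_P c(P) < 0$. For $d=6$ this paper chooses fractional weights such as $\pm\tfrac13,\pm\tfrac23$ tuned to the case analysis; I would design a family of weights parameterised by $d$, guided by the intuition that an acceptor can absorb roughly one unit of deficit for every $d-3$ of its internal incidences.

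The principal obstacle is that the number of qualitatively distinct local configurations around a path endpoint grows rapidly with $d$, so the exchange lemmas and the discharging rules must be phrased uniformly in $d$ rather than enumerated by hand. The hardest sub-problem is controlling ``concentrated'' configurations in which many short paths all export their deficit to a small common set of internal vertices: here P\'osa-style rotation only reaches a bounded portion of each path, and an additional global argument (perhaps a cycle-versus-path exchange in the spirit of Chv\'atal--Erd\H{o}s, or a careful analysis of dense induced subgraphs forced by $d$-regularity together with the endpoint edges) seems necessary to rule out such configurations. Since the conjecture is currently open for $d \geq 7$, I expect this final step to be where a genuinely new idea beyond the $d=6$ analysis carried out in this paper is required.
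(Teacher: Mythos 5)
There is a genuine gap: what you have written is a research programme, not a proof. The statement you were asked about is a conjecture that this paper does \emph{not} prove in general --- the paper (together with prior work of Magnant and Martin) establishes it only for $d\le 6$, with $d=6$ being the new contribution, and the case $d\ge 7$ remains open. Your outline correctly identifies the shape of the known arguments (a partition that is extremal with respect to some local-optimality criteria, a catalogue of exchange/splicing lemmas showing that bad local configurations contradict extremality, and a discharging scheme with fractional weights), but at every decisive point you only say what you \emph{would} establish: the exchange lemmas for general $d$ are not stated or proved, the family of weights ``parameterised by $d$'' is not constructed, and you yourself concede that the concentrated-deficit configurations defeat the method and require ``a genuinely new idea.'' A proof attempt that explicitly leaves its hardest step unresolved is not a proof of the conjecture.

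Two further points of comparison with the paper's actual $d=6$ argument are worth noting, because they indicate where your sketch would already run into trouble even for fixed small $d$. First, minimality of the number of paths is not a strong enough extremality condition: the paper works with a \emph{canonical} partition, which in addition maximizes the number of cycle components and has no isolated vertices, and several of its exchange arguments (e.g.\ turning a path into a cycle, or the derived-partition lemma) contradict the cycle-maximality property rather than minimality of $k$. Second, the paper does not use P\'osa-style rotation at all; its case analysis is driven by a fine classification of vertices ($V_1,\dots,V_5$, heavy, moderate, dangerous) and of blocks along each path, and the weights are tuned so that each component ends with at least $7$ points. Your charge $c(P)=|V(P)|-(d+1)$ is an equivalent bookkeeping, but the content of the proof lies entirely in the structural lemmas that justify the transfers, and those are exactly what is missing from your proposal.
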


Megnant and Martin proved their conjecture for all $d\leq 5$.
The upper bound in the conjecture is tight for a
graph containing  $\frac{n}{d+1}$ copies of cliques on $d+1$ vertices ($K_{d+1}$).
However, for connected regular graphs, some improvement is possible. For connected cubic graphs, Reed \cite{Reed96} provided a sharper bound $ \pi_{p} (G)\leq \ceil{\frac{n}{9}}$. In the same paper, Reed conjectured that every $2$-connected $3$-regular graph has a path partition with at most $\lceil \frac{n}{10}\rceil$ components. This has been recently confirmed by Yu  \cite{yu2018covering}. For every $d \geq 4$, there are connected graphs (even 2-connected) for which the path partition number of $G$ is at least $\frac{n(d-3)}{d^2+1}$. In particular, for $d\geq 13$, there are connected graphs that require $\frac{n}{d+4}$ components in any path partition.
See examples in \cite{yu2018covering,suil2010balloons,suil2011matching}.
It is worth mentioning that almost all $d$-regular
graphs are Hamiltonian for $d \geq 3$ \cite{robinson1994almost}. For a broader literary review and more information see \cite{manuel2018revisiting}.

\subsection{Related work}

The set of edges in a path partition is also referred to as a {\em linear forest}. Given a graph $G(V,E)$, its {\em linear arboricity}, denoted by $la(G)$, is the minimum number of disjoint linear forests in $G$ whose union is all $E$. This notion was introduced by Harary in \cite{harary1970covering}. The following conjecture, known as the linear arboricity conjecture, was raised in \cite{Akiyama1980}:

\begin{con}\label{con3}
	For every $\Delta$ and every graph $G$ of maximum degree $\Delta$, \[la(G)\leq\ceil*{\frac{\Delta+1}{2}}\]
\end{con}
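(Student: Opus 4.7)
The plan is to attack the conjecture by reducing to the regular case and then proceeding by the parity of $\Delta$. For the reduction, given a graph $G$ of maximum degree $\Delta$, I would embed $G$ as a spanning subgraph of some $\Delta$-regular graph $H$, for instance by taking two copies of $G$ and iteratively adding edges between vertices of deficient degree until regularity is attained. Any linear-forest decomposition of $H$ restricts to one of $G$ with at most as many forests, so it suffices to treat the regular case. Writing $\Delta \in \{2k, 2k+1\}$, the target bound equals $k+1$ in both cases, but the slack relative to the average degree is larger in the odd case, so the two parities deserve separate treatment.

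For $\Delta = 2k$ even, every vertex has even degree, so $G$ admits an Eulerian orientation in which every vertex has in-degree and out-degree exactly $k$. The idea is to color the arcs with $k+1$ colors so that each color class has in-degree and out-degree at most $1$ at every vertex and contains no directed cycle, yielding $k+1$ oriented linear forests whose underlying edges are the desired decomposition. I would assign colors at random and apply the Lov\'asz Local Lemma, in the spirit of Alon's approach to this conjecture, to find a coloring in which the ``defect'' structures (vertices overloaded in one color class, or short monochromatic directed cycles) are sparse, and then repair them by local arc swaps between color classes.

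For $\Delta = 2k+1$ odd, the natural strategy is to peel off a sparse subgraph and reduce to the even case. Under mild connectivity hypotheses one can find a perfect matching $M$ via Petersen-type theorems, so that $G \setminus M$ is $2k$-regular and, by the even case, decomposes into $k+1$ linear forests; the edges of $M$ must then be \emph{absorbed} into those forests rather than forming an extra one, which requires that for each $uv \in M$ at least one of $u,v$ be an endpoint (not an interior vertex) of some common forest. The main obstacle is precisely this absorption together with the local-repair step in the even case: the probabilistic decomposition does not by itself guarantee the flexibility required at the $M$-endpoints, and tightening the argument all the way down to the exact bound $\lceil(\Delta+1)/2\rceil$ is the difficulty that has kept the conjecture open in general for decades, with only the asymptotic bound $la(G) \leq \Delta/2 + O(\Delta^{2/3})$ currently known unconditionally. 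Any attempt along the lines sketched above should be expected to succeed cleanly only for small $\Delta$ or under additional structural assumptions on $G$.
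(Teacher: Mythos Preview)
The statement you are addressing is a \emph{conjecture}, not a theorem: the paper does not prove it, and indeed cites it as the well-known Linear Arboricity Conjecture, open in general since 1980. The paper only uses it hypothetically (Propositions~\ref{pro:odd} and~\ref{pro:even}) and records the known partial results. So there is no ``paper's own proof'' to compare against.

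Your proposal is not a proof either, and you say so yourself in the last paragraph. Concretely, both halves of the argument have genuine gaps. In the even case, the Eulerian-orientation-plus-LLL scheme is exactly Alon's method, and as you note it yields only $la(G)\le \Delta/2 + o(\Delta)$; the ``local arc swaps'' you allude to for repairing defects are precisely the step that nobody knows how to carry out down to the exact bound $k+1$. In the odd case, the reduction via a perfect matching is not available in general: a $(2k+1)$-regular graph need not have a perfect matching (Petersen's theorem requires bridgelessness, and the standard embedding of $G$ into a regular supergraph gives you no control over connectivity). Even granting a matching, the ``absorption'' of $M$ into the $k+1$ forests of $G\setminus M$ is not a technicality but the entire difficulty, since it asks that each matching edge see a shared path-endpoint in some forest, a global constraint your construction does not enforce.

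In short: what you have written is a reasonable survey of the standard attack lines on an open problem, together with an honest acknowledgement that they do not close. That is appropriate commentary, but it is not a proof proposal, and none was expected for this statement.
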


Every graph of maximum degree $\Delta$ can be embedded in some $d$ regular graph for $d=\Delta$ (this may require adding vertices). For $d$-regular graphs $ la(G)\geq \ceil*{\frac{d+1}{2}}$. This is because each linear forest $G$ can use only two of the edges for any vertex. Hence if $d$ is odd, then there are at least $\frac{d+1}{2} $ linear forests. For the case where $d$ is even, at least one of the vertices is an end-vertex of a path in a linear forest, giving us at least $\frac{d}{2} +1 $ linear forests. Therefore, the linear arboricity conjecture is equivalent to proving that $ la(G)=\ceil*{\frac{d+1}{2}}$ for every $d$-regular graph $G$.

As $G$ has $\frac{nd}{2}$ edges, and each one of them is in at least one linear forest, this implies that given $la(G)$  there is a linear forest with at least $\frac{\frac{nd}{2}}{la(G)}$ edges. Note that the cardinality of a path partition is exactly $n$ minus the number of edges in it. Hence
\begin{equation}\label{eq:linear}
\pi_p(G)\leq n- \frac{\frac{nd}{2}}{la(G)}
\end{equation}

\begin{pro}
	\label{pro:odd}
	For $d$ odd,  \hyperref[con3] {\text{Conjecture} \ref{con3}} implies \hyperref[con1] {\text{Conjecture} \ref{con1}}.
\end{pro}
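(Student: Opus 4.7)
The plan is to derive the bound directly from inequality (\ref{eq:linear}), which has already been established in the paper. The key observation is that when $d$ is odd, the ceiling in Conjecture \ref{con3} disappears: $\lceil (d+1)/2 \rceil = (d+1)/2$ is an integer. So, assuming Conjecture \ref{con3} holds for a $d$-regular graph $G$ with $d$ odd, I get the clean bound $la(G) \leq (d+1)/2$.

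Next, I would note that the right-hand side of (\ref{eq:linear}), viewed as a function of $la(G)$, is monotonically increasing in $la(G)$. Therefore an upper bound on $la(G)$ translates into an upper bound on $\pi_p(G)$. Substituting $la(G) \leq (d+1)/2$ into (\ref{eq:linear}) yields
\[
\pi_p(G) \;\leq\; n - \frac{nd/2}{(d+1)/2} \;=\; n - \frac{nd}{d+1} \;=\; \frac{n}{d+1},
\]
which is exactly the statement of Conjecture \ref{con1}.

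There is no real obstacle here; the proposition is essentially an arithmetic consequence of (\ref{eq:linear}) together with the parity hypothesis on $d$. The only subtlety worth flagging explicitly in the write-up is the matching lower bound $la(G) \geq \lceil (d+1)/2 \rceil$ mentioned just before the proposition, which shows that for odd $d$ the conjectured inequality is in fact an equality $la(G) = (d+1)/2$; this confirms that the substitution above cannot be improved and that the implication is tight. The corresponding claim for even $d$ would fail to give $n/(d+1)$ exactly because of the ceiling, which is precisely why the proposition is restricted to odd $d$.
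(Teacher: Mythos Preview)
Your proposal is correct and follows exactly the same approach as the paper's proof: assume Conjecture~\ref{con3}, use that for odd $d$ the ceiling gives $la(G)=(d+1)/2$, and substitute into inequality~(\ref{eq:linear}) to obtain $\pi_p(G)\le n/(d+1)$. Your additional remarks on monotonicity and on why the argument fails for even $d$ are sound but not needed for the proof itself.
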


\begin{proof}
	
	Assuming  \hyperref[con3] {\text{Conjecture} \ref{con3}} to be true, for odd $d$  we have that  $la(G)=\frac{d+1}{2}$. Plugging it to inequality (\ref{eq:linear}) implies $\pi_p(G)\leq n- \frac{\frac{nd}{2}}{\frac{d+1}{2}} =  \frac{n}{d+1}$.
	
\end{proof}

We state a relaxed version of \hyperref[con1] {\text{Conjecture} \ref{con1}}:
\begin{con}{The relaxed partition number conjecture:}
	\label{con2} For any $d$-regular graph $G$,
	\[ \pi_{p} (G)= O\left(\frac{n}{d+1}\right)\]
\end{con}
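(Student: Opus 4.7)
The plan is to combine the linear-arboricity route used in Proposition \ref{pro:odd} with a direct local-exchange argument on an optimal path partition. As a first step, I would plug the best known unconditional bound on linear arboricity into inequality (\ref{eq:linear}). By Alon's theorem, every $d$-regular graph $G$ satisfies $la(G) \leq d/2 + O(d^{2/3}(\log d)^{1/3})$; substituting into (\ref{eq:linear}) gives
\[
\pi_p(G) \;\leq\; n\cdot\frac{2\, la(G)-d}{2\, la(G)} \;=\; O\!\left(\frac{n(\log d)^{1/3}}{d^{1/3}}\right),
\]
which is $o(n)$ but falls short of the desired $O(n/(d+1))$ by roughly a factor of $d^{2/3}$. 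A conditional version of the relaxed conjecture for odd $d$ is already implied by Proposition \ref{pro:odd}; the real content would be to handle even $d$ and to remove the dependence on the open Conjecture \ref{con3}.

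To close the gap, I would argue directly about an optimal path partition. Take a minimum path partition $\mathcal{P}$ with $k$ components and consider its $2k$ endpoints (a singleton counted twice). Optimality rules out elementary swaps: no edge of $G$ can join endpoints of two different components, since otherwise merging the two paths strictly reduces $k$. Finer rotation/exchange moves (in the spirit of P\'osa's lemma, generalised from a single path to several) should further rule out ``augmenting configurations'' between endpoints and internal vertices of other paths. Each forbidden configuration is a local structural constraint on the neighbourhoods of the endpoints, and a double-counting argument over the $d$-regular neighbourhoods of the $2k$ endpoints is intended to yield $k = O(n/d)$.

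The main obstacle is that a naive local exchange does not always strictly decrease $k$: it may preserve $k$ while rearranging the partition, and some moves turn a path into a cycle rather than another path. To force termination I would order optimal partitions by a multi-level potential (first $k$, then, say, the number of singleton components, then a sum of endpoint quantities) and carry out the exchange arguments on a partition that minimises this potential. Pushing the resulting case analysis of residual forbidden configurations through uniformly in $d$ is where the real work lies; the case analysis is expected to be similar in flavour to, but significantly simpler than, the structural arguments that this paper develops for $d=6$, because here we only need an $O(1)$-factor bound rather than the exact coefficient $1/(d+1)$. The principal risk is that ``augmenting moves'' that are easy to describe for small, fixed $d$ become combinatorially unwieldy when $d$ grows, forcing one back onto the linear-arboricity route.
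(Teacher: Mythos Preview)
The statement you are trying to prove is a \emph{conjecture}, not a theorem: the paper states it as Conjecture~\ref{con2} and does not claim to prove it. What the paper does prove about it is only the conditional Proposition~\ref{pro:even} (for even $d$, Conjecture~\ref{con3} implies Conjecture~\ref{con2}), and it records that the best unconditional bounds currently known are $\pi_p(G)=O(n/d^{1/3+c})$ via the linear-arboricity results and $\pi_p(G)=O(n/\sqrt{d})$ via the fractional linear arboricity of Feige--Ravi--Singh. Your first paragraph correctly reproduces the $O(n/d^{1/3})$ bound and correctly notes that it falls short; but that is precisely the state of the art, not a deficiency of your write-up.

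Your second and third paragraphs are not a proof but a research plan, and you say so yourself (``pushing the resulting case analysis \ldots\ through uniformly in $d$ is where the real work lies''; ``the principal risk is \ldots''). The concrete gap is that the rotation/exchange argument you sketch has no mechanism that distinguishes the target bound $O(n/d)$ from the weaker bounds that are actually provable this way. Double-counting the $d$ neighbours of the $2k$ endpoints, together with the elementary ``no edge between endpoints of different paths'' observation, gives nothing better than $2kd$ edges landing among at most $n$ vertices, which is vacuous. To get $k=O(n/d)$ one would need to show that a constant fraction of those $2kd$ endpoint-incidences are absorbed by a set of size $O(k)$, and no P\'osa-type rotation argument is known to deliver that uniformly in $d$; indeed, if it were, Conjecture~\ref{con2} would not be open. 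The paper's own machinery for $d=6$ (the $V_1,\ldots,V_5$ classification and the transfer rules) is exactly an attempt to make such a counting work for one specific value of $d$, and the length of Sections~\ref{sec:key_lemmas}--\ref{sec:canonical} shows why ``similar in flavour but significantly simpler'' is optimistic even when only an $O(1)$ constant is sought.
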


\begin{pro}
	\label{pro:even}
	For $d$ even, \hyperref[con3] {\text{Conjecture} \ref{con3}} implies \hyperref[con2] {\text{Conjecture} \ref{con2}}.
\end{pro}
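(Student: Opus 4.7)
The plan is to adapt the argument used in the proof of Proposition~\ref{pro:odd} to the even case, paying attention to the fact that for even $d$ the linear arboricity conjecture yields a slightly weaker bound. Specifically, assuming Conjecture~\ref{con3}, for even $d$ we have $la(G) \leq \lceil (d+1)/2 \rceil = (d+2)/2$; combined with the lower bound $la(G) \geq d/2 + 1$ mentioned just before inequality~(\ref{eq:linear}), this forces $la(G) = (d+2)/2$.

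Plugging this value into inequality~(\ref{eq:linear}) gives
$$\pi_p(G) \;\leq\; n - \frac{nd/2}{(d+2)/2} \;=\; n - \frac{nd}{d+2} \;=\; \frac{2n}{d+2}.$$
To deduce Conjecture~\ref{con2}, I would then compare this bound to $n/(d+1)$: since $\tfrac{2(d+1)}{d+2} < 2$ for every $d \geq 0$, we obtain $\pi_p(G) < 2 \cdot \tfrac{n}{d+1}$, which is $O\!\left(n/(d+1)\right)$ as required.

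The main thing to notice is why this argument yields only the relaxed Conjecture~\ref{con2} rather than the sharp Conjecture~\ref{con1}: the unavoidable extra unit in $la(G) = (d+2)/2$ compared to $(d+1)/2$ costs a constant factor in the resulting path-partition bound, and no sharpening of inequality~(\ref{eq:linear}) is visible without additional ideas. In that sense there is no real obstacle in the proof itself; the computation is essentially forced, and the only genuinely hard part of the statement is accepting the constant-factor loss rather than removing it.
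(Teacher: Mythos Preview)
Your proof is correct and follows essentially the same route as the paper: assume Conjecture~\ref{con3}, use $la(G)=(d+2)/2$ for even $d$, plug into inequality~(\ref{eq:linear}) to obtain $\pi_p(G)\le 2n/(d+2)$, and observe this is $O(n/(d+1))$. The paper's version is terser (it stops at $2n/(d+2)$ without spelling out the comparison to $n/(d+1)$), but the argument is the same.
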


\begin{proof}
	Assuming \hyperref[con3] {\text{Conjecture} \ref{con3}} to be true, for even $d$ we have that $la(G)=\frac{d+2}{2}$. Plugging it to inequality (\ref{eq:linear}) implies $\pi_p(G)\leq n- \frac{\frac{nd}{2}}{\frac{d+2}{2}}  = \frac{2n}{d+2}$.
	
\end{proof}

The linear arboricity conjecture has been proved in the special cases of $d = 3,4,5,6,8$ and $10$ \cite{Akiyama1980,akiyama1981covering,enomoto1984linear,guldan1986linear}. (Remark: these results imply \hyperref[con1] {\text{Conjecture} \ref{con1}} for $d = 3,5$, by \hyperref[pro:odd] {\text{Proposition} \ref{pro:odd}}.)
The linear arboricity conjecture was shown to be asymptotically correct as $d \to \infty$. Alon \cite{alon1988linear} showed that for every $d$-regular graph $G$, $la(G) \leq  \frac{d}{2} + O\left(\frac{d\log \log d}{\log d}\right)$. This result was subsequently improved to  $la(G) \leq  \frac{d}{2} + O\left(d^{\frac{2}{3}} (\log d)^{\frac{1}{3}}\right)$ \cite{alon2004probabilistic}, and to $la(G)\leq  \frac{d}{2} + O\left(d ^{\frac{2}{3}-c}\right)$, for some constant $ c>0$ \cite{ferber2019towards}.

Plugging the best result on the linear arboricity conjecture to inequality (\ref{eq:linear}) gives the following asymptotic bounds on the path partition number of regular graphs.

$$\pi_p(G) \leq O\left( \frac{n}{ d^{\frac{1}{3}+c} }\right)  $$

The linear arboricity can be thought of as an integer programming problem, where for each linear forest $F_i$ of $G$ we chose a weight $\alpha_i\in \{0,1\}$, and the goal is to minimize  $\sum_{i} \alpha_i$ under the constraint that for every $e\in E$, $\sum_{i|e\in F_i} \alpha_i \geq 1$. The relaxed version of this problem, where we are allowed to pick a linear forest fractionally ($0\leq \alpha_i\leq 1$) , is called the {\em fractional linear arboricity} and is denoted by $fla(G)$.


Given $fla(G)$ for some $\alpha_i^*$ we get that:
$$|E(G)| \leq \sum_{e\in E} \sum_{i|e\in F_i}\alpha_i^*=\sum_{i}\sum_{e|e\in F_i} \alpha_i^*= \sum_{i} \alpha_i^* |F_i|\leq \max_{i}|F_i|\cdot fla(G)$$

Hence the number of edges in the linear forest of maximal size is at least $\frac{\frac{nd}{2}}{fla(G)}$. Therefore, the following inequality holds:


\begin{equation}\label{eq:fraclinear}
\pi_p(G)\leq n- \frac{\frac{nd}{2}}{fla(G)}
\end{equation}

Feige, Ravi and Singh~\cite{feige2014short} proved that $fla(G)= \frac{d}{2} + O(\sqrt{d})$, and deduced from inequality (\ref{eq:fraclinear}) that $$\pi_p(G) \leq O\left( \frac{n}{ \sqrt{d}}\right)$$
They also proved that if \hyperref[con2] {\text{Conjecture} \ref{con2}} holds, then every $n$-vertex $d$-regular graph has a tour of length $(1 + O(\frac{1}{d}))n$ visiting all its vertices.

\subsection{Our contribution }

The main purpose of the paper is proving \hyperref[con1] {\text{Conjecture} \ref{con1}} for $d=6$.

Our main result is the following theorem:

\begin{mdframed}[hidealllines=true,backgroundcolor=gray!25]
	\vspace{-5pt}
	
	\begin{theorem}	\label{thm:main}
		Every $6-$regular graph with $n$ vertices, has a path partition whose cardinality is at most $\frac{n}{7}$. Equivalently, the average size of a component in this path partition is at least $7$.
	\end{theorem}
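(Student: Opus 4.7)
The plan is to establish Theorem \ref{thm:main} by an exchange-plus-discharging argument applied to a well-chosen minimum path partition. Let $\mathcal{P}$ be a path partition of $G$ of minimum cardinality; to make structural deductions clean we break ties by optimizing a secondary potential (for instance, lexicographic on the sorted sequence of component sizes, or the number of paths of each short length). We refer to the two endpoints of each component as its \emph{outer} vertices (a singleton counting as two coinciding outers) and to the remaining vertices as \emph{inner}. Writing $|C|$ for the number of vertices of a component $C$, the inequality we want is equivalent to $\sum_{C\in\mathcal{P}} (|C|-7) \ge 0$, so components of size at least $7$ are in surplus while those of size at most $6$ are in deficit.

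The first phase of the proof is a catalogue of structural forbidden-configuration lemmas obtained by local rewiring. The prototype swap takes a small set of edges linking two or three components of $\mathcal{P}$ and replaces some path edges with non-path edges so as to merge those components into strictly fewer components (or, for tie-breaking purposes, into the same number of components with strictly better secondary potential). Any configuration enabling such a swap cannot occur in $\mathcal{P}$. Systematically exhausting these swaps rules out many ways in which outer vertices of short paths can attach to each other and constrains the patterns in which several short paths can share attachments to a longer path.

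The second phase, which is the main technical obstacle, is a discharging argument. Assign each component $C$ the initial charge $|C|-7$ and then push fractional charges across edges joining components so that every component ends with charge at least $0$. The denominator $3$ natural to the setting reflects the identity that a component of size $s$ has exactly $4s+2$ edges to other components, so that balancing deficits of size at most $6$ against surpluses produces charges of the form $\pm\tfrac{1}{3},\pm\tfrac{2}{3},\pm 1,\pm\tfrac{4}{3},\pm\tfrac{5}{3}$. We classify short components by \emph{block type} according to their size and to how their outer and inner vertices connect to the rest of the graph, and prescribe how much charge each block type must receive from each neighbouring long path.

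The hardest step will be verifying that every long path, after honouring the demands of all its short neighbours, retains non-negative charge. This amounts to bounding the total demand on any contiguous stretch of vertices of a long path $P$, and it is here that the forbidden-configuration lemmas from phase one must be leveraged to rule out local attack patterns that would otherwise overwhelm $P$. We expect a detailed case analysis by block type and by the size of the short component being subsidised, with the extremal case realised by the disjoint union of copies of $K_7$, where every charge equals exactly $0$ and the bound $n/7$ is attained.
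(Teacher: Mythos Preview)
Your high-level architecture---take an optimal path partition, derive forbidden configurations by local swaps, then discharge---matches the paper. But two of your concrete choices diverge from the paper's in ways that create real difficulties.

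First, the secondary optimization. The paper breaks ties among minimum partitions by \emph{maximizing the number of cycle components} (components whose vertices induce a Hamiltonian cycle), and this choice is used pervasively: a large fraction of the swap arguments do not reduce the number of components at all but instead trade a path configuration for one containing an extra cycle, and it is only cycle-maximality that rules the configuration out. Your proposed tie-breakers (lexicographic size sequence, or counts of short paths) do not detect such swaps, so most of the structural lemmas the paper relies on would not follow from your setup.

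Second, the discharging is organized differently, and your version misreads where the tension lies. The paper does not push charge from long components to short ones. It assigns one point per vertex and transfers points along individual edges according to a five-level vertex classification $V_1,\dots,V_5$ (end-vertices and cycle vertices; their free-edge neighbours; then three further layers along the path). Points flow from internal $V_2$ vertices to end-vertices of \emph{any} path, so even a path of size $3$ already ends with at least $7$ points without borrowing from a long neighbour; the delicate part is showing that the \emph{internal} vertices of each path sum to at least $-\tfrac{5}{3}$. That is done by cutting every path into ``blocks'' $B_i=X_iP_i$ (a maximal run of $V_2$ vertices followed by the adjacent $V_3/V_4$ vertices) and proving a one-block and a two-block lemma, which in turn rest on a careful analysis of ``heavy'', ``moderate'' and ``dangerous'' vertices. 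Your framing of short components demanding charge from long neighbours is not how the accounting actually balances, and the claim that a component of size $s$ has exactly $4s+2$ edges to other components is already false, since free edges may stay inside the same path.

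So the plan is in the right family, but the specific scaffolding---the cycle-maximizing tie-break and the vertex-level $V_1$--$V_5$ transfer rules together with the block decomposition---are the ideas that make the argument close, and your proposal has not yet found substitutes for them.
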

\end{mdframed}

In passing, we also show the following theorem for $d=5$.

\begin{mdframed}[hidealllines=true,backgroundcolor=gray!25]
	\vspace{-5pt}
	
	\begin{theorem}	\label{thm:main5}
		Every $5$-regular graph with $n$ vertices and no $K_6$ (clique of size $6$), has a path partition whose cardinality is at most $\frac{n}{6+\frac{1}{3}}$. Equivalently, the average size of a component in this path partition is at least $6+\frac{1}{3}$.
	\end{theorem}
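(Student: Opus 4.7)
The plan is to adapt the discharging framework developed for Theorem~\ref{thm:main} to the $5$-regular setting, letting the $K_6$-exclusion hypothesis play the role that a $K_7$-exclusion would play at $d=6$. Fix a minimum path partition $\mathcal{P}$ of $G$ and write $c_s$ for the number of its components of size $s$. The claim $|\mathcal{P}| \le 3n/19$ is equivalent to $\sum_s (3s-19) c_s \ge 0$, so I would assign each vertex an initial charge of $+3$ and each component an initial charge of $-19$, and redistribute so that every component ends with non-negative charge. Under this scheme a size-$s$ component starts with charge $3s-19$, giving surplus $+2$ to size-$7$ components, deficit $-1$ to size-$6$ components, and progressively larger deficits to smaller components.

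The first step is to import the minimality obstructions already needed for the proof of Theorem~\ref{thm:main}: endpoints of distinct components are non-adjacent, and the ``three paths to two'' and ``four paths to three'' exchanges (illustrated in Figures~\ref{example:1} and~\ref{example:2}) and their refinements cannot occur. These depend only on minimality of a path partition in a simple graph, so they apply verbatim to $5$-regular $G$. The second step is to dispose of components of size at most $5$. A size-$s$ path component has $3s+2-2c$ external edges, where $c$ is the number of chords, and for small $s$ this is a generous count; the forbidden configurations force these edges to attach to internal vertices of other components at restricted positions, which permits each small-component deficit to be absorbed by a bounded fractional charge from each of several large external neighbors.

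The third step is the heart of the proof and is where the $K_6$-exclusion enters. If a size-$6$ component $C$ had all $\binom{6}{2}=15$ of its induced edges present in $G$, then the six vertices of $C$ would form a $K_6$, contradicting the hypothesis. Hence $C$ spans at most $14$ edges in $G$, which forces at least $30-28=2$ edges out of $C$. Using these external edges together with the forbidden configurations from Step~1, I would argue that each size-$6$ component can be matched, possibly fractionally, with enough size-$\ge 7$ components to absorb its deficit of $-1$; since each size-$7$ component carries surplus $+2$, two size-$6$ deficits can be amortized against one size-$7$ surplus. Summed over the partition this yields $3n - 19|\mathcal{P}| \ge 0$, i.e., $|\mathcal{P}| \le 3n/19$, as required.

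The main obstacle is Step~3. The $K_6$-exclusion only supplies the bare minimum of two external edges at a size-$6$ component, so some care is required to ensure that these two edges reach size-$\ge 7$ components in sufficiently diverse and structured locations for the global redistribution to be consistent---in particular, to preclude situations in which many size-$6$ components all draw from the same large neighbor, or in which a size-$6$ component's external edges land only in smaller neighbors. Establishing the correct structural lemma here is the delicate technical work of the proof, but the forbidden-configuration framework built for Theorem~\ref{thm:main} should supply the necessary tools, since the key arithmetic (deficit $-1$ at a near-extremal component of size one less than the clique bound, to be covered by surplus $+2$ at components of size one more) is parallel in the two settings.
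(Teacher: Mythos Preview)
Your proposal diverges substantially from the paper's argument, and the gap you flag in Step~3 is not ``delicate technical work'' but the whole proof---and the framework you set up does not supply the tools to close it. The paper's proof is a \emph{local} discharging on vertices, not a size-based matching between components. It works with a canonical path partition (minimum number of components; among those, maximum number of cycle components; no isolated vertices), classifies vertices into $V_1,\dots,V_5$ by their position relative to endpoints and balanced edges, and transfers points along individual edges by fixed rules that are a strict simplification of the $d=6$ rules (Rules~3 and~4 are dropped, Rule~1 is scaled by $\tfrac{4}{3}$). The two endpoints of any path collect $2+8\cdot\tfrac{2}{3}=6+\tfrac{4}{3}$ points from their balanced edges, and a short block analysis (Proposition~\ref{pro:lemmafour} reusing Lemma~\ref{lem:kadjacent}) shows the internal vertices lose at most $1$ point in total, so every path has at least $6+\tfrac{1}{3}$. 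The $K_6$-exclusion is invoked exactly once, and only for \emph{cycle} components of size~$6$: not being a clique, such a cycle has at least two outgoing balanced edges, each bringing in $\tfrac{1}{6}\cdot\tfrac{4}{3}=\tfrac{2}{9}$, pushing the cycle to $6+\tfrac{4}{9}$. No appeal to size-$\ge 7$ neighbors is ever made, and---contrary to your framing---there is no $K_7$-exclusion in the $d=6$ theorem, so the analogy you build your plan around does not exist.

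Your scheme, by contrast, needs each size-$6$ component to locate surplus in an adjacent size-$\ge 7$ component, but nothing you import forces the two external edges of a size-$6$ component to land in large components; they may both go to components of size $\le 6$. The forbidden configurations from Figures~\ref{example:1} and~\ref{example:2} constrain how endpoint edges and $V_3$--$V_3$ edges behave, not where arbitrary edges out of a mid-sized component go. You also forfeit structural leverage by fixing only a minimum partition rather than a canonical one: several of the paper's observations (e.g.\ Observations~\ref{inacceptor2}, \ref{pathout}) rule out rearrangements because they would increase the cycle count, a tie-breaker your setup lacks. In short, your Step~3 is not a refinement away from completion; the paper closes the argument by a different and much more local route in which component size never appears as the organizing variable.
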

\end{mdframed}

\section{Proof overview}\label{sec:overview}

Given a path partition for a graph $G(V,E)$ we distinguish between three types of components.
\nomenclature[a]{\em Path partition}{A path partition (also referred to as a linear forest)
	of a graph $G$ is a set of paths such that every vertex in $V$ belongs to exactly one path.
}%

\begin{enumerate}\label{components}	
	\item Cycles. A component of size $t \ge 3$ is referred to as a {\em cycle component} if the induced graph on the vertices of the component contains a spanning cycle.
	
	\item Isolated vertices. A component of size $1$.
	
	\item Paths. All the rest of the components are referred to as {\em path components}.


\end{enumerate}\label{can}

Given a 6-regular graph $G(V,E)$, a path partition is said to be {\em canonical} if it satisfies the following properties:

\begin{enumerate}
	
	\item It has the smallest number of components.
	
	\item Conditioned on the first property, it has the largest number of cycles.
	
	\item It has no isolated vertices.
	
\end{enumerate}

\hyperref[lem:canonical]{\text{Lemma} \ref{lem:canonical}} (extension of the work of~\cite{MagnantM09}, see Section~\ref{sec:canonical}) shows that given a partition satisfying the first two properties, one can extend it to satisfy the third property as well. Thus, every $6$-regular graph has a canonical path partition as defined above. We wish to show that the average size of a component in a canonical path partition is at least~7. We first introduce some notation that will assist in explaining the main ideas in our proof.

Given a canonical path partition, we partition the set of edges $E$ into three sets:

\begin{enumerate}
	
	\item Path edges $E_P$: this set includes all edges of the path partition that belong to components that are paths.
	
	\item Cycle edges $E_C$: this set includes all edges (whether part of the path partition or not) both whose endpoints are in a cycle component. Observe that in a canonical path partition there is no edge that has its endpoints in different cycles, as then the two cycles can be replaced by one component.
	
	\item Free edges $E_F$. These are all the remaining edges. None of them is part of the path partition, and the endpoints of a free edge either lie in two different components (provided that not both of them are cycles), or within the same path component.
	
\end{enumerate}

We partition the set of vertices $V$ into five disjoint classes. A vertex belongs to the first applicable class:

\nomenclature[e4]{{\em balanced edges}}{A free edge with one vertices in $V_1$ and the other is in $V_2$ (see definition bellow).}%
\nomenclature[v0]{$V_1$}{end-vertices of paths, and vertices of cycles.}%
\nomenclature[v0]{$V_2$}{path vertices that are connected by a free edge to a vertex in $V_1$.}%

\nomenclature[v3]{$V_3$}{path vertices that are connected by two path edges to vertices in $V_2$.}%

\nomenclature[v5]{$V_4$}{path vertices that are connected by exactly one path edge to a vertex in $V_2$.}%
\nomenclature[v6]{$V_5$}{the remaining vertices.}%

\begin{enumerate}

	\item $V_1$: end-vertices of paths, and vertices of cycles.
	Observe that in a canonical path partition, there is no free edge that has its end-vertices both in $V_1$, as then either the two components can be replaced by one component, or a path components can be made into a cycle. Hence the graph $G(V_1,E_F)$ (containing only vertices from $V_1$ and their induced free edges) is an independent set.
	
	
	\item $V_2$: path vertices that are connected by a free edge to a vertex in $V_1$.
	
	\item $V_3$: path vertices that are connected by two path edges to vertices in $V_2$.

	\item $V_4$: path vertices that are connected by exactly one path edge to a vertex in $V_2$.
	
	\item $V_5$: the remaining vertices.
	
\end{enumerate}
%

%

%
%

A free edge is {\em balanced} 
if one of its end-vertices is in $V_1$ and the other is in $V_2$. Observe that all the free edges incident to $V_1$ are balanced.
Balanced edges play a key role in our analysis. We now provide some intuition of how they may be used, and what are the issues that need to be handled. Suppose, for simplicity, that the canonical path partition contains $p$ paths and no cycles. This implies that $|V_1| = 2p$. As every vertex in $V_1$ is incident with exactly~5 balanced edges, we have that the number of balanced edges is $10p$. As every vertex in $V_2$ is incident with at most~4 balanced edges, we have that $|V_2| \ge \frac{5}{2}p$. It follows that $n=|V| \ge |V_1| + |V_2| \ge \frac{9p}{2}$. This falls short of the bound of $n \ge 7p$ that we would like to prove. Here are two alternatives, each of which by itself suffices in order to fill in the missing gap.

\begin{enumerate}
	
	\item Show that on average, every vertex in $V_2$ is incident with only two balanced edges, rather than four.
	
	\item Show that $|V_3 \cup V_4 \cup V_5| \ge \frac{5}{2}p$.
	
\end{enumerate}

Our proof will not show that either of the alternatives holds, but rather will show that a ``convex combination" of these alternatives hold. That is, in every canonical path partition, some of the $V_2$ vertices have fewer than four balanced edges. In addition there are vertices of types $V_3 \cup V_4 \cup V_5$, and the combination of these two aspects has a quantitative effect that is sufficiently large so as to conclude that $n \ge 7p$.

\addfigure A key observation in this proof plan (made also in~\cite{MagnantM09}) is that if two vertices $u,v \in V_2$ are neighbors in a path $P$, then this severely limits the number of balanced edges that they can consume.
For example, it cannot be that $u$ has a balanced edge to an end-vertex of path $P_u$ and $v$ has a balanced edge to an end-vertex of path $P_v\neq P_u$, because then the path partition is not canonical: path $P$ can be eliminated by appending one part of it to $P_u$ and the other part to $P_v$. (See Figure \ref{example:1}.)

Based on the above observation, we may infer (in some approximate sense that is made rigorous in our proof) that in a canonical path partition, a vertex from $V_2$ is either incident with only two balanced edges, or is not a path-neighbor of any other vertex from $V_2$. The former case corresponds to alternative~1, whereas the latter case is related to alternative~2 in the following sense: if no two vertices in $V_2$ are path-neighbors, {\em and in addition all paths have even size}, 
then necessarily $|V_3 \cup V_4 \cup V_5| \ge |V_2| \ge \frac{5}{2}p$, proving alternative~2.

However, the canonical path partition might contain paths of odd size, and then the argument above does not suffice. For example, a path of size~3 may have its middle vertex $v$ belong to $V_2$, the vertex $v$ might be incident with four balanced edges, and yet the path contributes no vertex to $V_3 \cup V_4 \cup V_5$. To compensate for this, we must have other paths in which there are strictly more vertices in $V_3 \cup V_4 \cup V_5$ than in $V_2$. Observe that indeed there must be paths of size greater than~3, because if all paths are of size~3 we have that $|V_2| \le \frac{1}{2}|V_1|$, contradicting the fact that $|V_2| \ge \frac{5}{4}|V_1|$. Observe also that in every path $P$ we have that $|V_3 \cap P| \le |V_2 \cap P|$, and hence for path $P$ to compensate for a path of size~3, the set $(V_4 \cup V_5) \cap P$ must be nonempty.

How can we infer that $V_4 \cup V_5$ has substantial size (where the interpretation of substantial depends on the extent to which alternative~1 fails to hold)? The key is to consider the set $E_F(V_3)$ of free edges incident with vertices from $V_3$. These edges cannot be incident with vertices from $V_1$ (as their free edges are incident with vertices from $V_2$). If edges from $E_F(V_3)$ are incident with vertices from $V_2$, this brings us closer to alternative~1 (as this reduces for a vertex in $V_2$ the number of balanced free edges that are incident with it), hence assume for simplicity that this does not happen either. We shall show that there cannot be many free edges joining two vertices from $V_3$, as otherwise the path partition is not canonical. There are several cases to analyze here.

\vspace{5 mm}
\addfiguretwo One such case replaces four paths $P_1, P_2, P_3, P_4$ in the path partition by only three paths, by making use of a balanced edge between $V_1 \cap P_1$ and $V_2 \cap P_2$, a free edge between $V_3 \cap P_2$ and $V_3 \cap P_3$, and a balanced edge between $V_2 \cap P_3$ and $V_1 \cap P_4$. This process removes one path edge in $P_2$ and one path edge in $P_3$. (See Figure \ref{example:2}).

Given that free edges incident with $V_3$ only rarely have their other endpoint in $V_1 \cup V_2 \cup V_3$, we can infer that $V_4 \cup V_5$ has substantial size.

\vspace{10pt}

The overview provides intuition of why one may hope that the theorem is true. The following section explains how to turn this intuition into a rigorous proof, and moreover, keep the complexity of the proof manageable.

\subsection{Proof of \hyperref[thm:main]{\text{Theorem} \ref{thm:main}}  }

In order to make the proof rigorous, we shall use an accounting method, in which each vertex counts as $1$ point, and points (and also fractions of points) are transferable among vertices. We shall design a collection of point transfer rules, and will prove that using these rules every component ends up with at least~7 points. This implies that the average size of a component is at least~7.

For the purpose of defining the point transfer rules, we refine the classification of vertices introduced earlier.

\begin{itemize}
	
	\item $V_2$ is partitioned into two subclasses depending on whether it has path neighbors in $V_2$:
	
	\begin{itemize}
		\item $V_2^a$ (``a" for ``alone") for those vertices in $V_2$ that do not have a path neighbor in $V_2$.
		\item $V_2^b$ (``b" for ``both") for those vertices in $V_2$ that have a path neighbor in $V_2$.
	\end{itemize}
	
	\item We will identify certain subsets of $V_2$:
	\begin{itemize}
		
		\item A vertex $u\in V_2$ 
		is called {\em moderate} if it has at least two balanced edges, where at least one of them is to a path.
		
		\item A vertex $u\in V_2$ 
		is called {\em heavy} if it has at least three balanced edges to paths. Observe that every heavy vertex is also moderate.
		
	\end{itemize}
	\item A vertex $u \in V_3$ will be called {\em dangerous} 
	if one of its path neighbors is heavy and the other is moderate.
	
\end{itemize}

\nomenclature[v4]{{\em dangerous} $V_3$ }{vertices in $V_3$ that one of its path neighbors is a heavy and the other is moderate.}%
\nomenclature[v0]{$V_2^a$}{vertices in $V_2$ that do not have a path neighbor in $V_2$.}%
\nomenclature[v0]{$V_2^b$}{vertices in $V_2$ that have a path neighbor in $V_2$.}%
\nomenclature[v1]{{\em heavy} $V_2$ }{vertices in $V_2$ that have at least three balanced edges to paths.}%
\nomenclature[v2]{{\em modereate} $V_2$ }{vertices in $V_2$ that have at least two balanced edges where at least one of them is to a path.}%


Every vertex starts with $1$ point, and transfers points using the following point transfer rules. 

\begin{mdframed}[linecolor=black!40,
	outerlinewidth=1pt,
	roundcorner=.5em,
	innertopmargin=1.3ex,
	innerbottommargin=.5\baselineskip,
	innerrightmargin=1em,
	innerleftmargin=0.4em,
	backgroundcolor=blue!10,
	shadow=true,
	shadowsize=6,
	shadowcolor=black!20,
	frametitle={\Large Transfer rules:},
	frametitlebackgroundcolor=cyan!40,
	frametitlerulewidth=14pt
	]
	
	
	\begin{enumerate}[leftmargin=40pt,label=\textbf{Rule \arabic*}
		]
		
		\item From $V_2$ to $V_1\cap \mathcal{C}$. $v\in V_2$ transfers $\frac{1}{i}$ points to $u\in V_1\cap \mathcal{C}$ , if  $u$ is a vertex of a cycle of size $i \le 6$ and $(u,v)$ is a balanced edge. \label{rule:v2_cycle}
		
		\item From $V_2$ to $V_1\cap \mathcal{P}$. $v\in V_2$ transfers $\frac{2}{3}$ points to $u\in V_1$ if $u$ is an end-vertex of a path and $(u,v)$ is a balanced edge. \label{rule:v2_path}
		
		\item From $V_2^a$ to dangerous $V_3$. $v\in V_2^a $ transfers $\frac{1}{6}$ points to $u\in V_3$, if $u$ is dangerous and $(u,v)$ is a free edge.  \label{rule:v2a_dangerous}
		
		\item From $V_2^b\cup V_4$ to dangerous $V_3$. $v\in V_2^b  \cup V_4$ with exactly one path neighbor in $V_2$ transfers $\frac{1}{12}$ points to $u\in V_3$, if $u$ is dangerous  and $(u,v)$ is a free edge.  \label{rule:v4_dangerous}
		
		\item From $V_5$ to $V \setminus V_5$. $v\in V_5$ transfers $\frac{1}{4}$ points to $u\in (V \setminus V_5)$ if $(u,v)$ is a free edge. \label{rule:v5_rest}
		
	\end{enumerate}
\end{mdframed}

We note that due to the transfer rules, a vertex may end up with a negative number of points.

The number of points of a component in a path partition is the sum of points that its vertices have. The point transfer rules immediately imply the following propositions.

\begin{pro}
	\label{pro:easy7}
	Let $C$ be a cycle in a canonical path partition. Then after applying the transfer rules $C$ has at least~7 points.
\end{pro}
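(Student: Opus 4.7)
My plan is to observe that every vertex of a cycle component lies in $V_1$, and to check which of the five transfer rules can move points into or out of such a vertex. Rule 2 transfers only to path end-vertices, Rules 3 and 4 only to $V_3$, and none of the rules moves points \emph{out} of $V_1$. Rule 5 moves points out of $V_5$, so in principle it could add to a cycle vertex; but any free edge incident to a cycle vertex must land in $V_2$, since the excerpt observes that $G(V_1,E_F)$ is an independent set, and by definition every path vertex joined by a free edge to a $V_1$ vertex lies in $V_2$. Hence only Rule 1 can affect $C$, and it only adds points---so $C$ never loses points.

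For $|C|\ge 7$ the statement is immediate, since each vertex contributes its starting $1$ point. For $t := |C| \in \{3,4,5,6\}$, let $c$ denote the number of chords of $C$ (edges inside the component that are not on the spanning cycle). The degree sum of the vertices of $C$ equals $6t$; of this, $2t+2c$ is consumed by edges inside $C$, leaving $4t-2c$ edges from $C$ to outside components. By the previous paragraph each such edge is balanced, so Rule 1 injects $(4t-2c)/t$ points into $C$, and $C$ ends up with at least $t + 4 - 2c/t$ points. Using the trivial upper bound $c \le \binom{t}{2}-t = \tfrac{t(t-3)}{2}$, this is at least $t + 4 - (t-3) = 7$.

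I do not anticipate a substantive obstacle here: the proof is essentially a degree-counting check, and the only non-routine ingredient is the structural fact (already stated in the excerpt) that no free edge lies entirely within $V_1$, which forces every free edge leaving $C$ to be balanced. The bound is tight precisely when $C$ induces a clique on $t$ vertices---$K_3$, $K_4$, $K_5$, or $K_6$---all of which are consistent with the $6$-regularity of $G$.
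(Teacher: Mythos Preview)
Your proof is correct and follows essentially the same approach as the paper: observe that cycle vertices lie in $V_1$ and hence never give away points, then for small cycles count the balanced edges leaving $C$ and apply Rule~1. The only cosmetic difference is that the paper argues per vertex (each cycle vertex has at most $i-1$ neighbors inside $C$, hence at least $7-i$ balanced edges outward, giving the vertex $1+(7-i)/i$ points and the cycle $i\cdot\bigl(1+(7-i)/i\bigr)=7$), whereas you do a global degree-sum with an explicit chord count; the two computations are equivalent.
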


\begin{proofing}{Proposition \ref{pro:easy7}}
	{According to the point transfer rules, a vertex of a cycle cannot transfer points to other vertices (because all its vertices are in $V_1$). Let $i$ denote the size of the cycle. If $i \ge 7$ then we are done. If $i < 7$ (and necessarily $i \ge 3$), then each vertex has at least $7-i$ balanced edges connecting it to vertices in $V_2$, because the degree of each vertex is~6. Each balanced edge contributes to the vertex $\frac{1}{i}$ points (by \ref{rule:v2_cycle}),hence, each vertex on the cycle has at least $1 + (7-i)\frac{1}{i}$ points. Hence the cycle has at least $i\cdot(1 + (7-i)\frac{1}{i}) = 7$ points, as desired.}
\end{proofing}

\begin{pro}
	\label{pro:easyouter}
	Let $P$ be a path in a canonical path partition. Then after applying the transfer rules the end-vertices of $P$ have $7+ \frac{5}{3}$ points.
\end{pro}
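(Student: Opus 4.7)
The plan is to count the points at the two end-vertices of $P$ directly from the transfer rules. Both end-vertices of $P$ lie in $V_1$, and inspection of the five transfer rules shows that $V_1$ never appears on the sending side of any rule. So the end-vertices only accumulate points; they never give any away. It therefore suffices to determine how many points each end-vertex receives.

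Next I would enumerate the edges incident to an end-vertex $u$ of $P$. Since the graph is $6$-regular and $u$ is an end-vertex of a path component, exactly one of its six incident edges is a path edge (to its unique path-neighbor in $P$). The remaining five edges are not path edges (else $u$ would not be an end-vertex), and are not cycle edges (since $u$ is not on a cycle), hence they all lie in $E_F$. Moreover, as noted in the definition of $V_1$, the canonical property forces $(V_1, E_F)$ to be an independent set: any free edge with both endpoints in $V_1$ would allow two path/cycle components to be merged, or a path to be closed into a cycle, contradicting canonicity. Consequently, the other endpoint of each of these five free edges is a path vertex connected by a free edge to $u \in V_1$, which by definition places it in $V_2$. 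Thus all five free edges incident to $u$ are balanced edges.

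Applying Rule~\ref{rule:v2_path}, each of these five balanced edges contributes $\frac{2}{3}$ points from the $V_2$ endpoint to $u$. Combined with the $1$ point that $u$ starts with and the fact that $u$ transfers nothing away, $u$ ends with $1 + 5 \cdot \frac{2}{3} = \frac{13}{3}$ points. Summing over the two end-vertices of $P$ gives $\frac{26}{3} = 7 + \frac{5}{3}$, which is exactly the claimed bound.

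There is no real obstacle in this argument: it is a direct bookkeeping consequence of the classification of $V_1$ and $V_2$, the independence of $V_1$ in $(V_1, E_F)$ guaranteed by canonicity, and Rule~\ref{rule:v2_path}. The main reason to isolate this proposition is that these $7 + \frac{5}{3}$ points at the two end-vertices will serve as the ``budget'' from which the interior vertices of $P$ may be allowed to lose points (via transfers) in the subsequent analysis that establishes the average of $7$ points per component.
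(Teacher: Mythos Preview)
Your argument is correct and follows the same route as the paper's proof: identify the ten balanced edges at the two end-vertices, apply Rule~\ref{rule:v2_path}, and add the two starting points to obtain $2 + 10\cdot\tfrac{2}{3} = 7 + \tfrac{5}{3}$. Your version is simply more explicit about why each end-vertex has exactly five balanced free edges (using $6$-regularity, the single path edge at an end-vertex, and the independence of $V_1$ in the free-edge graph), whereas the paper states this in one line after noting that a canonical partition has no isolated vertices.
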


\begin{proofing}{Proposition \ref{pro:easyouter}}
	{In a canonical path partition, there are no isolated vertices. Hence $P$ has~10 balanced edges incident with its two end-vertices. Applying \ref{rule:v2_path} together with the two starting points, sums up to $2 + 10\cdot \frac{2}{3}=7+ \frac{5}{3}$ points. }
\end{proofing}

\begin{pro}\label{pro:easypoints}
	The only vertices that can end up with negative points are those from $V_2$. In particular:
	\begin{enumerate}
		\item For vertices in $V_2$:
		\begin{enumerate}
			\item A vertex in $V_2$ has at least $-\frac{5}{3}$ point.\label{item:v2} 	
			\item  A vertex in $V_2$ that among its balanced edges, has no edge to a path, ends up with at least $-\frac{1}{3}$ points. \label{item:v2_4cycles}


			\item A vertex in $V_2$ that has among its balanced edges, one edge to a path, ends up with at least $-\frac{2}{3}$ points. \label{item:v2_1path3cycles} 
			\item  A vertex in $V_2$ that has among its balanced edges, two balanced edges to paths, ends up with at least $-1$ points.\label{item:v2_2path2cycles} 
			\item A vertex in $V_2^a$ that has only one balanced edge, ends up with at least $-\frac{1}{6}$ points.\label{item:v2_1path0cycles}

		\end{enumerate}
		
		\item For vertices in $V_3$.
		\begin{enumerate}
			\item A vertex in $V_3$ has at least $1$ point.\label{item:v3}
			\item A vertex in $V_3$ that has all its frees to $V_5$ or $V_2^a$ ends up with at least $\frac{5}{3}$ points.\label{item:v3_good}
		\end{enumerate}
		
		\item Every vertex in $V_4$ has at least $\frac{2}{3}$ points.\label{item:v4}
		\item Every vertex in $V_5$ has a non-negative number of points.\label{item:v5}
	\end{enumerate}
\end{pro}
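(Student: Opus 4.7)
The plan is a single case analysis driven by two observations. First, inspection of the five transfer rules shows that only vertices of $V_2$ and $V_5$ ever transfer points out; every other class is a pure recipient. This already yields the opening sentence (once item~5 below is checked) and the trivial parts of items~3(a), 4 and~5. Second, every vertex outside $V_1$ is an interior path vertex with exactly two path edges and four free edges, so the analysis of a single vertex reduces to optimising over how those four free edges are distributed among the applicable rules.

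I would dispatch the cheap classes first. A $V_5$-vertex loses $\tfrac14$ per free edge under Rule~5 and hence at most $1$ in total, ending at $\ge 0$; this proves item~5 and confirms that only $V_2$ can end negative. A $V_4$-vertex loses $\tfrac{1}{12}$ per free edge under Rule~4, so at most $\tfrac13$, ending at $\ge \tfrac23$ (item~4). A $V_3$-vertex never transfers out, so it retains its initial point (item~3(a)). For item~3(b), taking a dangerous $V_3$-vertex each of whose four free edges lands in $V_5\cup V_2^a$, Rules~5 and~3 contribute respectively $\tfrac14$ or $\tfrac16$ per edge, so the intake is at least $4\cdot\tfrac16=\tfrac23$, for a total of at least $\tfrac53$.

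The substantive case is $V_2$. Its four free edges may carry out $\tfrac23$ (Rule~2, balanced to a path end), $\tfrac13$ (Rule~1, balanced to a triangle), $\tfrac{1}{i}$ for $4\le i\le 6$ (Rule~1 on larger small cycles), $\tfrac16$ (Rule~3), or $\tfrac{1}{12}$ (Rule~4); any Rule~5 transfer into the vertex only helps. Since the outgoing rates are strictly ordered, the worst configuration for each sub-item is obtained greedily by filling the permitted free edges with the highest remaining rate. A one-line arithmetic check then gives: $(a)$ four path-edges lose $\tfrac83$, end at $-\tfrac53$; $(b)$ four triangle-edges lose $\tfrac43$, end at $-\tfrac13$; $(c)$ one path-edge plus three triangle-edges lose $\tfrac53$, end at $-\tfrac23$; $(d)$ two of each lose $2$, end at $-1$; and $(e)$, for a $V_2^a$-vertex with exactly one balanced edge, one path-edge together with three Rule~3 edges lose $\tfrac23+\tfrac12=\tfrac76$, ending at $-\tfrac16$.

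The only non-mechanical step is the greedy monotonicity argument underpinning the $V_2$ analysis: I need to verify that swapping a free edge from a lower-rate outgoing rule (or from an incoming Rule~5) to a higher-rate outgoing rule can only make the balance more negative, and that nothing in the constraints of (b)--(d) prevents saturating the remaining free edges at the triangle rate $\tfrac13$. That is the only subtlety I anticipate; once it is handled, each bound reduces to the arithmetic above.
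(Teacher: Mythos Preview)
Your approach is correct and essentially identical to the paper's: both proceed by a direct per-vertex case analysis, bounding the outflow over the four free edges and noting that the worst case for each $V_2$ sub-item is obtained by filling the remaining free edges at the highest admissible rate (Rule~2 at $\tfrac23$, then Rule~1 at the triangle rate $\tfrac13$, then Rule~3 at $\tfrac16$). One small slip to fix in your write-up: you assert that ``only vertices of $V_2$ and $V_5$ ever transfer points out; every other class is a pure recipient,'' but Rule~4 does let $V_4$ (and $V_2^b$) vertices send $\tfrac{1}{12}$ out---you correctly account for this two sentences later when you compute the $V_4$ bound, so the contradiction is only in the framing, not in the argument itself.
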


\begin{proofing}{Proposition \ref{pro:easypoints}}
	{	
		Each vertex starts with $1$ point.
		\begin{enumerate}
			
			\item 	\begin{enumerate}
				
				\item A vertex in $V_2$ has four free edges and may transfer at most $\frac{2}{3}$ points on each free edge, by \ref{rule:v2_path}. Hence it has at least $1 - 4\cdot \frac{2}{3} = -\frac{5}{3}$ points.

				\item  An edge to a cycle can transfer at most $\frac{1}{3}$ points by $\ref{rule:v2_cycle}$. Hence if $x$ has no balanced edges to paths it will end with at least $1 -4\cdot\frac{1}{3}\geq -\frac{1}{3}$ points.

				\item  If among its balanced edges  $x\in V_2$ has exactly one edge to a path, then on this edge, $x$ transfers $\frac{2}{3}$ points (by \ref{rule:v2_path}). The remaining three free edges might be balanced edges to cycles of size three, transferring  $3\cdot \frac{1}{3}$ points (by \ref{rule:v2_cycle}). Hence the number of points on $x$ is at least $1-\frac{2}{3}-3\cdot\frac{1}{3}=-\frac{2}{3}$.
				
				\item If among its balanced edges  $x\in V_2$ has two balanced edges to paths, then these edges can transfer at most $2 \cdot \frac{2}{3}$ points (by \ref{rule:v2_path}). The remaining free edges can be balanced edges to cycles, and may transfer at most $2 \cdot \frac{1}{3}$ points (by \ref{rule:v2_cycle}). Hence, on $x$ there are at least $1- 3\cdot \frac{2}{3}\geq -1$ points.
				
				\item  An edge that is not balanced can transfer at most $\frac{1}{6}$ points from a vertex in $V_2^a$ by $\ref{rule:v2a_dangerous}$. Hence if $x\in V_2^a$ and $x$ has only one balanced edge, it has at least $1-\frac{2}{3} -3\cdot\frac{1}{6}\geq -\frac{1}{6}$ points.

			\end{enumerate}
			\item \begin{enumerate}
				\item Vertices in $V_3$ can only receive points, hence they have at least $1$ point.
				
				\item If $x\in V_3$ has all its frees to $V_5$ or $V_2^a$ then there are two options to consider.
				
				\begin{itemize}
					\item If $x$ has an edge that is incident to $V_5$, then by \ref{rule:v5_rest} it receives $\frac{1}{4}$ points.
					\item If $x$ has an edge that is incident to $V_2^a$, then by \ref{rule:v2a_dangerous} it receives $\frac{1}{6}$ points.
				\end{itemize}
				
				Hence $x$ receives at least $\frac{1}{6}$ points on each free edge and hence ends up with at least $1+4\cdot\frac{1}{6}= \frac{5}{3}$ points.
				
			\end{enumerate}
			\item Vertices in $V_4$ have four free edges, and by \ref{rule:v4_dangerous} they  may transfer no more than $\frac{1}{12}$ points on each free edge. Hence vertices in $V_4$ have at least $\frac{2}{3}$ points.
			\item  Vertices in $V_5$ have four free edges, and may transfer at most $\frac{1}{4}$ points on each free edge (by \ref{rule:v5_rest}).
		\end{enumerate}
	}
\end{proofing}

We now consider the internal vertices of a path, and show that in total they have at least $-\frac{5}{3}$ points. This combined with \hyperref[pro:easyouter]{\textbf{Proposition }\ref{pro:easyouter}} will imply that the path has at least~7 points.

We divide each path $P$ in the canonical partition into disjoint blocks. The blocks need not include all vertices of $P$. We scan the vertices of the path $P$ from left to right. The first block starts at the first vertex from $V_2$ that we encounter (if there is no such vertex, then $P$ has no blocks). Thereafter, we create a sequence $B_1B_2.....B_m$ of blocks as follows. Each block $B_i$ is of the form $B_i=X_iP_i$. Here $X_i$ is a nonempty sequence of consecutive vertices from $V_2$ that is maximal (cannot be extended neither to the left nor to the right). $P_i$ contains those vertices that follow $X_i$, but {\em excluding the vertices that are in $V_5$}. Excluding these vertices is done so as to simplify the presentation. All lower bounds proved on the number of points of $P_i$ without vertices of $V_5$ also hold when vertices of $V_5$ are added back, due to item \ref{item:v5} of \hyperref[pro:easypoints]{\textbf{Proposition} \ref{pro:easypoints}}. $P_i$ ends when either a vertex from $V_2$ is reached (and then this vertex starts a new block), or when the path ends (and then the end vertex from $V_1$ is not included in $X_i$). Hence, in a block $B_i$ that is not last ($i\neq m$), $P_i$ is nonempty and can only be of one of the following two forms: either a single vertex from $V_3$, or a pair of vertices from $V_4$ (because if there were vertices from $V_5$, they were discarded). However, in the last block, $B_m$, it could be that $P_m$ is empty or a single vertex from $V_4$.

We distinguish between four kind of blocks:
\begin{enumerate}
	\item $B_i$ is of Kind $1$ if $|X_i|=1$ and $|P_i|=1$ and $P_i=u\in V_3$.
	\item $B_i$ is of Kind $2$ if $|X_i|=1$ and $|P_i|=2$ (hence vertices of $P_i$ are from $V_4$).
	\item $B_i$ is of Kind $3$ if $|X_i|>1$ and $|P_i|\neq 0$.
	\item $B_i$ is of Kind $4$ if it is the last block of the path. ($|P_i|=0$, or $|P_i|=1$ and $P_i=u\in V_4$.)
\end{enumerate}

Considering this definition of blocks, we state two key lemmas. The proofs of these lemmas show that if the premises of the lemma do not hold, then the path partition is not canonical. The arguments proving this involve a fairly complicated case analysis, and hence the proofs of the lemmas are deferred to after the proof of the main theorem.

\begin{restatable}{lemma}{lem:blockalone}
	\label{lem:blockalone}
	Let $\mathcal{S}$ be a canonical path partition. Let $P$ be a path in $\mathcal{S}$.
	Every block $B_i$ in $P$ has at least $-\frac{5}{3}$ points.
	
\end{restatable}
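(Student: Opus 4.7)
My plan is to prove the lemma by a case analysis on the four kinds of blocks, using Proposition \ref{pro:easypoints} as the baseline and invoking the canonicity of the partition to tighten the vertex bounds where the baseline is too weak. For Kinds 1 and 2, and the degenerate subcases of Kind 4 with $|X_i|=1$, the baseline bounds already suffice; the substantive work lies in Kind 3 and in Kind 4 with $|X_i|\geq 2$, where every vertex of $X_i$ lies in $V_2^b$.

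For Kind 1, the block is $X_i=\{x\}$ with $x\in V_2$ followed by $P_i=\{u\}$ with $u\in V_3$; items \ref{item:v2} and \ref{item:v3} of Proposition \ref{pro:easypoints} give at least $-\frac{5}{3}+1=-\frac{2}{3}$ points. For Kind 2, a single $V_2$ vertex ($\geq -\frac{5}{3}$) is followed by two $V_4$ vertices (each $\geq \frac{2}{3}$ by item \ref{item:v4}), summing to at least $-\frac{1}{3}$. For Kind 4 with $|X_i|=1$, the block is either a lone $V_2$ vertex hitting exactly $-\frac{5}{3}$, or a $V_2$ followed by a single $V_4$, giving at least $-1$. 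All of these bounds clear the $-\frac{5}{3}$ threshold without any additional ideas.

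The substantive case is $|X_i|\geq 2$, where each vertex of $X_i$ has a $V_2$ path neighbor inside $X_i$ and hence lies in $V_2^b$. Here the naive application of item \ref{item:v2} is too weak: two vertices at worst-case $-\frac{5}{3}$ already total $-\frac{10}{3}$. The key observation is that an interior vertex of $X_i$, having two $V_2$ path neighbors, cannot use Rule \ref{rule:v4_dangerous} at all (which requires exactly one $V_2$ path neighbor), so its only outflow is through Rules \ref{rule:v2_cycle} and \ref{rule:v2_path}. More importantly, canonicity sharply restricts balanced edges along $X_i$: by the Figure \ref{example:1} swap, if $x$ and $y$ are path-adjacent $V_2$ vertices and $x$ has a balanced edge to an endpoint of some path $P$ while $y$ has one to an endpoint of a distinct path $P'$, the partition is not canonical. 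I would use this to show that along $X_i$, any $V_2^b$ vertex with heavy outflow under Rule \ref{rule:v2_path} (many balanced edges to paths) forces its in-$X_i$ neighbors to have their balanced edges concentrated on cycles, each contributing at most $\frac{1}{3}$ via Rule \ref{rule:v2_cycle}. Summed over $X_i$ and combined with the contribution from $P_i$ (a $V_3$ vertex at $\geq 1$, or two $V_4$ vertices at $\geq \frac{4}{3}$), this delivers the $\geq -\frac{5}{3}$ bound.

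The main obstacle is the rigorous enumeration of the allowed balanced-edge configurations along longer runs $X_i$. For $|X_i|\geq 3$ the interior $V_2^b$ vertex is simultaneously constrained by canonicity with both of its path neighbors, and for each combination of balanced-edge counts across the run, one must either verify the bound directly or exhibit an explicit rerouting (a generalization of Figure \ref{example:1}, possibly using two balanced edges at once, or converting a path into a cycle) that would contradict canonicity. I expect the proof to proceed by iterating single-swap arguments along $X_i$ and pairing them with a careful case-by-case accounting that tracks, for each $V_2^b$ vertex, whether its balanced edges terminate at path endpoints or on cycle vertices, and whether its remaining free edges are non-balanced.
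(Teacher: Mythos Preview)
Your proposal is on the right track and follows the same overall approach as the paper, but the paper organizes it differently and your sketch of the hard case is imprecise in a way worth flagging.

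The paper's proof of Lemma~\ref{lem:blockalone} is two lines: $P_i$ is non-negative by Proposition~\ref{pro:easypoints}; if $|X_i|=1$ then $X_i\ge -\frac{5}{3}$ by item~\ref{item:v2}; if $|X_i|=k>1$ then $X_i\ge \frac{k}{3}-\frac{4}{3}\ge -\frac{2}{3}$ by Lemma~\ref{lem:kadjacent}. All the content you put under ``the substantive case $|X_i|\ge 2$'' is packaged into Lemma~\ref{lem:kadjacent}, proved separately via a chain of observations (Observations~\ref{crossinginacceptor}--\ref{samecyclek}) that classify exactly where a run of adjacent $V_2^b$ vertices can send balanced edges. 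Note in particular that the paper obtains the bound for $X_i$ \emph{alone}, without needing the $P_i$ contribution you invoke; and it does not split by block kind at all, only by $|X_i|$.

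Your informal description of the canonicity restriction is slightly off. You say a $V_2^b$ vertex with heavy Rule~\ref{rule:v2_path} outflow ``forces its in-$X_i$ neighbors to have their balanced edges concentrated on cycles.'' That is not what happens: if some $x_i$ goes to an endpoint of an external path, then \emph{every} vertex of $X$ goes to that same endpoint (Observation~\ref{pathoutk}), so the whole run transfers only $k\cdot\frac{2}{3}$. The cycle case is separate and subtler: adjacent vertices going to a cycle must go to the \emph{same} cycle and to non-adjacent cycle vertices (Observations~\ref{samecycle2}--\ref{samecyclek}), which caps the per-vertex transfer at $\frac{1}{2}$. The worst case is actually the purely internal one (all of $X$ going to $o_1$ or $o_2$), where at most one vertex can go to both, giving $(k+1)\cdot\frac{2}{3}$. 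You also need to account for Rule~\ref{rule:v4_dangerous} outflow from the two boundary vertices $x_1,x_k$ of $X_i$ (they have exactly one $V_2$ path neighbor), which the paper handles in Lemma~\ref{lem:has_to_dangerous}; your sketch only covers interior vertices. None of this breaks your plan, but the accounting is more structured than ``iterating single-swap arguments,'' and the sharper target $\frac{k}{3}-\frac{4}{3}$ is what makes the later two-block lemma go through.
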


\begin{restatable}{lemma}{lem:twoblocks}
	\label{lem:twoblocks}
	Let $\mathcal{S}$ be a canonical path partition.  Let $P$ be a path in $\mathcal{S}$. Any sequence of two blocks in $P$, $B_iB_{i+1}$ satisfies at least one of the following three options: \begin{enumerate}
		\item The number points on $B_i$ is non-negative. 
		\item The number points on $B_iB_{i+1}$ is non-negative. 
		\item The number points on $B_iB_{i+1}$ is at least $-1$, and $B_{i+1}$ is of Kind $4$. 
		
	\end{enumerate}

\end{restatable}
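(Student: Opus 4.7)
The approach is a case analysis on the Kind of $B_i$, using canonicality of $\mathcal{S}$ at each step to rule out the worst point configurations. The guiding principle is that if $B_i$ belongs to one of the ``mild'' kinds then $p(B_i)\ge 0$ and option~1 is immediate; otherwise the negativity of $B_i$ forces structural constraints on $B_{i+1}$ that make $p(B_iB_{i+1}) \ge 0$, unless $B_{i+1}$ is of Kind~4, in which case only the weaker bound of option~3 is available.

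First I would observe, using Proposition~\ref{pro:easypoints}, that the only vertices capable of contributing negative points to a block are those in $V_2$, so the sign of $p(B_i)$ is controlled by $X_i$. For a Kind~3 block ($|X_i|\ge 2$) every vertex of $X_i$ lies in $V_2^b$, and the Magnant--Martin re-routing of Figure~\ref{example:1} prevents two consecutive $V_2^b$-vertices from simultaneously carrying many path-balanced edges. Totalling the points on $X_i$ using items \ref{item:v2_4cycles}--\ref{item:v2_2path2cycles} then yields $p(B_i)\ge 0$. The same bookkeeping also disposes of Kind~2 blocks whose single $V_2$-vertex is not heavy, via items \ref{item:v2_1path3cycles} or \ref{item:v2_1path0cycles} combined with item~\ref{item:v4}.

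The substantive cases are Kind~1 blocks and Kind~2 blocks in which the $V_2$-vertex $x\in X_i$ is heavy. As established in Section~\ref{sec:overview}, such $x$ necessarily lies in $V_2^a$, and $p(x)$ can drop to $-\frac{5}{3}$; the single $V_3$-vertex (Kind~1) or the pair of $V_4$-vertices (heavy Kind~2) in $P_i$ offsets this only to $-\frac{2}{3}$ or $-\frac{1}{3}$ respectively. The crux is to show that, unless $B_{i+1}$ is of Kind~4, the $V_3$-vertex $u\in P_i$ must satisfy the hypothesis of item~\ref{item:v3_good} and receive at least $\frac{5}{3}$ points in total, pushing $p(B_i)$ non-negative. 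The key structural input is that whenever the leading vertex of $X_{i+1}$ is moderate, $u$ becomes \emph{dangerous}, and a re-routing argument in the spirit of Figure~\ref{example:2}---combining a balanced edge from $x$, a free edge incident to $u$, and a balanced edge from the leading vertex of $X_{i+1}$---shows that $u$'s free edges cannot all terminate in $V_1\cup V_2^b$ without contradicting canonicality. Hence most of $u$'s free edges must land in $V_5$ or in $V_2^a$-vertices, triggering Rules~3 and~5 in $u$'s favour, so that $p(B_iB_{i+1})\ge 0$.

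The main obstacle, and where the bulk of the formal proof will reside, is the exhaustive enumeration of re-routings needed to justify the structural claim above: every candidate ``bad'' configuration must admit a concrete rearrangement of three or four components into strictly fewer paths (or the same number of components with more cycles), contradicting canonicality. Finally, option~3 is the clean-up case when $B_{i+1}$ is of Kind~4: here $P_{i+1}$ may be empty or consist of a single $V_4$-vertex, so its positive contribution is at most $\frac{2}{3}$ rather than the $\frac{4}{3}$ or more otherwise available, and the combined deficit can reach $-1$ but no worse. The delicate case analysis supporting each of these steps is deferred to the formal proof that follows.
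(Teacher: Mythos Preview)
Your high-level plan (case split on the Kind of $B_i$, dispose of Kinds~2 and~3 via option~1, then focus on Kind~1) matches the paper, but two of the key steps are misstated in ways that would not go through.

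First, for heavy Kind~2 blocks you have no argument. You correctly compute $p(B_i)\ge -\tfrac{1}{3}$ from the naive bounds, and then you say ``the crux is to show that \ldots the $V_3$-vertex $u\in P_i$ must satisfy item~\ref{item:v3_good}''. But a Kind~2 block has \emph{no} $V_3$-vertex in $P_i$; $P_i=v_4^av_4^b$ with both in $V_4$. The paper does not look at $B_{i+1}$ at all for Kind~2: instead it applies Lemma~\ref{lem:dangerous} to $v_4^a$ (not to a $V_3$-vertex) and deduces that if $v_4^a$ has any free edge to a dangerous vertex then $x_i$ can have only one balanced edge, so $p(x_i)\ge -\tfrac16$ and $p(B_i)\ge 0$ directly. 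Your proposal contains no substitute for this step.

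Second, for Kind~1 with $x_i$ heavy your ``crux'' claim is too strong and in general false. You assert that, unless $B_{i+1}$ is of Kind~4, the dangerous vertex $v_i$ must satisfy item~\ref{item:v3_good} and collect $\ge\tfrac{5}{3}$ points, making $p(B_i)\ge 0$. In the paper this happens only in the sub-case where $x_{i+1}$ is moderate with at least two balanced edges to paths. If $x_{i+1}$ is moderate with exactly one balanced edge to a path, $v_i$ may have free edges into $V_4\cup V_2^b$ and then receives only $\tfrac{1}{12}$ per edge (Rule~4), giving $p(v_i)\ge\tfrac{4}{3}$; and if $x_{i+1}$ is not moderate then $v_i$ is not dangerous at all and Rules~3 and~4 give it nothing. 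In both of these sub-cases $p(B_i)$ can be strictly negative. What the paper actually proves is the weaker intermediate bound $p(B_iX_{i+1})\ge -1$, obtained by combining the available lower bound on $p(v_i)$ with a matching upper bound on the deficit of $x_{i+1}$; the $\ge 1$ point from $P_{i+1}$ (when $B_{i+1}$ is not of Kind~4) then yields option~2. You also omit entirely the case $|X_{i+1}|\ge 2$, which in the paper requires the separate Lemmas~\ref{lem:kadjacent} and~\ref{lem:special_case} (the latter specifically exploits that $x_i$ is heavy to rule out splitting-inners in $X_{i+1}$).
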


Now we can prove \hyperref[thm:main]{\textbf{Theorem} \ref{thm:main}}:

\begin{proofing}{Theorem \ref{thm:main}}
	{Consider a canonical path partition $\mathcal{S}$. We show that applying the set of transfer rules to $\mathcal{S}$ leads to a situation where every component in $\mathcal{S}$ has at least $7$ points. \hyperref[pro:easy7]{\textbf{Proposition} \ref{pro:easy7}} implies that we only need to handle paths. For a path $P$ we know from \hyperref[pro:easyouter]{\textbf{Proposition} \ref{pro:easyouter}} that it receives $7+\frac{5}{3}$ points on its end-vertices. Hence we need to show that the internal vertices have at least $-\frac{5}{3}$ points. All internal vertices up to the beginning of the first block contribute a non-negative number of points (by \hyperref[pro:easypoints]{\textbf{Proposition} \ref{pro:easypoints}}). Thereafter,
		apply options~1 or~2 of \hyperref[lem:twoblocks]{\textbf{Lemma} \ref{lem:twoblocks}} as long as possible, removing blocks with a non-negative contribution to the number of points. Eventually, we are left with either no block (and we are done), one block (and then we can apply \hyperref[lem:blockalone]{\textbf{Lemma} \ref{lem:blockalone}}) or two blocks (with option~3 of \hyperref[lem:twoblocks]{\textbf{Lemma} \ref{lem:twoblocks}} applicable). In any case, we get that for the sequence of blocks we lose at most $\frac{5}{3}$ points, which leaves the path with at least $7$ points at the end.
	}
\end{proofing}


\section{Proofs of \hyperref[lem:blockalone]{\text{Lemma} \ref{lem:blockalone}} and  \hyperref[lem:twoblocks]{\text{Lemma} \ref{lem:twoblocks}}} \label{sec:key_lemmas}
In this section we prove the two key lemmas, \hyperref[lem:blockalone]{\textbf{Lemma} \ref{lem:blockalone}} and  \hyperref[lem:twoblocks]{\textbf{Lemma} \ref{lem:twoblocks}}.
In order to prove these two lemmas, we state a set of lemmas that we will prove in a later section.

\begin{lemma}\label{lem:dangerous}
	Let  $\mathcal{S}$ be a canonical path partition. Let $P$ be a path in $\mathcal{S}$.
	Let $x_1v_3$ ($v_3x_1$, respectively) be two consecutive vertices in $P$, where $x_1$ is heavy and $v_3$ is dangerous. If $(v_3,u)\in E_F$ is a free edge and $u$ has a path neighbor in $V_2$ ($u\in V_4\cup V_2^b\cup V_3 $), then:
	\begin{enumerate}
		\item $u$ is on the same path as $v_3$.\label{itm:1}
		\item $u$ is on the right (left, respectively) of $v_3$ and has only one path neighbor in $V_2$ (hence $u\in V_4\cup V_2^b $), which we denote by $x_2$. $x_2$ is right (left, respectively) of $u$ in the path.\label{itm:2}
		
		\item $x_2$ has exactly one balanced edge among its four free edges. This edge is incident to the end-vertex of the path that is closer to $x_2$ than to $u$. We will denote this end-vertex by $o_2$ ($o_1$, respectively).\label{itm:3}
		
		\item $y_1\in V_2$ which is the other path neighbor of $v_3$ is moderate with exactly one balanced edge to $o_2$ ($o_1$, respectively). The rest of its free edges may be balanced edges to cycles.\label{itm:4}
		
	\end{enumerate}
	
\end{lemma}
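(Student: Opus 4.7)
The plan is to prove each of the four items by contradiction, exhibiting a ``surgical'' rearrangement of the path partition which either reduces the total number of paths or (while preserving the number of components) increases the number of cycles, contradicting the fact that $\mathcal{S}$ is canonical. The key resource enabling these rearrangements is the abundance of balanced edges incident to $x_1$ and $y_1$: because $x_1$ is heavy it has at least three balanced edges to path endpoints, and because $v_3$ is dangerous and $x_1$ is heavy, its other path neighbor $y_1$ is moderate, so $y_1$ has at least one balanced edge to a path endpoint. Together with the free edge $(v_3,u)$, this supplies just enough ``loose ends'' to reroute the paths involved.

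For Item~\ref{itm:1}, I would assume $u$ lies on a path $P_u\neq P$ and perform the following surgery. Delete the path edge $(x_1,v_3)$, splitting $P$ into a left piece $P_L$ ending at $x_1$ and a right piece $P_R$ starting at $v_3$. Use the free edge $(v_3,u)$ to attach $P_R$ to $P_u$, after deleting one of $u$'s path edges (the one on the side of $u$'s $V_2$-neighbor, so that a new path-endpoint of $P_u$ is created). Finally, pick one of $x_1$'s three balanced edges to a path endpoint and merge $P_L$ with the corresponding path. Since $x_1$ has balanced edges to at least three distinct paths, at least one of these is not among the (at most two) paths already modified by the surgery, so the total number of paths strictly decreases, a contradiction.

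For Items~\ref{itm:2}--\ref{itm:4}, knowing from Item~\ref{itm:1} that $u$ lies on $P$, I would rule out in turn: that $u$ lies on the same side of $v_3$ as $x_1$ (use $(v_3,u)$ to create a new cycle inside $P$ between $v_3$ and $u$ and reattach the severed piece via a balanced edge of $y_1$); that $u\in V_3$ has two $V_2$-neighbors (then both of them carry balanced edges which, combined with those of $x_1,y_1$, provide one edge too many, yielding a reducing rewiring); that $x_2$ lies on the wrong side of $u$ relative to $v_3$ (symmetric surgery); and that $x_2$ has a balanced edge to a path endpoint other than $o_2$, or that $y_1$ has a balanced-to-path edge other than the one to $o_2$. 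In each of these last two cases, the extra balanced edge, combined with $(v_3,u)$ and one of the forced balanced edges from $x_1$, permits cutting $P$ at $(x_1,v_3)$ or $(v_3,y_1)$ and stitching the pieces to two already-endpoint paths (as in Figures~\ref{example:1} and~\ref{example:2}), reducing component count.

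The main obstacle is the case analysis: for each assumed violation one must specify exactly which path edges are deleted, which free and balanced edges are inserted, and verify that the resulting object is a legal path partition with strictly fewer components. The delicate part is avoiding degenerate clashes, where the balanced edge chosen from $x_1$ happens to land on $P$ itself or on $P_u$ or on whichever path was just created/modified by an earlier step. The bookkeeping is tight: it is precisely the definitions of ``heavy'' ($\ge 3$ balanced edges to paths) and ``moderate'' ($\ge 1$ balanced edge to a path) that leave exactly one balanced edge free after all conflicts are removed, forcing the very restrictive structure described in Items~\ref{itm:3} and~\ref{itm:4}.
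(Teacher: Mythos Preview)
Your overall strategy matches the paper's: each violation is ruled out by a rearrangement contradicting canonicity, powered by the free edge $(v_3,u)$ together with balanced edges at $x_1$ and $y_1$. However, your sketch for Item~\ref{itm:1} has a counting error. After deleting $(x_1,v_3)$ and the path edge $(u,x_2)$ and inserting $(v_3,u)$, you have turned the two paths $P$ and $P_u$ into \emph{three} pieces: $P_L$ ending at $x_1$, the merged piece through $v_3$ and $u$, and the leftover piece of $P_u$ ending at $x_2$. Attaching $P_L$ to a further path via one balanced edge of $x_1$ only restores the original component count; it does not reduce it. (Separately, ``heavy'' gives three balanced edges to path endpoints, not edges to three \emph{distinct} paths, so even that step needs more care when some of those endpoints are $o_1,o_2$ or endpoints of $P_u$.)

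The missing ingredient is that the newly exposed endpoint $x_2\in V_2$ also carries a balanced edge, and it is the \emph{pair} of balanced edges at $x_1$ and $x_2$ together that forces a genuine reduction (or an extra cycle). The paper isolates exactly this step as a reusable lemma (Lemma~\ref{lem:derived}): if from $\mathcal{S}$ one can produce a partition $\mathcal{S}_0$ with one more component whose two new endpoints lie in $V_2$ and one of them is heavy, then $\mathcal{S}$ was not canonical. Items~\ref{itm:1} and~\ref{itm:2} are then proved simply by exhibiting such a derived $\mathcal{S}_0$ with new endpoints $x_1$ and $x_2$; the delicate clash-avoidance you anticipate is done once, inside Lemma~\ref{lem:derived}. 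Items~\ref{itm:3} and~\ref{itm:4} are handled by direct case analysis much as you outline, now additionally using the edge $(x_2,o_2)$ already established in Item~\ref{itm:3} when proving Item~\ref{itm:4}.
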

\setlength{\intextsep}{12pt}%
\begin{figure}[H]
	
	\begin{subfigure}[t]{0.5\columnwidth}  \resizebox{\linewidth}{!}
		{\dangerouslemma}
		\caption{$x_1v_3$ } \label{fig1:subfig1}
	\end{subfigure}
	\hspace{10mm}
	\begin{subfigure}[t]{0.5\columnwidth}  \resizebox{\linewidth}{!}
		{\dangerouslemmatwo}
		\caption{$v_3x_1$} \label{fig1:subfig2}
	\end{subfigure}
	\caption{The two cases of \hyperref[lem:dangerous]{\text{Lemma} \ref{lem:dangerous}} }\label{fig1}
\end{figure}

\begin{lemma}\label{lem:kadjacent}
	Let $\mathcal{S}$ be a canonical path partition and let $P$ be a path in $\mathcal{S}$. Let $X_i$ be a part of some block $B_i=X_iP_i$ in path $P$ such that $|X_i|=k>1$.
	Then $X_i$ ends with at least $\frac{1}{3}\cdot k-\frac{4}{3}$ points.

\end{lemma}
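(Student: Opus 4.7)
I will prove Lemma \ref{lem:kadjacent} by combining the per-vertex loss bounds of Proposition \ref{pro:easypoints} with a canonicality-based pair-wise constraint on the balanced edges of consecutive $V_2$-vertices inside $X_i$.

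Every $x_j \in X_i$ lies in $V_2^b$: it has exactly one $V_2$-path-neighbor when $j \in \{1,k\}$ and two when $1 < j < k$. Consequently Rule \ref{rule:v2a_dangerous} never fires, and Rule \ref{rule:v4_dangerous} is active only at the two endpoints $x_1,x_k$, contributing at most $\tfrac{1}{12}$ per non-balanced free edge, hence at most $\tfrac{1}{3}$ per endpoint. Writing $b_j$ for the number of balanced edges from $x_j$ to path end-vertices and $c_j$ for those to cycle vertices, the loss on $x_j$ from Rules \ref{rule:v2_cycle} and \ref{rule:v2_path} is at most $\tfrac{2b_j + c_j}{3}$, while Rule \ref{rule:v5_rest} can only hand points back to $x_j$.

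The heart of the argument is a canonical pair-constraint on consecutive $x_j, x_{j+1} \in X_i$. A swap argument in the spirit of Figure \ref{example:1} --- delete the path edge $(x_j,x_{j+1})$ and re-attach the two halves of $P$ via balanced edges of $x_j$ and $x_{j+1}$ --- forces $B_j \cup B_{j+1} \subseteq V(P) \cup V(K)$ for a single component $K \neq P$. Sharper variants (the ``near vs.\ far'' analysis on the ends of $P$ in the spirit of Lemma \ref{lem:dangerous}, together with a triangle-insertion argument when $K$ is a short cycle) rule out the most wasteful combinations. I will extract from this the quantitative statement that, after telescoping along the $k-1$ consecutive pairs of $X_i$, the bound $\sum_{j=1}^k \tfrac{2b_j + c_j}{3} \le \tfrac{2k+2}{3}$ holds. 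Together with the Rule \ref{rule:v4_dangerous} contribution of at most $\tfrac{2}{3}$ coming from the two endpoints, the total loss on $X_i$ is at most $\tfrac{2k+4}{3}$, leaving at least $k - \tfrac{2k+4}{3} = \tfrac{k-4}{3}$ points on $X_i$.

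The hardest step will be the case analysis that underlies the pair constraint. For every overloaded configuration of $(B_j, B_{j+1})$ --- for instance, balanced edges to two different non-$P$ end-vertices, to two different vertices of a common triangle, or one to a near end of $P$ combined with one to a non-$P$ end-vertex --- I will exhibit an explicit rearrangement of $\mathcal{S}$ that either strictly reduces the number of components, or keeps the count unchanged while strictly increasing the number of cycles, thereby contradicting canonicality. Bookkeeping the small Rule \ref{rule:v4_dangerous} contribution at the two endpoints $x_1$ and $x_k$ is precisely what consumes the $-\tfrac{4}{3}$ slack on the right-hand side of the lemma.
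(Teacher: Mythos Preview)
Your decomposition of the total loss into a ``balanced'' part bounded by $\tfrac{2k+2}{3}$ plus a ``Rule~\ref{rule:v4_dangerous}'' part bounded by $\tfrac{2}{3}$ does not hold. Consider the configuration the paper uses as its worst case when no vertex of $X_i$ sends a free edge to a dangerous vertex: all $x_j$ with $j\neq i$ \textbf{go} only to $o_1$ or $o_2$, while a single $x_i$ \textbf{goes} to both $o_1$ and $o_2$ and additionally to two triangles. Then $b_i=2$, $c_i=2$, and $b_j=1$, $c_j=0$ for $j\neq i$, so
\[
\sum_{j=1}^{k}\frac{2b_j+c_j}{3}\;=\;\frac{6}{3}+(k-1)\cdot\frac{2}{3}\;=\;\frac{2k+4}{3},
\]
which already saturates the target $\tfrac{2k+4}{3}$ with zero Rule~\ref{rule:v4_dangerous} contribution. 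Your claimed bound $\sum_j\tfrac{2b_j+c_j}{3}\le\tfrac{2k+2}{3}$ is therefore false, and the $\tfrac{2}{3}$ of slack you reserve for Rule~\ref{rule:v4_dangerous} at the endpoints is simply not there in this case. A pairwise constraint of the form $B_j\cup B_{j+1}\subseteq V(P)\cup V(K)$ does hold, but it does not telescope to the sum bound you assert; the single ``pivot'' vertex $x_i$ can carry four balanced edges and no neighboring pair inequality controls that.

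The paper does not try to bound the two contributions independently. It first splits on whether some $x_j$ sends a free edge to a dangerous vertex. In the ``no'' branch (Lemma~\ref{lem:none_to_dangerous}) Rule~\ref{rule:v4_dangerous} is irrelevant and a global structural case analysis (external path / only $o_1,o_2$ / one vertex to cycles / several to a common cycle) yields $b(X)\le\tfrac{2}{3}(k+2)$ directly. In the ``yes'' branch (Lemma~\ref{lem:has_to_dangerous}) one invokes Lemma~\ref{lem:dangerous} to pin down $N_b$ of the neighbors, forcing the balanced loss down to $\tfrac{2}{3}(k+1)$ plus lower-order terms, which then absorbs the small Rule~\ref{rule:v4_dangerous} cost. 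The trade-off between the two sources of loss is essential, and your additive decomposition loses it.
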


\begin{lemma}\label{lem:special_case}
	Let $\mathcal{S}$ be a canonical path partition and let $P$ be a path in $\mathcal{S}$. Let $B_iB_{i+1}$ be two consecutive blocks in path $P$ such that $B_i$ is of Kind $1$ ($X_i=x_i$) and $|X_{i+1}|=2$.
	If $x_i$ is heavy then $X_{i+1}$ ends with at least $-\frac{1}{3}$ points.
	
\end{lemma}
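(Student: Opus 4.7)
The plan is to strengthen the generic bound of Lemma~\ref{lem:kadjacent}, which for $|X_{i+1}|=2$ yields only $X_{i+1}\geq -\frac{2}{3}$ points, by exploiting the additional hypothesis that $x_i$ is heavy to save the remaining $\frac{1}{3}$. Label the two vertices of $X_{i+1}$ in their order along $P$ as $a$ and $b$; since $|X_{i+1}|=2$, both $a,b\in V_2^b$, each the other's unique $V_2$ path-neighbor, so Rule~\ref{rule:v4_dangerous} is the only rule under which they transfer on their non-balanced free edges. I would split the proof into the case where $a$ is moderate (so the $V_3$-vertex $v$ of $P_i$, with $x_i$ heavy and $a$ moderate as its two $V_2$ path-neighbors, is dangerous) and the case where $a$ is not moderate.

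In the moderate case, Lemma~\ref{lem:dangerous} applies to the pair $(x_i,v)$. A short preliminary step observes that $b$ itself cannot be the vertex $u$ in Lemma~\ref{lem:dangerous}: item~\ref{itm:2} of that lemma requires $u$'s $V_2$-path-neighbor $x_2$ to lie on the side of $v$ opposite to $x_i$, whereas $b$'s unique $V_2$-path-neighbor $a$ lies between $b$ and $x_i$. Consequently, whenever $v$ has any free-neighbor in $V_4\cup V_2^b\cup V_3$, item~\ref{itm:4} of Lemma~\ref{lem:dangerous} pins $a$ down as having exactly one balanced edge, namely $(a,o_2)$, with the remaining three free edges going to cycles or to non-balanced targets; this gives $a\geq -\frac{2}{3}$. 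With $(a,o_2)$ fixed, I would then invoke canonicality transformations in the spirit of Figure~\ref{example:1}, applied to the $V_2$-adjacent pair $a,b$: a balanced edge from $b$ to the end of an external path $P'$ would combine with $(a,o_2)$ to reduce $P$ and $P'$ to a single path, and a balanced edge from $b$ to the opposite end $o_1$ would yield a cycle-increasing rearrangement of $P$, both contradicting canonicality. This restricts $b$'s balanced edges to $o_2$ and to cycles, and a further cycle-size argument, ruling out $b$ being balanced to many size-$3$ cycles by showing such cycles would interact with $(a,o_2)$ to admit a cycle-increasing swap, yields $a+b\geq -\frac{1}{3}$. The sub-case in which $v$ has no free-neighbor in $V_4\cup V_2^b\cup V_3$ is treated separately: all four free edges of $v$ land in $V_1\cup V_2^a\cup V_5$, and combined with an external balanced edge of $x_i$ this gives a component-reducing transformation whenever $a$ or $b$ has an external balanced edge.

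In the complementary case where $a$ is not moderate, Proposition~\ref{pro:easypoints} already yields $a\geq -\frac{1}{3}$, so it suffices to show $b\geq 0$. Here I would use that $x_i$ is heavy to provide an external balanced edge $(x_i,o^{(x_i)})$, and show that any external balanced edge of $b$ would combine with this one, with the path edges $(x_i,v),(v,a),(a,b)$, and with a reabsorption of $v$ via one of its free edges (using the extension of Lemma~\ref{lem:canonical}), to yield a $3$-paths-to-$2$ reduction, contradicting canonicality. Combined with a canonicality analysis of the pair $(a,b)$ analogous to Figure~\ref{example:1}, $b$'s balanced edges are sufficiently restricted to give $b\geq 0$. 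The hard part will be the fine accounting in the moderate sub-case: the near-worst profile in which $a$ loses $\frac{2}{3}$ via $(a,o_2)$ plus three balanced edges to size-$3$ cycles, and $b$ loses similarly, would naively sum to $a+b\approx -\frac{4}{3}$, and ruling it out requires a delicate cycle-increasing rearrangement combining $(a,o_2)$ with $b$'s cycle-balanced edges, whose correctness depends on the precise adjacencies forced by $x_i$ being heavy.
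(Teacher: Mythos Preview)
Your case split on whether $a$ is moderate is not the one the paper uses, and it leads you into difficulties that the paper's approach avoids. The paper instead splits on how many vertices of $X_{i+1}=\{a,b\}$ have a free edge to a dangerous vertex (none, exactly one, or both), and the crucial structural input is not Lemma~\ref{lem:dangerous} applied to $v=P_i$, but rather Observation~\ref{splittinginacceptors}: if $a,b$ are \emph{splitting-inners} (meaning $a$ goes to $o_1$ and $b$ goes to $o_2$), then no vertex elsewhere on $P$ can be heavy. This immediately contradicts $x_i$ heavy, so splitting-inners is ruled out. Combined with Observation~\ref{onlyonetocycles}, which shows that $b(a)+b(b)>\frac{7}{3}$ forces splitting-inners when there are no edges to dangerous vertices, this handles the first case. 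The remaining two cases (one or both of $a,b$ have an edge to a dangerous vertex) are handled by Observations~\ref{onlyone} and~\ref{only}, which constrain $N_b(a),N_b(b)$ to be $\{o_1\}$ or $\{o_2\}$, and the splitting-inners obstruction is again used to forbid $a\to o_1$.

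Your proposal has two concrete gaps. First, in the moderate sub-case where $v$ has no free-neighbour in $V_4\cup V_2^b\cup V_3$, Lemma~\ref{lem:dangerous} gives you nothing, and your suggested ``component-reducing transformation'' using an external balanced edge of $x_i$ together with free edges of $v$ is not specified and does not obviously exist; the free edges of $v$ in this sub-case land in $V_2^a\cup V_5$, not in $V_1$, so you cannot directly concatenate. Second, in the non-moderate case you need $b\ge 0$, but the ``reabsorption of $v$ via one of its free edges'' is hand-waving: $x_i$ and $b$ are separated by two vertices, so the Figure~\ref{example:1} reduction does not apply, and you give no explicit sequence of swaps. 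In fact, without the splitting-inners observation you cannot rule out the profile where $a$ has a single balanced edge to $o_1$ plus three non-balanced edges to dangerous vertices (contributing $\frac{1}{4}$), and $b$ is highly balanced; what actually kills this is Observation~\ref{onlyone}, which forces $N_b(b)=\{o_2\}$ once $a$ has any edge to a dangerous vertex, an ingredient you never invoke. The cleanest fix is to abandon the moderate/non-moderate split and use the paper's decomposition together with the splitting-inners obstruction.
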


\subsection{Proof of \hyperref[lem:blockalone]{\text{Lemma} \ref{lem:blockalone}} }

\begin{proofing}{Lemma \ref{lem:blockalone}}{
		Let $B_i=X_iP_i$ be a block of a path. By \hyperref[pro:easypoints]{\text{Proposition} \ref{pro:easypoints}}, every vertex in $P_i$ contributes a non-negative number of points, and hence $P_i$ contributes a non-negative number of points. As to $X_i$, there are two cases to consider.
		\begin{enumerate}
			\item If $|X_i| = 1$ then the vertex $x_i$ that makes up $X_i$ has at least $-\frac{5}{3}$ points by item \ref{item:v2} of \hyperref[pro:easypoints]{\text{Proposition} \ref{pro:easypoints}}.
			
			\item	If $|X_i|=k >1$ then by \hyperref[lem:kadjacent]{\text{Lemma} \ref{lem:kadjacent}} the number of points is at least $\frac{k}{3}-\frac{4}{3}\geq -\frac{2}{3}$.
		\end{enumerate}

	}
\end{proofing}

\subsection{Proof of \hyperref[lem:twoblocks]{\text{Lemma} \ref{lem:twoblocks}}}

We state two observations that prove some special cases of  \hyperref[lem:twoblocks]{\text{Lemma} \ref{lem:twoblocks}}.

\setcounter{lemma}{2}
\begin{obs}\label{obs:kindtwoblock}
	If block $B_i$ is of Kind $2$ then the number of points on $B_i$ is non-negative.
	
\end{obs}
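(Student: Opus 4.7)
The plan is to combine Proposition~\ref{pro:easypoints} with a case analysis on $x_i$'s balanced-edge structure. In a Kind~$2$ block we have $X_i = \{x_i\}$ with $x_i \in V_2^a$ (maximality of $|X_i|=1$ forces both path neighbors of $x_i$ outside $V_2$), while $P_i = \{u_1, u_2\}$ with $u_1, u_2 \in V_4$, each possessing exactly one $V_2$ path neighbor (for $u_1$ it is $x_i$, for $u_2$ it is the first vertex of $X_{i+1}$). Item~\ref{item:v4} of Proposition~\ref{pro:easypoints} gives $u_1, u_2 \geq \tfrac{2}{3}$ each, hence $u_1 + u_2 \geq \tfrac{4}{3}$, so it suffices to establish $x_i \geq -\tfrac{4}{3}$.

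Let $k$ denote the number of balanced edges from $x_i$ to path endpoints. Since each such edge transfers $\tfrac{2}{3}$ via Rule~\ref{rule:v2_path}, each balanced edge to a cycle vertex transfers at most $\tfrac{1}{3}$ via Rule~\ref{rule:v2_cycle} (worst at a triangle), and each non-balanced free edge from $x_i \in V_2^a$ to a dangerous $V_3$ transfers only $\tfrac{1}{6}$ via Rule~\ref{rule:v2a_dangerous}, the total loss at $x_i$ is at most $\tfrac{k+4}{3}$, giving $x_i \geq 1 - \tfrac{k+4}{3}$. When $k \leq 3$ this yields $x_i \geq -\tfrac{4}{3}$, and the block is non-negative.

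The remaining case is $k = 4$, in which $x_i$ is heavy with four balanced edges to distinct path endpoints. My plan is to recover the missing $\tfrac{1}{3}$ by a structural argument: I would attempt a canonicity-violating modification that cuts $P$ at $x_i$ and reattaches its two halves to two of the paths ending at $x_i$'s balanced neighbors, either reducing the component count or creating an extra cycle at equal component count, in the spirit of Figures~\ref{example:1} and~\ref{example:2}. Where such a direct reduction fails, I would invoke \hyperref[lem:dangerous]{Lemma~\ref{lem:dangerous}} to force any dangerous $V_3$ neighbors of $u_1, u_2$ to lie on $P$ itself, and then exploit the heaviness of $x_i$ together with the restricted geometry on $P$ to either limit the combined Rule~\ref{rule:v4_dangerous} loss of $u_1, u_2$ below $\tfrac{2}{3}$ or to produce a $V_5$ neighbor of $u_1$ or $u_2$ that donates $\tfrac{1}{4}$ through Rule~\ref{rule:v5_rest}.

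The main obstacle is precisely this $k = 4$ subcase: identifying the canonicity-violating swap, or the compensating point-donation, that closes the $\tfrac{1}{3}$ gap requires a careful structural analysis of the interaction between $x_i$'s four balanced neighbors and the two consecutive $V_4$ vertices $u_1, u_2$ that follow $x_i$ on $P$.
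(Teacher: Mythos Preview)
Your treatment of $k\le 3$ is fine, but the $k=4$ case is left as a genuine gap: you sketch several directions (ad hoc swaps, or hunting for a $V_5$ donor) without carrying any of them through. In fact the fix is a one-line application of the very lemma you mention. If $u_1=v_4^a$ had a free edge to some dangerous vertex, then Lemma~\ref{lem:dangerous} applies with $u=v_4^a$ and $x_2=x_i$ (the unique $V_2$ path neighbor of $v_4^a$, lying to its left); item~\ref{itm:3} of that lemma forces $x_i$ to have \emph{exactly one} balanced edge, contradicting $k=4$. Hence in the $k=4$ case $v_4^a$ transfers nothing under \ref{rule:v4_dangerous} and retains a full point, so the block totals at least $-\tfrac{5}{3}+1+\tfrac{2}{3}=0$. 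No canonicity-violating swap or $V_5$ donor is needed.

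The paper organizes the argument the other way around: it case-splits on whether $v_4^a$ has a free edge to a dangerous vertex. If not, $v_4^a$ keeps $1$ point and the crude bound $x_i\ge -\tfrac{5}{3}$ already suffices; if so, Lemma~\ref{lem:dangerous} item~\ref{itm:3} pins $x_i$ down to a single balanced edge, whence $x_i\ge -\tfrac{1}{6}$ by Proposition~\ref{pro:easypoints}\ref{item:v2_1path0cycles} and the block is comfortably positive. Your parameter $k$ is a proxy for the same dichotomy (Lemma~\ref{lem:dangerous} says ``dangerous edge at $v_4^a$'' forces $k=1$), but splitting on $k$ makes you work harder: you prove a sharper bound on $x_i$ for $k\in\{0,1,2,3\}$ that the paper never needs, and then you still have to circle back to $v_4^a$ to close $k=4$. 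The paper's split is cleaner because it goes straight to the vertex ($v_4^a$) whose behavior is governed by Lemma~\ref{lem:dangerous}.
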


\begin{proofing}{Observation \ref{obs:kindtwoblock}}
	{Let $B_i=X_iP_i$ be a block of Kind $2$.
		Let $X_i=x_i$ and $P_i=v_4^av_4^b$, where $v_4^a,v_4^b\in V_4$.
		By \ref{rule:v4_dangerous} of the points transfer rules, vertices from $V_4$ can transfer points among free edges to {\em dangerous} vertices. Hence we split the proof into two cases (see Figure \ref{fig2}):
		
		\begin{enumerate}[leftmargin=40pt,label=(\roman*)]
			\item If $v_4^a$ has no free edge to a dangerous vertex then $v_4^a$ ends up with $1$ point. As to the rest of the block, from items \ref{item:v2} and \ref{item:v4} of \hyperref[pro:easypoints]{\text{Proposition} \ref{pro:easypoints}}, $x_i$ and $v_4^b$ have at least $-\frac{5}{3}$ and $\frac{2}{3} $ points respectively. Hence in total we get that $X_iP_i$ ends up with at least  $ -\frac{5}{3}+\frac{2}{3}+1\geq 0 $ points. 
			
			\item If $v_4^a$ has a free edge that is incident to a dangerous vertex $v_3$ then by \hyperref[lem:dangerous]{\text{Lemma} \ref{lem:dangerous}} (case as in Figure \ref{fig1:subfig2}), $v_3$ must be on the same path as $v_4^a$ and must be on the \textbf{right} of it and $x_i$ will have exactly one balanced edge among its four free edges. Hence by \ref{item:v2_1path0cycles} of \hyperref[pro:easypoints]{\text{Proposition} \ref{pro:easypoints}}, $x_i$ ends up with at least $-\frac{1}{6}$ points. By item \ref{item:v4} of \hyperref[pro:easypoints]{\text{Proposition} \ref{pro:easypoints}} each one of the vertices of $P_i$ has at least $\frac{2}{3}$ points and hence $X_iP_i$ will have at least $-\frac{1}{6}+\frac{4}{3}>1$ points.  
			
			\begin{figure}[H]
				\centering
				\begin{subfigure}[t]{0.45\columnwidth} \centering \resizebox{0.9\linewidth}{!}
					{\blockkindtwo{$-\frac{5}{3}$}{$+1$}{$+\frac{2}{3}$}}
					\caption{$v_4^a$ has no free edge to a dangerous vertex } \label{fig2:subfig1}
				\end{subfigure}
				\hspace{2mm}
				\begin{subfigure}[t]{0.5\columnwidth}
					\centering \resizebox{1.05\linewidth}{!}
					{\blockkindtwo{$-\frac{1}{6}$}{$+\frac{2}{3}$}{$+\frac{2}{3}$}}
					\caption{$v_4^a$ has a free edge to a dangerous vertex $v_3$} \label{fig2:subfig2}
				\end{subfigure}
				\caption{\hyperref[obs:kindtwoblock]{\text{Ovservation} \ref{obs:kindtwoblock}} - Block of Kind $2$}\label{fig2}
			\end{figure}
			
		\end{enumerate}

	}
\end{proofing}

\begin{obs}\label{obs:kindthreeblock}
	If block $B_i$ is of Kind $3$ then number of points on $B_i$ is non-negative. (In fact, at least $\frac{1}{3}$.)
	
\end{obs}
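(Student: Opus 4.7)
The plan is a direct arithmetic combination of bounds already in hand, with no new graph-theoretic argument required. A Kind $3$ block has $|X_i|=k\geq 2$ and (since it is not the last block) $P_i$ is either a single vertex from $V_3$ or a pair of vertices from $V_4$. I would first invoke \hyperref[lem:kadjacent]{\textbf{Lemma} \ref{lem:kadjacent}} to lower bound the contribution of $X_i$ by $\frac{k}{3}-\frac{4}{3}$, and then apply items \ref{item:v3} and \ref{item:v4} of \hyperref[pro:easypoints]{\textbf{Proposition} \ref{pro:easypoints}} to lower bound $P_i$'s contribution by $1$ in the single-$V_3$ shape and by $2\cdot\frac{2}{3}=\frac{4}{3}$ in the $V_4V_4$ shape. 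Any $V_5$ vertices discarded in forming $P_i$ can be added back without loss, since item~\ref{item:v5} guarantees they contribute non-negatively.

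The case analysis is then routine: the minimum of $\bigl(\frac{k}{3}-\frac{4}{3}\bigr)$ plus either $1$ or $\frac{4}{3}$, over $k\geq 2$, is attained at $k=2$ together with the $V_3$ shape, giving exactly $-\frac{2}{3}+1=\frac{1}{3}$ points; every other case is strictly larger. This simultaneously yields non-negativity and the strengthened ``at least $\frac{1}{3}$'' claim stated parenthetically in the observation. I do not anticipate any real obstacle. In contrast to \hyperref[obs:kindtwoblock]{\textbf{Observation} \ref{obs:kindtwoblock}}, there is no need to invoke \hyperref[lem:dangerous]{\textbf{Lemma} \ref{lem:dangerous}} or to case-split on whether some $V_3$ vertex adjacent to $X_i$ is dangerous, because the length-$\geq 2$ segment $X_i$ already absorbs, via \hyperref[lem:kadjacent]{\textbf{Lemma} \ref{lem:kadjacent}}, the potential point deficit of the $V_2$ vertices.
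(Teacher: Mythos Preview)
Your proposal is correct and matches the paper's proof essentially line for line: invoke \hyperref[lem:kadjacent]{Lemma~\ref{lem:kadjacent}} to get $X_i$ contributes at least $\frac{k}{3}-\frac{4}{3}\ge -\frac{2}{3}$, then use items~\ref{item:v3} and~\ref{item:v4} of \hyperref[pro:easypoints]{Proposition~\ref{pro:easypoints}} to bound $P_i$ from below by $1$ (single $V_3$ vertex) or $\frac{4}{3}$ (two $V_4$ vertices), yielding at least $\frac{1}{3}$ in total. Your closing remark that no appeal to \hyperref[lem:dangerous]{Lemma~\ref{lem:dangerous}} is needed here is also exactly right.
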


\begin{proofing}{Observation \ref{obs:kindthreeblock}}
	{Let $B_i=X_iP_i$ be of Kind $3$ where $|X_i|=k > 1$. By \hyperref[lem:kadjacent]{\text{Lemma} \ref{lem:kadjacent}} the number of points on $X_i$ is at least $ \frac{k}{3}-\frac{4}{3}\geq -\frac{2}{3}$. As to $P_i$, there are two cases (see Figure \ref{fig3}).
		\begin{enumerate}[label=(\roman*)]
			\item If $|P_i| = 1$ then the vertex of $P_i$ is in $V_3$, and by item \ref{item:v3} of \hyperref[pro:easypoints]{\text{Proposition} \ref{pro:easypoints}} it has at least $1$ point. 	
			\item	If $|P_i|=2$ then the two vertices of $P_i$ are from $V_4$. By item \ref{item:v4} of \hyperref[pro:easypoints]{\text{Proposition} \ref{pro:easypoints}} each one of them has at least $\frac{2}{3}$ points.	

			\begin{figure}[H]{
					\centering
					\begin{subfigure}[t]{0.3\columnwidth}
						\centering
						\resizebox{0.6\linewidth}{!}
						{\blockkindthree{$2$}}
						\caption{$|P_i| = 1$ } \label{fig3:subfig1}
					\end{subfigure}
					\hspace{10mm}
					\begin{subfigure}[t]{0.3\columnwidth} 	
						\centering
						\resizebox{0.8\linewidth}{!}
						{\blockkindthree{$1$}}
						\caption{$|P_i| = 2$} \label{fig3:subfig2}
					\end{subfigure}
					\caption{\hyperref[obs:kindthreeblock]{\text{Observation} \ref{obs:kindthreeblock}} - Block of Kind $3$}\label{fig3}
				}	
			\end{figure}
		\end{enumerate}
		In either case, the total number of points on $B_i$ is at least $\frac{1}{3}$.
	}
\end{proofing}

Using the above two observations we now prove \hyperref[lem:twoblocks]{\text{Lemma} \ref{lem:twoblocks}}.

\begin{proofing}{Lemma \ref{lem:twoblocks}}
	{
		By \hyperref[obs:kindthreeblock]{\text{Observation} \ref{obs:kindthreeblock}} and  \hyperref[obs:kindtwoblock]{\text{Observation} \ref{obs:kindtwoblock}} we get that the only case to deal with is when $B_i$ is of Kind $1$. In this case it suffices to show that if the number of points on $B_i$ is negative, then the number of points on $B_iX_{i+1}$ is at least $-1$. This suffices by the following two cases. If $B_{i+1}$ is of kind $4$, the number of points on $B_iB_{i+1}$ is at least as the number of points on $B_iX_{i+1}$, proving option~$3$ in the lemma. If $B_{i+1}$ is not of kind $4$, then $P_{i+1}$ is not empty and has at least one point (as in the proof of \hyperref[obs:kindthreeblock]{\text{Observation} \ref{obs:kindthreeblock}}), implying that the number of points on $B_iB_{i+1}$ is non-negative.
		

		
		
		In order to prove the sufficient condition above, we now split the proof into cases according to $X_i$.
		Let $X_i=x_i$ and $P_i=v_i$, where $v_i\in V_3$.
		\begin{enumerate}
			\item If $x_i$ is not heavy then among its four free edges it has at most two balanced edges to paths. Hence by item \ref{item:v2_2path2cycles} of \hyperref[pro:easypoints]{\text{Proposition} \ref{pro:easypoints}}, it has at least $-1$ points. By item \ref{item:v3} of \hyperref[pro:easypoints]{\text{Proposition} \ref{pro:easypoints}}, $v_i$ has at least $1$ point. Hence the number of points on $B_i$ is non-negative. (See Figure \ref{fig4}.)
			
			
			\begin{figure}[H]{
					\centering
					\resizebox{0.2\linewidth}{!}
					{\blockkindone{$-1$}{$+1$}}

					\caption{Block of Kind $1$ when $x_i$ is not heavy}\label{fig4}
				}	
			\end{figure}

			\item If $x_i$ is heavy then from item \ref{item:v2} of \hyperref[pro:easypoints]{\text{Proposition} \ref{pro:easypoints}} $x_i$ has at least $-\frac{5}{3}$ points. We now consider two cases depending on the size of $X_{i+1}$.
			\begin{enumerate}
				\item If $|X_{i+1}|=1$, let $x_{i+1}$ be the vertex that makes up $X_{i+1}$. We split this case into three cases. (See Figure \ref{fig5}.)
				\begin{enumerate}
					\item If $x_{i+1}$ is moderate and has at least two balanced edges to paths, then $v_i$ (which is dangerous) has all its free edges to $V_5$ or $V_2^a$. This is because if $v_i$ has an edge that is incident to a vertex that has a path neighbor in $V_2$, then by \hyperref[lem:dangerous]{\text{Lemma} \ref{lem:dangerous}} on $x_{i}v_i$, $x_{i+1}$ cannot have more than one balanced edge to a path, contradiction.
					Hence, by item \ref{item:v3_good} of \hyperref[pro:easypoints]{\text{Proposition} \ref{pro:easypoints}}, $v_i$ ends with at least $\frac{5}{3}$ points. Thus, $B_i$ is non-negative. 	
					\item If $x_{i+1}$ is moderate and has exactly one balanced edge to paths, then by item \ref{item:v2_1path3cycles} of \hyperref[pro:easypoints]{\text{Proposition} \ref{pro:easypoints}}  we get that the number of points on $x_{i+1}$ is at least $-\frac{2}{3}$. As to the free edges incident with $v_i$, there are three options.

					\begin{itemize}
						\item If $v_i$ has a free edge that is incident to a vertex that has a path neighbor in $V_2$ then by \hyperref[lem:dangerous]{\text{Lemma} \ref{lem:dangerous}} on $x_{i}v_i$ it must be incident to $V_4\cup V_2^b$. Hence by \ref{rule:v4_dangerous} it receives $\frac{1}{12}$ points.
						\item If $v_i$ has an edge that is incident to  $V_2^a$, then  by \ref{rule:v2a_dangerous} it receives $\frac{1}{6}$ points.
						
						\item If $v_i$ has an edge that is incident to $V_5$, then  by \ref{rule:v5_rest} it receives $\frac{1}{4}$ points.
						
					\end{itemize}
					
					Therefore, $v_i$ receives at least $\frac{1}{12}$ points on each free edge and hence has at least $1+4\cdot\frac{1}{12}= \frac{4}{3}$ points. Thus, the number of points on $B_iX_{i+1}$ is at least $-\frac{5}{3}+\frac{4}{3}-\frac{2}{3}\ge -1$.
					
					If $x_i$ is not heavy then among its four free edges it has at most two balanced edges to paths.
					
					%

					\item If $x_{i+1}$ is not moderate then there are two options. Either its all balanced edges are to cycles or it has exactly one balanced edge, and this edge is incident to a path.
					\begin{itemize}
						\item If $x_{i+1}$ has no balanced edges to paths, then it ends up with at least $-\frac{1}{3}$ points by item \ref{item:v2_4cycles} of \hyperref[pro:easypoints]{\text{Proposition} \ref{pro:easypoints}}.
						\item	If $x_{i+1}$ has exactly one balanced edge then by item \ref{item:v2_1path0cycles} of \hyperref[pro:easypoints]{\text{Proposition} \ref{pro:easypoints}} it ends up with at least $-\frac{1}{6}$ points.
					\end{itemize}

					Therefore, the number of points on $B_iX_{i+1}$ is at least $-\frac{5}{3}+1-\frac{1}{3}\geq -1$.
					
					
					\begin{figure}[H]
						\centering
						\begin{subfigure}[t]{0.25\columnwidth} 	
							\centering
							\resizebox{0.6\linewidth}{!}{\blockkindoneheavy{a}}
							\caption{\small If $x_{i+1}$ is moderate and has at least two balanced edges to paths }\label{fig5:subfig1}
						\end{subfigure}
						\hspace{7mm}
						\begin{subfigure}[t]{0.3\columnwidth} 	
							\centering
							\resizebox{0.8\linewidth}{!}
							{\blockkindoneheavy{b}}
							\caption{\small If $x_{i+1}$ is moderate and has only one balanced edge to a path} \label{fig5:subfig2}
						\end{subfigure}
						\hspace{5mm}
						\begin{subfigure}[t]{0.3\columnwidth} 	
							
							\centering
							\resizebox{0.8\linewidth}{!}
							{\blockkindoneheavy{c}}
							\caption{If $x_{i+1}$ is not moderate } \label{fig5:subfig3}
						\end{subfigure}
						\caption{\small $x_i$ is heavy and $X_{i+1}$ is of size $1$}\label{fig5}
					\end{figure}
					
				\end{enumerate}
				\item If $|X_{i+1}|=k > 1$, then we show that for any $k\geq 2$,  $X_{i+1}$ has at least $-\frac{1}{3}$ points. Taking it together with the fact that on $B_i$ there are at least $-\frac{5}{3}+1\geq -\frac{2}{3}$ points, we get that on $B_iX_{i+1}$ there are at least $-\frac{2}{3}-\frac{1}{3}\geq -1$ points. (see Figure \ref{figheavy}.)
				
				
				\begin{enumerate}
					\item If $k=2$ then because $x_i$ is heavy, \hyperref[lem:special_case]{\text{Lemma} \ref{lem:special_case}} implies that $X_{i+1}$ has at least $-\frac{1}{3}$ points.
					\item 	If $k>2$ then by \hyperref[lem:kadjacent]{\text{Lemma} \ref{lem:kadjacent}}, $X_{i+1}$ has at least $ \frac{1}{3}\cdot k-\frac{4}{3}\geq -\frac{1}{3}$ points for $k\geq 3$.
					
				\end{enumerate}

				\begin{figure}[H]
					
					\centering
					\resizebox{0.2\linewidth}{!}
					{\blockkindoneend}

					\caption{$x_i$ is heavy and $X_{i+1}$ is not of size $1$}\label{figheavy}
				\end{figure}
				
			\end{enumerate}
		\end{enumerate}

	}	
\end{proofing}

\section{Properties of a canonical path partition} \label{sec:canonical}

We first prove that indeed a canonical path partition as defined on page~\pageref{can} exists. The following Lemma is based on a similar lemma that appears in~\cite{MagnantM09}.

\setcounter{lemma}{5}
\begin{lemma}\label{lem:canonical}
	For $d>0$, every $d$-regular graph $G$ has a canonical path partition.
\end{lemma}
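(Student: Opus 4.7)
Plan: I would prove the lemma by a minimum-counterexample argument on the triple (number of components, number of non-cycle components, number of isolated vertices), ordered lexicographically. Let $\mathcal{S}^*$ be a partition of $G$ minimizing this triple, which exists by finiteness. Minimizing the first two coordinates is exactly canonicity conditions (1) and (2) (since with the number of components fixed, minimizing non-cycle components is the same as maximizing cycle components), so it suffices to show the third coordinate is $0$.

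Suppose for contradiction that $v$ is isolated in $\mathcal{S}^*$. Since $d>0$, $v$ has a neighbor $u$ in $G$. First I would dispose of two easy situations. If $u$ is an endpoint of a path $P$, then appending $v$ to $P$ via the edge $uv$ strictly reduces the number of components, contradicting the first coordinate. If $u$ lies on a cycle $C$, we may break $C$ at one of $u$'s cycle-edges to obtain a path with $u$ as an endpoint, and then append $v$ via $uv$; this merges $C$ and $\{v\}$ into a single path, again dropping the component count. Hence every neighbor of $v$ must be internal to some path.

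Next, for any such neighbor $u$, internal to a path $P=a_1\dots a_k$ at position $j$ (so $2\le j\le k-1$), I would apply the standard split: delete one of the two path-edges at $u$ and insert the edge $uv$, thereby splitting $P$ into two subpaths with $v$ appended to one side. This operation preserves both the number of components and the number of cycles while eliminating the singleton $\{v\}$. A new isolated vertex is created only when one side of the split is empty, which happens precisely for $k=3$ and $j=2$. In every other configuration the isolated-vertex count strictly decreases, contradicting the third coordinate. Thus we may assume that every neighbor of $v$ is the middle vertex of a $3$-vertex path $P_i=a_iu_ib_i$.

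The main obstacle is ruling out this degenerate configuration, and this is where the argument must be done carefully. The case $d=1$ is trivial ($G$ is a perfect matching and has no isolated component), so assume $d\ge 2$ and pick two distinct neighbors $u_1,u_2$ of $v$ with associated paths $P_i=a_iu_ib_i$. By $d$-regularity each of $a_1,b_1,a_2,b_2$ has $d-1\ge 1$ further neighbors in $G$, which must lie in some components of $\mathcal{S}^*$. A case analysis on where those free edges land, combined with the compound modification that replaces $P_1,P_2,\{v\}$ by e.g.~the path $a_1u_1vu_2a_2$ together with the singletons $\{b_1\},\{b_2\}$, yields either a strict reduction in the number of components (if some extra free edge lands on a path endpoint or on a cycle, allowing one of the new singletons to be absorbed) or the creation of an additional cycle without any increase in the number of components (contradicting the max-cycles condition). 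In either case we contradict the lexicographic minimality of $\mathcal{S}^*$. This case enumeration mirrors the corresponding argument in~\cite{MagnantM09} for $d\le 5$ and extends cleanly to all $d>0$.
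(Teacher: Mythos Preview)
Your reduction is correct up through the point where every neighbour of the isolated vertex $v$ is the middle of a $3$-path. The gap is in the last paragraph. You assert that a case analysis on the free edges of $a_1,b_1,a_2,b_2$ will produce either a drop in the number of components or an extra cycle, but neither alternative need occur. In fact, by the very minimality of $\mathcal{S}^*$ in the first two coordinates, each endpoint $a_i,b_i$ already has \emph{all} of its free edges going to internal vertices of paths (otherwise $P_i$ could have been merged with another component, or closed into a cycle). So your proposed dichotomy is empty: after the compound move you have the $5$-path $a_1u_1vu_2a_2$ together with the singletons $\{b_1\},\{b_2\}$, the component count is unchanged, the cycle count is unchanged, and you now have \emph{two} isolated vertices instead of one --- strictly worse in your lexicographic order. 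Attempting to re-absorb $b_1$ or $b_2$ runs into the same obstruction, and iterating the move just shuffles singletons around without forcing progress.

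What the paper does instead is a global counting argument that sidesteps this local deadlock entirely. Starting from $v$, it builds a family $\mathcal{F}$ of $3$-paths closed under the rule ``add any path reached by a free edge from an endpoint of a path already in $\mathcal{F}$''; the key observation (using that any such endpoint can itself be made isolated in a pseudo-canonical partition) is that every path ever added has size exactly~$3$. Once $\mathcal{F}$ stabilises, its endpoint set $O$ has no edges within itself and no edges leaving $\mathcal{F}$, so all $d|O|$ edge-ends from $O$ land in the set $I$ of middle vertices; but $|I|=|\mathcal{F}|<2|\mathcal{F}|=|O|$, contradicting $d|O|\le d|I|$. This pigeonhole on $|O|$ versus $|I|$ is the missing global ingredient; your local case analysis cannot substitute for it.
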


\begin{proofing} {Lemma \ref{lem:canonical}}{
		We show that given that Properties 1 and 2 of canonical path partitions hold, Property 3 can be enforced to hold as well.
		Let $\mathcal{S}$ be a path partition satisfying Properties $1$ and $2$ with as few isolated vertices as possible. We refer to such an $\mathcal{S}$ as a pseudo-canonical path partition. Our goal is to show that a pseudo-canonical path partition has no isolated vertices, making it a canonical path partition.
		
		Assume towards contradiction that there is a vertex $v$ that forms a component in $\mathcal{S}$. We define a collection $\mathcal{F}$ of components of $\mathcal{S}$ by the following process. We add to $\mathcal{F}$ all those components that are incident to $v$ by free edges. These components must be paths, by Property $1$. Now recursively, in an arbitrary order, for each such path $P$, add to $\mathcal{F}$ all paths that are incident to an end-vertex of $P$ (if they were not added previously).
		
		This process must end because the graph is finite.

		\begin{obs}\label{sizethree}
			Every path that is added to $\mathcal{F}$ is of size three.
		\end{obs}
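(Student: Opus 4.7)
The plan is to induct on $r \ge 1$, defined as the length of a shortest chain $v \to P_1 \to \cdots \to P_{r} = P$ in $\mathcal{F}$ (where each step is a free edge from $v$ into $P_1$, or from an end-vertex of $P_{j-1}$ into $P_j$).

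For the base case $r = 1$, suppose $P$ has a free edge $(v, a_i)$ where $P = a_1 \cdots a_k$. If $a_i$ is an end-vertex of $P$, prepending $v$ yields a single path replacing $\{v\}$ and $P$, reducing the component count and contradicting Property~1 of $\mathcal{S}$. Hence $2 \le i \le k-1$, and one can replace $\{v\}$ and $P$ by the two paths $v\,a_i\,a_{i-1}\cdots a_1$ and $a_{i+1}\cdots a_k$. This preserves the component count and eliminates the isolated vertex $v$, and introduces a new isolated vertex only if $i = k - 1$; splitting to the other side works unless $i = 2$. To avoid a net reduction in the number of isolated vertices (which would contradict the minimality defining $\mathcal{S}$), both $i = 2$ and $i = k - 1$ must hold, forcing $k = 3$.

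For the inductive step $r \ge 2$, fix a shortest chain $v, P_1, \dots, P_{r-1}, P$ and write each $P_j$ (size-3 by the induction hypothesis) as $a_1^j\,a_2^j\,a_3^j$. The base case applied to $P_1$ gives $v$ attached at $a_2^1$. For each $j \ge 2$, the free edge from an end-vertex of $P_{j-1}$ (relabel so it is $a_1^{j-1}$) into $P_j$ must land on the middle $a_2^j$: landing on an end-vertex of $P_j$ would let us merge $P_{j-1}$ with $P_j$, contradicting Property~1. For the last link, relabel so the free edge from an end-vertex of $P_{r-1}$ into $P$ leaves from $a_1^{r-1}$. Now perform a ``chain shift" on $\mathcal{S}$: delete the path edges $a_1^j\,a_2^j$ for $j = 1, \dots, r-1$ and promote the free edges $(v, a_2^1), (a_1^1, a_2^2), \dots, (a_1^{r-2}, a_2^{r-1})$ to path edges. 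The resulting partition $\mathcal{S}'$ replaces $\{v\}, P_1, \dots, P_{r-1}$ by the $r-1$ size-3 paths $v\,a_2^1\,a_3^1$ and $a_1^j\,a_2^{j+1}\,a_3^{j+1}$ for $j = 1, \dots, r-2$, together with the new isolated vertex $a_1^{r-1}$. The component, cycle, and isolated-vertex counts are unchanged, so $\mathcal{S}'$ inherits the minimality properties of $\mathcal{S}$. The vertex $a_1^{r-1}$ now plays the role of $v$ in $\mathcal{S}'$, with its retained free edge into $P$, so the base-case argument applied to $a_1^{r-1}$ in $\mathcal{S}'$ yields $|P| = 3$.

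The main obstacle is establishing the ``middle-attachment'' structure along the chain needed to make the chain shift clean, and then justifying that the base-case argument applies verbatim to $a_1^{r-1}$ in $\mathcal{S}'$. The latter requires $a_1^{r-1}$'s free edges in $\mathcal{S}'$ to avoid cycle components; this follows from the standard fact (used repeatedly in the paper) that in any partition satisfying Properties~1 and~2, no free edge joins two $V_1$ vertices --- such an edge would let us merge two components (violating Property~1) or convert a path into a cycle (violating Property~2) --- and the chain shift neither consumes nor alters any of $a_1^{r-1}$'s free edges. A secondary point is that $P_1, \dots, P_{r-1}, P$ must be pairwise vertex-disjoint so the chain shift is well-defined; this is automatic because they are distinct components of $\mathcal{S}$, the shortest chain assumption ruling out repeats.
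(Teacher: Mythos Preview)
Your proof is correct and follows essentially the same strategy as the paper: show the base case by a splitting argument, then propagate by making an end-vertex of a known size-3 path isolated (preserving all the minimality properties) and re-applying the base case. The paper phrases the inductive step as a single swap (``add the edge $(v,y)$ to isolate $x$'') and then says ``repeating this process recursively''; you unroll this recursion into one explicit chain shift, which is the same idea carried out in one move rather than $r-1$ separate ones.
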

		
		\begin{proofing}{Observation \ref{sizethree}}
			{
				Let $v$ be an isolated vertex in a pseudo-canonical path partition. Note that $v$ cannot be incident to a vertex on a cycle or to an end-vertex of a path because of Property $1$. Hence if $v$ is incident to a vertex $y$, then $y$ must be an internal vertex of a path $P$. Moreover, $P$ cannot have more than one internal vertex, because then we could make two paths out of $v$ and $P$, with at least two vertices each, contradicting the minimality of the number of isolated vertices. Hence $P$ must be a path of size three. Let $P=\{x,y,z\}$. We claim that every free edge of an end point of $P$ (say, $x$) is incident only to vertices that are internal vertices of paths of size three in $\mathcal{S}$. For $x$ one can easily create a new path partition, in which $x$ is isolated, by adding the edge $(v,y)$. This simple transformation will create a path partition that has the same number of components, cycles and isolated vertices as in $\mathcal{S}$, and hence it too is pseudo-canonical.  As a result, repeating this process recursively we add to $\mathcal{F}$ only paths of size three.
			}
		\end{proofing}

		We denote by $O$ the end-vertices of the paths in $\mathcal{F}$ and by $I$ the inner vertices.
		Every path in $\mathcal{F}$ has two vertices in $O$ and one vertex in $I$. Hence, we get that $|I|\leq \frac{|O|}{2} \implies |I|< |O|$.
		By Properties~1 and~2, there are no edges between vertices of $O$. Moreover, when the process of constructing $\mathcal{F}$ ends, $O$ has no neighbor outside $\mathcal{F}$. Hence, all $d$ neighbors of vertices in $O$ are in $I$. On the other hand, any vertex in $I$ can be incident to at most $d$ vertices of $O$. We get that  $d|O|\leq d|I|\implies |O|\leq |I|$, and this is a contradiction.
		
		
	}
\end{proofing}


\subsection{Properties of $V_3$ and the proof of \hyperref[lem:dangerous]{\text{Lemma} \ref{lem:dangerous}}}  \label{subsec:pro3}

In this section we prove \hyperref[lem:dangerous]{\text{Lemma} \ref{lem:dangerous}}. We show that if the premises of the lemma do not hold, then the path partition is not canonical. To prove \hyperref[lem:dangerous]{\text{Lemma} \ref{lem:dangerous}} we will use the following scheme. We start with a canonical path partition $\mathcal{S}$ and temporarily create from $\mathcal{S}$ a new path partition $\mathcal{S}_0$ with one more component. Then we show that we can create from $\mathcal{S}_0$ a new path partition that either has fewer components than $\mathcal{S}$, or has the same number of components as $\mathcal{S}$, but more cycles. Both of these two cases imply that $\mathcal{S}$ is not canonical.

To simplify the presentation, some components in $S_0$ will be referred to as {\em pseudo-paths}. These are components that are treated as paths but might actually be cycles, and their nature becomes apparent only in later stages. Hence $S_0$ will have components that are cycles, components that are paths, and components that are pseudo-paths.
We say that a path partition $\mathcal{S}_0$ is {\em derived} from $\mathcal{S}$ if: \label{def:derived_path}
\begin{itemize}
	\item $\mathcal{S}_0$ is a path partition of size $|\mathcal{S}_0|=|\mathcal{S}|+1$
	
	\item  $\mathcal{S}$ and $\mathcal{S}_0$ have the same cycle components.
	
	\item All the end-vertices of paths in $\mathcal{S}$ are end-vertices of either paths or pseudo-paths in $\mathcal{S}_0$.
	
	\item The two new end-vertices (of paths or pseudo-paths) in $\mathcal{S}_0$ (that are not end-vertices in $\mathcal{S}$) are in $V_2$ of $\mathcal{S}$.
	
\end{itemize}

\begin{lemma}\label{lem:derived}
	Let  $\mathcal{S}_0$ be a path partition that is derived from $\mathcal{S}$. If one of the new end-vertices of $\mathcal{S}_0$ is heavy in $\mathcal{S}$, then $\mathcal{S}$ is not canonical.
\end{lemma}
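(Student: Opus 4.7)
The plan is to transform $\mathcal{S}_0$ into a path partition $\mathcal{S}_2$ that contradicts the canonicity of $\mathcal{S}$, by adding one carefully chosen balanced edge incident to the heavy new end $v$ and one incident to the other new end $v'$ (which lies in $V_2(\mathcal{S})$ and therefore also has a balanced edge). Write $C_v$ for $v$'s component in $\mathcal{S}_0$, with other end $w$ and unique path-neighbor $c$ (possibly $c=w$ when $|C_v|=2$). Heaviness of $v$ supplies three distinct balanced neighbors $u_1,u_2,u_3\in V_1(\mathcal{S})$ that are end-vertices of paths in $\mathcal{S}$, and $v'\in V_2(\mathcal{S})$ supplies a balanced neighbor $u'\in V_1(\mathcal{S})$. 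Because the old end-vertices of $\mathcal{S}$ are preserved as end-vertices of $\mathcal{S}_0$, each of $u_1,u_2,u_3,u'$ is still an end-vertex of some component of $\mathcal{S}_0$.

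First I would select $u_i\in\{u_1,u_2,u_3\}$ satisfying $u_i\notin C_v$ and $u_i\neq u'$. The first constraint excludes at most one $u_j$, since any $u_j\in V_1(\mathcal{S})$ is an end of $\mathcal{S}_0$ and the only end of $C_v$ other than $v$ is $w$, so $u_j\in C_v\Rightarrow u_j=w$; the second constraint excludes at most one more. Heaviness of $v$ (three distinct balanced neighbors) guarantees at least one valid $u_i$. Moreover, $u_i\notin C_v$ implies $u_i\neq c$, so $(v,u_i)$ is not $v$'s unique $\mathcal{S}_0$-path-edge and is therefore free in $\mathcal{S}_0$. Let $\mathcal{S}_1$ be the partition obtained from $\mathcal{S}_0$ by promoting $(v,u_i)$ to a path edge; this merges $C_v$ with the component containing $u_i$, so $|\mathcal{S}_1|=|\mathcal{S}|$ and $\mathcal{S}_1$ has the same cycle components as $\mathcal{S}$.

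Next I would add $(v',u')$ to $\mathcal{S}_1$. Because $u_i\neq u'$ and no edge incident to $v'$ was modified, both $v'$ and $u'$ remain end-vertices in $\mathcal{S}_1$, and $(v',u')$ is a free edge there (assuming it was free in $\mathcal{S}_0$; the degenerate case is discussed below). If $v'$ and $u'$ lie in different components of $\mathcal{S}_1$, adding the edge merges them, producing $\mathcal{S}_2$ with $|\mathcal{S}_2|=|\mathcal{S}|-1$ and contradicting property~1 of canonical partitions. Otherwise $v'$ and $u'$ are the two end-vertices of a single component of $\mathcal{S}_1$, and the addition turns that component into a cycle, giving $|\mathcal{S}_2|=|\mathcal{S}|$ with one more cycle than $\mathcal{S}$, contradicting property~2. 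In either case $\mathcal{S}$ fails to be canonical.

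The step I expect to be most delicate is verifying that the edges we add really are free in $\mathcal{S}_0$ and $\mathcal{S}_1$, since a general derivation may turn balanced edges of $\mathcal{S}$ into path edges of $\mathcal{S}_0$. Heaviness of $v$ gives exactly the slack needed to avoid the at most three forbidden choices $c$, $w$, $u'$ when picking $u_i$. For $(v',u')$, the only obstruction is if $u'$ is $v'$'s unique path-neighbor in $\mathcal{S}_0$, which forces the degenerate situation $C_{v'}=\{v',u'\}$; this is handled either by selecting a different balanced edge of $v'$ (when $v'$ has more than one), or by observing that a two-vertex component $\{v',u'\}$ produced by the derivation imposes strong structural constraints that independently contradict the canonicity of $\mathcal{S}$.
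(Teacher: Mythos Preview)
Your two-step plan (attach at the heavy end $v$ first, then at $v'$) is sound and considerably more streamlined than the paper's exhaustive case analysis. The counting argument---heaviness supplies three distinct path-end neighbors $u_1,u_2,u_3$, while at most two are forbidden (the other end $w$ of $C_v$, and $u'$)---is exactly right, and correctly subsumes the paper's Cases~1 and~2(a),(c).

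There is one genuine gap. You assert that $u'\in V_1(\mathcal{S})$ is ``still an end-vertex of some component of $\mathcal{S}_0$,'' but $V_1$ contains cycle vertices as well as path-ends, and the definition of \emph{derived} preserves cycles. If $u'$ lies on a cycle $C$, then $u'$ is \emph{not} an end-vertex in $\mathcal{S}_0$ or $\mathcal{S}_1$, and your second step must be modified: open $C$ at $u'$ and append it to $v'$'s component, giving $|\mathcal{S}_2|=|\mathcal{S}_1|-1=|\mathcal{S}|-1$, which still contradicts Property~1. The paper treats this explicitly as its Case~2(b). (For the $u_i$ you are safe, since heaviness guarantees they are path-ends.)

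Your degenerate case---where $(v',u')$ is already a path edge of $\mathcal{S}_0$, forcing $C_{v'}=\{v',u'\}$---is correctly identified but not resolved; the appeal to unspecified ``structural constraints'' is not a proof. In fairness, the paper's own argument is equally silent here: its Case~2(a)(i) declares the pseudo-path $Q_2$ a cycle whenever $o_{x_2}=o_2$, which fails when $|Q_2|=2$. The case does not actually arise in the paper's applications (Observations~\ref{different}--\ref{between}), because there the only free edge of $\mathcal{S}$ promoted to a path edge in $\mathcal{S}_0$ has both endpoints outside $V_1$, so every balanced edge of $v'$ remains free in $\mathcal{S}_0$. If you want your proof to stand for the general definition of \emph{derived}, you would need to either add this as a hypothesis or give a real argument for the two-vertex case.
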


\begin{proofing}{Lemma \ref{lem:derived}}
	{	Let  $\mathcal{S}_0$ be a path partition that is derived from $\mathcal{S}$. Let $x_1$ and $x_2$ be the new end-vertices of $\mathcal{S}_0$, where $x_1,x_2$ are in $V_2$ in $\mathcal{S}$, and moreover, $x_1$ is heavy in $\mathcal{S}$.
		Let $(x_2,o_{x_2})\in E$ be a balanced edge incident to $x_2$ (there must be at least one such edge). Note that $o_{x_2}$ is in $V_1$ of $\mathcal{S}$, and hence it is either a cycle vertex of $\mathcal{S}_0$, or an end vertex of either a path or a pseudo-path of $\mathcal{S}_0$, but $o_{x_2} \not= x_1$.
		One should consider the following cases (see Figure \ref{fig11}):
		\begin{enumerate}[leftmargin=40pt, label=     \textbf{Case \arabic*}]
			\item If $x_1$ and $x_2$ are the end-vertices of the same path or pseudo-path, $Q_1$, in $\mathcal{S}_0$, then $o_{x_2}$ is on a component $P_{x_2}\neq Q_{1}$ (which can be a cycle). As $x_1$ is heavy (has at least three balanced edges to paths), there must be a balanced edge $(x_1, o_{x_1})\in E$ where $o_{x_1}$ is on
			$P_{x_1}\neq P_{x_2}$. We then concatenate $Q_1$, $P_{x_2}$ and $P_{x_1}$ into one path, making a new path partition of size  $|\mathcal{S}|-1$, contradicting Property~1. (See Figure \ref{fig11:subfig1}.)
			
			\item If $x_1$ and $x_2$ are on two different paths or pseudo-paths, $Q_1=o_1P_1x_1$ and $Q_2=o_2P_2x_2$ in $\mathcal{S}_0$, then there are three cases to consider.\label{case2} (See Figure  \ref{fig11:subfig2}.)
			
			\begin{enumerate}
				
				\item If $o_{x_2}$ is an end-vertex of $Q_1$ or $Q_2$, then because $x_1$ is heavy (has at least three balanced edges to paths), there must be a balanced edge connecting $x_1$ to an end-vertex of some path  $P_{x_1}\neq Q_1,Q_2$. (See Figure \ref{fig11:subfig2a}).
				\begin{enumerate}
					\item If $o_{x_2}=o_2$ is on $Q_2$, 
					then the pseudo-path $Q_2$ is a cycle. Concatenate $Q_1$ and $P_{x_1}$ into one path leads to a new path partition of size $|\mathcal{S}|$ that contains one more cycle than $\mathcal{S}$ (because $P_{x_1}$ is not a cycle), contradicting Property~2. (See blue in \ref{fig11:subfig2a}.)
					\item If $o_{x_2}=o_1$ on $Q_1$, 
					we concatenate $Q_2$, $Q_1$ and $P_{x_1}$ into one path, making a new path partition of size  $|\mathcal{S}|-1$, contradicting Property~1. (See green in Figure \ref{fig11:subfig2a}.)
					
				\end{enumerate}
				\item  If $o_{x_2}$ is on a cycle component $C$, we consider a balanced edge $(x_1, o_{x_1})\in E$ where $o_{x_1}$ is an end vertex of a path $P_{x_1}$ (such an edge must exist because $x_1$ is heavy and consequently has at least three balanced edges to paths). We concatenate $Q_1$ and $P_{x_1}$ into one path and $C$ and $Q_2$ into anther path, making a new path partition of size  $|\mathcal{S}|-1$, contradicting Property~1. (See Figure \ref{fig11:subfig2b}.)
				\item If $o_{x_2}$ is on a path $P_{x_2}\neq Q_{1},Q_2$, 
				then consider a balanced edge $(x_1, o_{x_1})\in E$ incident to $x_1$, with $o_{x_1}\neq o_{x_2}$ being an end-vertex of a path (such an edge must exist because $x_1$ is heavy). There are three cases to consider. (See Figure \ref{fig11:subfig2c}.)
				\begin{enumerate}
					\item If $o_{x_1}$ is an end-vertex of $Q_1$ or $Q_2$ then we treat in a way similar to Case 2(a). (See blue in Figure \ref{fig11:subfig2c}.)
					\item If $o_{x_1}$ is on the other side of $P_{x_2}$ then we create one path out of $Q_1$, $P_{x_2}$ and $Q_2$, making a new path partition of size  $|\mathcal{S}|-1$, contradicting Property~1.(See green in Figure \ref{fig11:subfig2c}.)
					\item If $o_{x_1}$ is on a path $P_{x_1}\neq P_{x_2}$, then create one path out of $Q_1$ and $P_{x_1}$, and one path out of $Q_2$ and $P_{x_2}$, making a new path partition of size  $|\mathcal{S}|-1$, contradicting Property~1. (See red in Figure  \ref{fig11:subfig2c}.)
					
				\end{enumerate}
				
			\end{enumerate}

		\end{enumerate}

		\begin{figure}[H]
			\centering
			\begin{subfigure}{0.4\textwidth}
				\centering
				\resizebox{\linewidth}{!}{\caseexternaltwo}
				\caption{Case 1 - $x_1$ and $x_2$ are on the same path or pseudo-path $Q_1$.}
				
				\label{fig11:subfig1}
			\end{subfigure}%
			\vspace{10mm}
			\begin{subfigure}{\textwidth}
				\centering
				\begin{subfigure}{0.27\textwidth}
					\renewcommand\thesubfigure{\roman{subfigure} a}
					\centering
					\resizebox{\linewidth}{!}{\caseexternala}
					\caption{\centering $o_{x_2}$ is an end-vertex of $Q_1$ or $Q_2$}
					\label{fig11:subfig2a}
				\end{subfigure}
				\hspace{10mm}
				\begin{subfigure}{0.27\textwidth}	
					\addtocounter{subfigure}{-1}
					\renewcommand\thesubfigure{\roman{subfigure} b}
					\centering
					\resizebox{\linewidth}{!}{\caseexternalb}
					\caption{\centering $o_{x_2}$ is on a cycle  $C$ }
					\label{fig11:subfig2b}
				\end{subfigure}
				\hspace{10mm}
				\begin{subfigure}{0.27\textwidth}	
					\addtocounter{subfigure}{-1}
					\renewcommand\thesubfigure{\roman{subfigure} c}
					\centering
					\resizebox{\linewidth}{!}{\caseexternalc}
					\caption{\centering $o_{x_2}$ is on a component $P_{x_2}\neq Q_{1},Q_2$}
					\label{fig11:subfig2c}
				\end{subfigure}
				
				\addtocounter{subfigure}{-1}
				\caption{Case 2 - $x_1$ and $x_2$ are on different paths or pseudo-paths. } \label{fig11:subfig2}
			\end{subfigure}
			\caption{\hyperref[lem:derived]{\text{Lemma} \ref{lem:derived}} where $x_1$ is heavy }
			\label{fig11}
		\end{figure}

	}

\end{proofing}

The following observations consider situations where a dangerous vertex, $v_3$, has a free edge to a vertex $u$. All of them are proven easily using \hyperref[lem:derived]{\text{Lemma} \ref{lem:derived}}.
\setcounter{lemma}{3}
\begin{obs}\label{different}
	Let $\mathcal{S}$ be a canonical path partition. Let $(u,v_3)\in E_F$ be a free edge, such that $v_3$ is dangerous. If $u$ is not on the same path as $v_3$, then $u$ has no path neighbor in $V_2$.
\end{obs}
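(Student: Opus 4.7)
The plan is to argue by contradiction using \hyperref[lem:derived]{\text{Lemma} \ref{lem:derived}} as the main tool. Suppose towards contradiction that $v_3 \in V_3$ is dangerous, lying on some path $P$ of $\mathcal{S}$ with a heavy path-neighbor $x_1$ and a moderate path-neighbor $y_1$, and that the free edge $(u,v_3)$ satisfies both (a) $u$ lies on a component $Q \neq P$ and (b) $u$ has a path-neighbor $w \in V_2$. The first small step is to note that $u \notin V_1$: by the definition of $V_2$ every free edge incident to $V_1$ has its other endpoint in $V_2$, so $u \in V_1$ would force $v_3 \in V_2$, contradicting $v_3 \in V_3$. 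Hence $u$ is an internal vertex of the path $Q$, with two path-neighbors, and we may take $w$ to be one of them.

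The core of the proof is a single edge swap. I would form $\mathcal{S}_0$ from $\mathcal{S}$ by deleting the two path edges $(x_1, v_3)$ and $(u, w)$ and inserting the free edge $(v_3, u)$. Deletion of $(x_1, v_3)$ splits $P$ into a subpath ending at $x_1$ and a subpath starting at $v_3$; deletion of $(u, w)$ splits $Q$ into a subpath ending at $w$ and a subpath starting at $u$; finally, insertion of $(v_3, u)$ fuses the $v_3$-subpath with the $u$-subpath into a single new path. Since $P$ and $Q$ are distinct components of $\mathcal{S}$, their vertex sets are disjoint, so the fused component is a genuine simple path rather than a cycle. The resulting $\mathcal{S}_0$ has exactly one more component than $\mathcal{S}$, shares all cycle components with $\mathcal{S}$, retains all original path end-vertices as end-vertices, and has precisely two newly-exposed end-vertices, namely $x_1$ and $w$, both of which lie in $V_2$ of $\mathcal{S}$. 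So $\mathcal{S}_0$ is derived from $\mathcal{S}$ in the sense defined on page~\pageref{def:derived_path}.

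Because $x_1$ is heavy in $\mathcal{S}$, \hyperref[lem:derived]{\text{Lemma} \ref{lem:derived}} applies directly and yields that $\mathcal{S}$ is not canonical, contradicting our hypothesis. Hence no such $u$ can exist, proving the observation. I do not foresee any significant obstacle: the only items that require attention are the bookkeeping check that $\mathcal{S}_0$ satisfies all four bullets of the definition of a derived partition (in particular that the fused component is a simple path, which is guaranteed by the disjointness of $P$ and $Q$) and the preliminary observation that $u \notin V_1$ so that $u$ really has two path-neighbors to split between.
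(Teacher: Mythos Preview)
Your proposal is correct and follows essentially the same approach as the paper: assume $u$ has a path-neighbor in $V_2$, delete the path edges $(x_1,v_3)$ and $(u,w)$, insert the free edge $(v_3,u)$ to obtain a derived partition $\mathcal{S}_0$ with the heavy vertex $x_1$ as a new end-vertex, and then invoke \hyperref[lem:derived]{Lemma~\ref{lem:derived}}. Your extra care in verifying that $u\notin V_1$ (so that $Q$ is a path and $u$ is internal) and that the fused component is simple is a welcome elaboration of details the paper leaves implicit.
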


\begin{proofing}{Observation \ref{different}}
	{Let $\mathcal{S}$ be a canonical path partition. Let $(u,v_3)\in E_F$ be a free edge, such that $v_3$ is dangerous and $u$ is not on the same path as $v_3$. Assume towards contradiction that $u$ has a path neighbor $x_2\in V_2$. Let $P_1=P_1'x_1v_3y_1P_1^{''}$ be the path that $v_3$ lies on ($x_1$ is the heavy neighbor of $v_3$). Let $P_2=P_2'x_2uP_2^{''}$ be the path that $u$ lies on ($x_2\in V_2$). (See Figure \ref{fig7:subfig1}.) We can then create from $P_1$ and $P_2$ the following three paths: $Q_1= P_1'x_1$, $Q_2=P_2'x_2$ and $Q_3=P_2^{''}uv_3y_1P_1^{''}$. From the union of  $\{Q_1,Q_2,Q_3\}$ and $\mathcal{S}\backslash \{P_1, P_2\}$ we create a path partition $\mathcal{S}_0$. (See Figure \ref{fig7:subfig2}.) Note that $\mathcal{S}_0$ is a derived path partition from $\mathcal{S}$ (as defined on page~\pageref{def:derived_path}). In this derived path partition, $x_1$, which is one of the ``new" end vertices, is heavy. Hence by \hyperref[lem:derived]{\text{Lemma} \ref{lem:derived}}, $\mathcal{S}$ is not canonical, which is a contradiction to $\mathcal{S}$ being canonical.
		\begin{figure}[H]
			\centering
			\begin{subfigure}[t]{0.4\columnwidth}  \resizebox{\linewidth}{!}
				{		\difpath{$x_1$}{$v_3$}{$u$}{$x_2$}}
				\caption{Before } \label{fig7:subfig1}
			\end{subfigure}
			\hspace{15mm}
			\begin{subfigure}[t]{0.4\columnwidth}  \resizebox{\linewidth}{!}
				{\difshiled{$x_1$}{$x_2$}{$u$}{$v_3$}}
				\caption{After} \label{fig7:subfig2}
			\end{subfigure}
			\caption{\hyperref[different]{\text{Observation} \ref{different}}} \label{fig7}
		\end{figure}
	}
	
\end{proofing}

\begin{obs}\label{samesides}
	Let $P= P_1x_1v_3y_1P_2x_2uP_3$ ($x_1,x_2$ are on the left side) be a path in a canonical path partition $\mathcal{S}$, where $(v_3,u)\in E_F$ and  $x_1$ is heavy. Then $x_2\not \in V_2$. Likewise, if the premises hold for $P= P_1y_1v_3x_1P_2ux_2P_3$ ($x_1,x_2$ are on the right side), then $x_2\not \in V_2$.
\end{obs}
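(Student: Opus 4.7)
The plan is to argue by contradiction along the same lines as the proof of \hyperref[different]{\text{Observation} \ref{different}}, constructing a derived path partition and invoking \hyperref[lem:derived]{\text{Lemma} \ref{lem:derived}}. Assume towards contradiction that $x_2 \in V_2$. The idea is to delete two path edges of $P$ and insert the free edge $(v_3,u)$ in order to split $P$ into exactly two paths whose two new end-vertices are $x_1$ and $x_2$; this gives a derived partition $\mathcal{S}_0$ in the sense of page~\pageref{def:derived_path}, to which \hyperref[lem:derived]{\text{Lemma} \ref{lem:derived}} applies because $x_1$ is heavy, contradicting the canonicity of $\mathcal{S}$.

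For the first configuration $P = P_1\, x_1\, v_3\, y_1\, P_2\, x_2\, u\, P_3$, I would delete the path edges $(x_1,v_3)$ and $(x_2,u)$ and add the free edge $(v_3,u)$. What remains of $P$ decomposes into two paths $Q_1 = P_1\, x_1$ and $Q_2 = x_2\, P_2^{\mathrm{rev}}\, y_1\, v_3\, u\, P_3$, where $P_2^{\mathrm{rev}}$ denotes the reverse of the internal segment $P_2$. Thus $\mathcal{S}_0 = (\mathcal{S}\setminus\{P\}) \cup \{Q_1,Q_2\}$ has exactly one more component than $\mathcal{S}$, the same cycle components, and its only two new end-vertices are precisely $x_1$ and $x_2$, both of which lie in $V_2$ of $\mathcal{S}$ (the latter by our contradiction hypothesis). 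Hence $\mathcal{S}_0$ is a derived path partition from $\mathcal{S}$ in which the heavy vertex $x_1$ is a new end-vertex, and \hyperref[lem:derived]{\text{Lemma} \ref{lem:derived}} yields the contradiction.

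The mirror configuration $P = P_1\, y_1\, v_3\, x_1\, P_2\, u\, x_2\, P_3$ is handled symmetrically: delete $(v_3,x_1)$ and $(u,x_2)$, add the free edge $(v_3,u)$, and obtain $Q_1 = P_1\, y_1\, v_3\, u\, P_2^{\mathrm{rev}}\, x_1$ together with $Q_2 = x_2\, P_3$. Again $x_1$ and $x_2$ are the two new end-vertices (both in $V_2$, with $x_1$ heavy), so the same appeal to \hyperref[lem:derived]{\text{Lemma} \ref{lem:derived}} produces the contradiction. The only routine verification is that each $Q_i$ is a genuine simple path rather than a pseudo-path that happens to be a cycle: this is immediate because all vertices involved are taken from the single simple path $P$ (no repetition occurs) and we have inserted only one free edge, so the endpoints of $Q_1$ and $Q_2$ are distinct. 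I do not anticipate any real obstacle beyond this bookkeeping; the structural content is entirely carried by \hyperref[lem:derived]{\text{Lemma} \ref{lem:derived}}, exactly as in \hyperref[different]{\text{Observation} \ref{different}}.
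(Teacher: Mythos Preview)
Your proposal is correct and matches the paper's own proof essentially line for line: both argue by contradiction, split $P$ into the two paths $Q_1=P_1x_1$ and $Q_2$ (the segment $x_2\ldots y_1 v_3 u\ldots P_3$) to obtain a derived partition $\mathcal{S}_0$ with new end-vertices $x_1,x_2$, and then invoke \hyperref[lem:derived]{\text{Lemma}~\ref{lem:derived}} via the heaviness of $x_1$. Your extra remark that $Q_1,Q_2$ are genuine simple paths is harmless but not needed, since the definition of a derived partition explicitly allows pseudo-paths and \hyperref[lem:derived]{\text{Lemma}~\ref{lem:derived}} handles them.
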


\begin{proofing}{Observation \ref{samesides}}
	{Let $P= P_1x_1v_3y_1P_2x_2uP_3$, where $(v_3,u)\in E_F$ and $x_1$ is heavy. (see Figure \ref{fig8}). Assume towards contradiction that $x_2\in V_2$. We can then create the following two paths from $P$: $Q_1=P_3uv_3P_2x_2$ and $Q_2=P_1x_1$. From the union of $\{Q_1,Q_2\}$ and $\mathcal{S}\backslash P$ we create a path partition $\mathcal{S}_0$. Note that $\mathcal{S}_0$ is a derived path partition from $\mathcal{S}$. 
		In this derived path partition, $x_1$, which is one of the ``new" end vertices, is heavy. Hence by \hyperref[lem:derived]{\text{Lemma} \ref{lem:derived}}, $\mathcal{S}$ is not canonical, contradicting the premises of the lemma.
		
		The case $P= P_1y_1v_3x_1P_2ux_2P_3$ is handled in an analogous way.
		
		\begin{figure}[!h]
			\centering
			\scalebox{.8}{\samesides{$x_1$}{$v_3$}{$x_2$}{$u$}}
			\caption{\hyperref[samesides]{\text{Observation}  \ref{samesides}}}\label{fig8}
		\end{figure}
	}
\end{proofing}

\begin{obs}\label{between}
	Let  $P= P_1ux_2P_2x_1v_3y_1P_3$ ($x_1,x_2$ are between) be a path in a canonical path partition $\mathcal{S}$, where $(v_3,u)\in E_F$ and $x_1$ is heavy. Then $x_2\not\in V_2$.
\end{obs}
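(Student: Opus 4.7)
The plan is to mirror the strategy used for \hyperref[different]{\text{Observation} \ref{different}} and \hyperref[samesides]{\text{Observation} \ref{samesides}}: assume for contradiction that $x_2 \in V_2$, construct from $\mathcal{S}$ a derived path partition $\mathcal{S}_0$ in which $x_1$ appears as one of the two new end-vertices, and then invoke \hyperref[lem:derived]{\text{Lemma} \ref{lem:derived}} (applied with the heavy vertex $x_1$) to conclude that $\mathcal{S}$ is not canonical.

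The concrete rewiring I would use goes as follows. Write $P = P_1\,u\,x_2\,P_2\,x_1\,v_3\,y_1\,P_3$, so $(u,x_2)$ and $(x_1,v_3)$ are path edges of $P$. Delete these two path edges and add the free edge $(u,v_3)$. The result replaces $P$ by exactly two components:
\[
Q_1 \;=\; P_1\,u\,v_3\,y_1\,P_3, \qquad Q_2 \;=\; x_2\,P_2\,x_1.
\]
Setting $\mathcal{S}_0 := (\mathcal{S}\setminus\{P\})\cup\{Q_1,Q_2\}$, the new partition has size $|\mathcal{S}|+1$, preserves every cycle component of $\mathcal{S}$, preserves both original end-vertices of $P$ (which now bound $Q_1$), and introduces $x_1$ and $x_2$ as the two new end-vertices. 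Since $x_2\in V_2$ by the contradictory assumption and $x_1\in V_2$ because $x_1$ is heavy, $\mathcal{S}_0$ fits the definition of a derived path partition from page~\pageref{def:derived_path}.

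Now apply \hyperref[lem:derived]{\text{Lemma} \ref{lem:derived}} to $\mathcal{S}_0$. Because one of the new end-vertices, namely $x_1$, is heavy in $\mathcal{S}$, the lemma concludes that $\mathcal{S}$ is not canonical. This contradicts the premise of the observation, so $x_2\not\in V_2$ as claimed.

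The proof is largely mechanical once the correct rewiring is identified; the only place where one must be slightly careful is choosing which two path edges to delete so that $x_1$ (the heavy vertex whose existence we must exploit) becomes an endpoint of the new component and, simultaneously, the partner new endpoint lies in $V_2$. The choice of cutting immediately to the left of $x_2$ and immediately to the left of $v_3$ achieves both requirements at once. The relevant vertices are pairwise distinct (in particular $x_1\neq x_2$, since $x_2$ lies strictly to the left of $v_3$ while $x_1$ lies strictly to its right), so the two edges removed are genuinely distinct and the two paths $Q_1,Q_2$ are well-defined. No further case analysis is required, and the argument has essentially the same length and structure as the proof of \hyperref[samesides]{\text{Observation} \ref{samesides}}.
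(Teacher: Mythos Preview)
Your proof is correct and essentially identical to the paper's: both cut the path edges $(u,x_2)$ and $(x_1,v_3)$, add the free edge $(u,v_3)$ to form $Q_1=P_1uv_3y_1P_3$ and $Q_2=x_2P_2x_1$, and then invoke \hyperref[lem:derived]{\text{Lemma}~\ref{lem:derived}} on the resulting derived partition with heavy new endpoint $x_1$. One small slip: you justify $x_1\neq x_2$ by saying $x_1$ lies to the right of $v_3$, but in $P=P_1ux_2P_2x_1v_3y_1P_3$ the vertex $x_1$ is immediately to the \emph{left} of $v_3$; the distinctness of $x_1$ and $x_2$ follows simply from their occupying separate positions in the path.
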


\begin{proofing}{Observation \ref{between}}
	{	Let  $P= P_1ux_2P_2x_1v_3y_1P_3$ be a path in a canonical path partition $\mathcal{S}$, where $(v_3,u)\in E_F$ and $x_1$ is heavy (see Figure \ref{fig9}). Assume towards contradiction that $x_2\in V_2$. We can then create the following two paths from $P$: $Q_1=x_1P_2x_2$, $Q_2=P_1uv_3y_1P_3$. From the union of  $\{Q_1,Q_2\}$ and $\mathcal{S}\backslash P$ we create a path partition $\mathcal{S}_0$. Note that $\mathcal{S}_0$ is a derived path partition from $\mathcal{S}$. 
		In this derived path partition, $x_1$, which is one of the ``new" end vertices, is heavy. Hence by \hyperref[lem:derived]{\text{Lemma} \ref{lem:derived}}, $\mathcal{S}$ is not canonical, which is a contradiction.
		
		\begin{figure}[H]
			\centering
			\scalebox{.8}{	\between
			}
			\caption{\hyperref[between]{\text{Observation} \ref{between}}}\label{fig9}
		\end{figure}
		
	}
\end{proofing}

Now we can prove \hyperref[lem:dangerous]{\text{Observation} \ref{lem:dangerous}}.

\begin{proofing}{Lemma \ref{lem:dangerous}}
	{Let $\mathcal{S}$ be a canonical path partition. Let $P$ be a path in $\mathcal{S}$. Let $x_1v_3$ be two consecutive vertices in $P$ where $x_1$ is heavy and $v_3$ is dangerous. (The case where the order is $v_3x_1$, can be handled in a similar way by reversing the order of all vertices in $P$.) Let $(v_3,u)\in E_F$ be a free edge that is incident to a vertex $u$ which has a path neighbor in $V_2$ (hence $u\in V_4\cup V_2^b\cup V_3$).
		\hyperref[different]{\text{Observation} \ref{different}} implies that $u$ is on the same path as $v_3$. This proves item \ref{itm:1} in \hyperref[lem:dangerous]{\text{Lemma} \ref{lem:dangerous}}. Thereafter, the combination of \hyperref[samesides]{\text{Observation}  \ref{samesides}} and \hyperref[between]{\text{Observation} \ref{between}} exclude all cases except for the following: $u$ is on the right of $v_3$ and has only one path neighbor in $V_2$, where this path neighbor (that we denote by $x_2$) is on the left of $u$. Consequently, path $P$ can be represented as $P= o_1P_1x_1v_3y_1P_2ux_2P_3o_2$. This proves item \ref{itm:2} in \hyperref[lem:dangerous]{\text{Lemma} \ref{lem:dangerous}}.
		
		\begin{figure}[H]
			\centering
			\scalebox{.8}{	\oppositesides
			}
			\caption{\hyperref[lem:dangerous]{\text{Lemma} \ref{lem:dangerous}}}\label{fig10}
		\end{figure}
		
		The following proposition proves item \ref{itm:3} in \hyperref[lem:dangerous]{\text{Lemma} \ref{lem:dangerous}}.
		
		\begin{prop}\label{nine}
			There is only one balanced edge that is incident to $x_2$. The other end-vertex of this edge is $o_2$.
		\end{prop}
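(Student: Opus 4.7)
Since $x_2 \in V_2$, at least one balanced edge is incident to $x_2$, and since $G$ is simple at most one balanced edge can join $x_2$ to $o_2$. The claim is therefore equivalent to showing that every balanced edge incident to $x_2$ has $o_2$ as its other endpoint. I argue by contradiction: assume that $x_2$ has a balanced edge $(x_2, o')$ with $o' \neq o_2$.

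The plan is to construct, starting from $\mathcal{S}$, an alternative path partition that strictly improves on $\mathcal{S}$ under the canonical ordering (either by having fewer components, violating Property~1, or by having the same number of components but strictly more cycles, violating Property~2). The principal device is Lemma~\ref{lem:derived}: one aims to build a partition $\mathcal{S}_0$ derived from $\mathcal{S}$ in which $x_1$ occurs as one of the two new end-vertices (which must both lie in $V_2$ of $\mathcal{S}$). Because $x_1$ is heavy by the hypothesis of Lemma~\ref{lem:dangerous}, Lemma~\ref{lem:derived} then forces the contradiction.

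The natural construction uses the free edge $(v_3,u)$ together with the hypothesized balanced edge $(x_2,o')$. Deleting the two path edges $(x_1,v_3)$ and $(u,x_2)$ from $P$ splits $P$ into three pieces: the path $o_1 P_1 x_1$ with new end $x_1 \in V_2$; the segment $v_3\, y_1\, P_2\, u$; and the path $x_2 P_3 o_2$ with new end $x_2 \in V_2$. Adding the free edge $(v_3,u)$ closes the middle segment into a cycle, which is treated as a pseudo-path inside $\mathcal{S}_0$, and adding the balanced edge $(x_2,o')$ merges the right-hand path with the component of~$o'$. This reduces the total component count so that $|\mathcal{S}_0| = |\mathcal{S}|+1$, matching the size requirement of the derived-partition definition, while $x_1$ appears as a genuine new end-vertex in $V_2$.

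The main obstacle is the subcase analysis on the location of $o'$ (equal to $o_1$, on another path~$R$, or on a cycle of $\mathcal{S}$), together with the need to verify that in each case the middle pseudo-path and the $(x_2,o')$-attachment indeed assemble into a bona fide derived partition. In particular, the subcase $o' = o_1$ does not fit the generic template and calls for a simpler construction: deleting only $(u,x_2)$ and adding $(x_2,o_1)$ already reroutes $P$ into a single path where $u$ becomes a new end-vertex, and a short further use of $(v_3,u)$ together with one of $x_1$'s (at least three) balanced edges to paths then produces a partition with fewer components or with an extra cycle, contradicting canonicity directly. I expect the combinatorial bookkeeping in these boundary subcases, and the careful verification that the pseudo-path arising from the middle of $P$ meshes correctly with Lemma~\ref{lem:derived}, to be the most delicate part of the proof.
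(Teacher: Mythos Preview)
Your plan to route the argument through Lemma~\ref{lem:derived} does not work as stated, because the partition you build is \emph{not} a derived partition in the sense of the paper. After deleting $(x_1,v_3)$ and $(u,x_2)$ and closing the middle segment into the cycle $v_3\,y_1\,P_2\,u$, you have created a cycle component that is not a cycle of $\mathcal S$, violating the clause ``$\mathcal S$ and $\mathcal S_0$ have the same cycle components''. Moreover, when you attach $x_2$ to $o'$ you absorb the old end-vertex $o'$ into the interior of a path (and, if $o'$ lies on a cycle, you destroy that cycle), so the clause ``all end-vertices of paths in $\mathcal S$ are end-vertices in $\mathcal S_0$'' also fails. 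Thus Lemma~\ref{lem:derived} is simply not applicable, and you would have to argue the reduction by hand.

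More seriously, even a direct argument along your lines breaks down in the subcase where $o'$ lies on a cycle $C$ of $\mathcal S$. Forming the new cycle $v_3\,y_1\,P_2\,u$ while simultaneously opening $C$ (which you must do to attach $x_2P_3o_2$ to it) leaves the total number of cycle components unchanged; after then using a balanced edge of the heavy vertex $x_1$ to merge $o_1P_1x_1$ with another component you land on a partition with $|\mathcal S|$ components and no extra cycles, so neither Property~1 nor Property~2 is contradicted. The paper's proof avoids this trap by exploiting a hypothesis you never use: $y_1$, the second path neighbour of $v_3$, is \emph{moderate}. In the $o'=o_1$ case the paper reroutes $P$ so that $y_1$ becomes an end-vertex and then uses its two balanced edges; in the $o'\in C$ case it uses $y_1$'s guaranteed balanced edge to a path end-vertex to absorb $C$ without having to open any existing cycle. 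Without invoking the moderateness of $y_1$, the cycle subcase cannot be closed.
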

		
		\begin{proofing}{Proposition \ref{nine}}
			{
				Let $o_{x_2}\in V_1$ be a vertex that is incident to $x_2$ through a balanced edge. There are some cases to consider:
				\begin{enumerate}
					\item If $o_{x_2} = o_1$, then we create from $P$ the path $P'=o_2P_3x_2o_1P_1x_1v_{3}uP_2y_1$ in which $y_1$ is an end-vertex of $P'$. Because $y_1$ is moderate, then it has at least two balanced edges. Hence, by the Pigeonhole principle, $y_1$ must be adjacent either to a component $P_{y_1}\neq P$ (that can be a cycle) or to $o_2$.
					\begin{itemize}
						\item
						If $y_1$ is incident to a component $P_{y_1}$, then we can create one path out of the two component $P_{y_1}$ and $P'$, contradicting Property ~1.
						
						\item	If $y_1$ is incident to $o_2$, then $P'$ can be made into a cycle, contradicting Property ~2.
					\end{itemize}
					\item If $o_{x_2}$ lies on a cycle $C_1$, there are three cases we have to exclude. Recall that $y_1$ is moderate and therefore $y_1$ has at least one balanced edge that is incident to a vertex of a path. We denote this vertex by $o_{y_1}$ and the path by $P_{y_1}.$
					
					\begin{itemize}
						\item  If $o_{y_1} = o_2$, then we create from $P$ the path $o_1P_1x_1v_3uP_2y_1o_2P_3x_2C_1$, contradicting Property $1$.
						
						\item  If $o_{y_1}= o_1$, then we use the fact that $x_1$ is heavy and
						hence must have at least one edge to an external path $P_{x_1}$. We then create $o_2P_3x_2C_1$ and $P_{x_1}x_1P_1o_1y_1v_3uP_2$, contradicting Property $1$.
						
						\item If $o_{y_1}$ lies on a path $P_{y_1}\neq P$, then we can create $o_1P_1x_1v_3uP_2y_1P_{y_1}$ and $o_2P_3x_2C_1$, two components out of three, contradicting Property $1$.
					\end{itemize}
					\item If $o_{x_2}$ is an end-vertex of path $P_{x_2}\neq P$, then we create a path partition with one more cycle and the same number of components, contradicting Property $2$. The fact that $x_1$ is heavy and hence has three balanced edges to paths, implies that one of them must be incident to $o_{x_1}\neq o_{x_2},o_1$. The options are as follows:
					\begin{itemize}
						\item If $o_{x_1} = o_2$, then we create from $P$ and $P_{x_2}$ one path $o_1P_1x_1o_2P_3x_2P_{x_2}$ and one cycle $v_3y_1P_2u$, making a path partition with the same number of components and one more cycle, contradicting Property $2$.
						\item If $o_{x_1}$ lies on the path $P_{x_2}$ (but is not $o_{x_2}$), then by connecting $x_1$ to the end-vertex of $P_{x_2}$ we create from $P$ and $P_{x_2}$ one path $o_1P_1x_1P_{x_2}x_2P_3o_2$ and one cycle $v_3y_1P_2u$, contradicting Property $2$.
						\item If $o_{x_1}$ is on a path $P_{x_1}\neq P,P_{x_2}$ then we create from these three paths two paths $o_2P_3x_2P_{x_2}$, $P_{x_1}x_1P_1o_1$ and one cycle $v_3y_1P_2u$, contradicting Property $2$.
					\end{itemize}
				\end{enumerate}
				Hence we conclude that $o_{x_2} = o_2$.
			}
		\end{proofing}

		The following proposition proves item \ref{itm:4} in \hyperref[lem:dangerous]{\text{Lemma} \ref{lem:dangerous}}.
		\begin{prop}\label{nineone}
			Given that $x_2$ has an edge to $o_2$, then only one of $y_1's$ balanced edges is incident to a vertex of a path, and the other end-vertex of this edge is $o_2$.
		\end{prop}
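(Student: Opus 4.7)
The plan is to argue by contradiction: suppose $y_1$ has a balanced edge $(y_1,o_{y_1})$ with $o_{y_1}$ an endpoint of some path in $\mathcal{S}$, and show $o_{y_1}$ must equal $o_2$. Uniqueness then follows because the graph is simple, so $y_1$ has at most one edge to $o_2$; combined with $y_1$ being moderate (hence having at least one balanced edge to a path) this gives exactly one balanced edge to a path, to $o_2$, while the remaining free edges of $y_1$ may only be balanced to cycles. The approach mimics Proposition~\ref{nine}: use the free edges $v_3u$ and $x_2o_2$ (the latter granted by Proposition~\ref{nine}) together with the hypothesized edge $y_1o_{y_1}$ to perform a local reroute of $P$ that either drops the component count or preserves it while strictly increasing the cycle count, contradicting the canonicity of $\mathcal{S}$.

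The easier case is $o_{y_1}\in P_{y_1}$ for some path $P_{y_1}\neq P$. Remove the two path edges $v_3y_1$ and $ux_2$ and add the three free edges $v_3u$, $x_2o_2$, and $y_1o_{y_1}$. A direct degree check (every affected vertex ends with degree exactly~$2$ except $v_3$ which is unchanged as an internal vertex) shows that $P\cup P_{y_1}$ is replaced by a single long path $o_1P_1x_1v_3u\,P_2^{\mathrm{rev}}y_1o_{y_1}\cdots o_{y_1}'$ together with the new cycle $x_2c_1\cdots c_mo_2$. Thus the component count is preserved but a cycle has been created, contradicting Property~2.

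The delicate case is $o_{y_1}=o_1$. Here using only edges inside $P$ together with $y_1o_1$ closes $P$ into \emph{two} cycles, which raises the component count and is not an immediate contradiction. To fix this I would invoke the heaviness of $x_1$: since $x_1$ has at least three balanced edges to path endpoints and $P$ contributes only the two candidates $o_1,o_2$, there exists a path $P_{x_1}\neq P$ with an endpoint $o_{x_1}$ such that $(x_1,o_{x_1})\in E_F$. Now simultaneously remove the three path edges $x_1v_3$, $v_3y_1$, $ux_2$ and add the four free edges $x_1o_{x_1}$, $v_3u$, $y_1o_1$, $x_2o_2$. A degree accounting shows that $v_3$ becomes the unique new endpoint, while $P\cup P_{x_1}$ reassembles as the long path $v_3u\,P_2^{\mathrm{rev}}y_1o_1P_1^{\mathrm{rev}}x_1o_{x_1}\cdots o_{x_1}'$ plus the cycle $x_2c_1\cdots c_mo_2$. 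Again the component count is preserved but a cycle is created, contradicting Property~2.

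The main obstacle I expect is the $o_{y_1}=o_1$ case: the rearrangement that works in the $P_{y_1}\neq P$ case creates two cycles from a single path and therefore costs a component, so it cannot be the same argument. The key insight is to spend the extra component by incorporating an external path $P_{x_1}$, which is available precisely because $x_1$ is heavy; this is why the hypothesis ``$x_1$ is heavy'' (rather than merely ``moderate'') is essential here. Once this case is settled, the contradiction in the remaining case $o_{y_1}\in P_{y_1}$ is straightforward, and the proposition follows by the uniqueness argument described above.
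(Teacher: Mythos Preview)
Your proposal is correct and follows essentially the same approach as the paper: in both cases you form the cycle $x_2P_3o_2$ using the edge $x_2o_2$, and then reroute the remaining vertices into a single path---via $o_1P_1x_1v_3uP_2y_1P_{y_1}$ when $o_{y_1}$ lies on an external path, and via $v_3uP_2y_1o_1P_1x_1P_{x_1}$ (using heaviness of $x_1$ to procure $P_{x_1}$) when $o_{y_1}=o_1$---contradicting Property~2 in each case. Your explicit justification of uniqueness (simplicity plus $y_1$ being moderate) is a welcome addition that the paper leaves implicit.
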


		\begin{proofing}{Proposition \ref{nineone}}
			{ Let $o_{y_1}\in V_1$ be a path vertex that is incident to $y_2$ by a balanced edge. There are two cases to exclude:
				\begin{enumerate}
					\item  $x_1$ is heavy and hence must have at least one edge that is incident to an end-vertex of a path $P_{x_1}\neq P$. If $o_{y_1}=o_1$ then we can create from $P$ and $P_{x_1}$ the path $v_3uP_2y_1o_1P_1x_1P_{x_1}$ and the cycle $x_2P_3o_2$, contradicting Property $2$ \\
					\item If $o_{y_1}$ is an end-vertex of an external path $P_{y_1}$ then from $P$ and $P_{y_1}$  we create  $o_1P_1x_1v_3uP_2y_1P_{y_1}$ and the cycle $x_2P_3o_2$, contradicting Property $2$.
				\end{enumerate}
				Hence $o_{y_1} = o_2$.
			}
		\end{proofing}

	}
\end{proofing}

\subsection{Properties of $V_2$, and proofs of \hyperref[lem:kadjacent]{\text{Lemma} \ref{lem:kadjacent}} and  \hyperref[lem:special_case]{\text{Lemma} \ref{lem:special_case}}} \label{subsec:pro2}

\subsubsection{Proof of \hyperref[lem:kadjacent]{\text{Lemma} \ref{lem:kadjacent}}} \label{subsub:lemma4}

\hyperref[lem:kadjacent]{\text{Lemma} \ref{lem:kadjacent}} is a result of the following two lemmas.

\setcounter{lemma}{7}
\begin{lemma}\label{lem:none_to_dangerous}	
	Let $P=o_1P_1XP_2o_2$  be a path in a canonical path partition where $X = x_1x_2...x_k$ is a sequence of vertices in $V_2^b$. If $X$ has no free edges to dangerous vertices then it can transfer at most $(2+k)\cdot\frac{2}{3}$ points. Consequently, $X$  ends with at least $\frac{1}{3}\cdot k-\frac{4}{3}$ points.
	
\end{lemma}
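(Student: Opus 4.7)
My plan is to bound the weighted count of balanced edges emanating from $X$ by $k+2$, since then the total transfer is at most $(k+2)\cdot\frac{2}{3}$ as required. Because $X$ has no free edges to dangerous vertices, only Rule~\ref{rule:v2_cycle} (to cycle vertices, transfer $\leq \frac{1}{3}$) and Rule~\ref{rule:v2_path} (to path end-vertices, transfer $\frac{2}{3}$) can move points out of $X$. Every balanced edge out of $X$ therefore contributes at most $\frac{2}{3}$ to the total transfer, so the appropriately weighted counting bound is the right target, and the conclusion on the remaining points on $X$ follows immediately: $k - (k+2)\cdot\frac{2}{3} = \frac{k-4}{3}$.

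The main tool is canonicality applied to pairs of consecutive vertices $x_i,x_{i+1}$ in $X$. Since $X \subseteq V_2^b$, each such vertex has at least one balanced edge, and for many combinations of targets of the two edges one can break the path edge $(x_i,x_{i+1})$, re-attach the two halves of $P$ using those two free edges (possibly opening a cycle that lies at the other end of one of them), and obtain a partition with either strictly fewer components (violating Property~1) or the same number of components and strictly more cycles (violating Property~2). This mirrors the rerouting constructions used earlier in the paper (Figures~\ref{example:1} and~\ref{example:2} and the proofs of Lemma~\ref{lem:derived} and Observations~\ref{different}--\ref{between}), and it is the uniform mechanism for ruling out almost all ``incompatible'' configurations.

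Carrying out this case analysis, I expect three structural restrictions to emerge and to do all of the work:
\begin{enumerate}
    \item Balanced edges from $X$ to end-vertices of paths other than $P$ share a single common target $o$, and each $x_i$ contributes at most one such edge, for a total contribution of at most $k$. Any crossing of these external edges with edges to $\{o_1,o_2\}$, to cycles, or to a different external endpoint is ruled out by the rerouting above, because it lets us merge $P$ with the two external components into one fewer component.
    \item Balanced edges from $X$ to $\{o_1,o_2\}$ obey a ``split-point'' pattern: the configuration $(x_i,o_2)$ together with $(x_{i+1},o_1)$ is forbidden, because breaking $(x_i,x_{i+1})$ and using both free edges converts $P$ into a cycle, violating Property~2. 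Hence the prefix of $X$ up to some index targets only $o_1$ and the suffix targets only $o_2$, for a total of at most $k+1$ such edges.
    \item Consecutive $x_i,x_{i+1}$ cannot have balanced edges to different cycles $C_1 \neq C_2$, because opening $C_1$ and $C_2$ at their respective attachment points and attaching them to the two halves of $P$ broken at $(x_i,x_{i+1})$ turns $P+C_1+C_2$ into two paths, a net reduction of one component.
\end{enumerate}
Putting these restrictions together, and observing that any mixed configuration inherits the strongest of them along any consecutive pair, the three categories of balanced edges cannot simultaneously saturate, and the weighted sum of their contributions is at most $\frac{2(k+2)}{3}$.

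The hard part will be the breadth of the case analysis and the interactions among the three categories: mixed configurations where a single $x_i$ carries balanced edges of several different types, and boundary effects at $x_1$ and $x_k$ (which interact with $o_1,o_2,P_1,P_2$), are where the rerouting arguments need the most care so that no canonicality-violating reconfiguration is overlooked and so that no balanced edge is double-counted when combining the bounds from items~(1)--(3).
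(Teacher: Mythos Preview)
Your overall approach matches the paper's: classify balanced edges out of $X$ by target (external path endpoints, $\{o_1,o_2\}$, cycles), use canonicality-based rerouting on consecutive pairs $x_i,x_{i+1}$ to constrain each class, and combine. Items~(1) and~(2) are correct and correspond to the paper's Observations~\ref{pathoutk} and~\ref{crossinginacceptor}/\ref{pathink}.

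The genuine gap is in the cycle case. Your item~(3) only forbids consecutive $x_i,x_{i+1}$ from targeting \emph{different} cycles, and that is not enough to close the bound. Take $k=2$ with both $x_1,x_2$ going to a single $3$-cycle $C$, each with three edges into $C$ plus one edge to $o_1$ (resp.\ $o_2$). Your three items are all satisfied, yet the transfer is $6\cdot\tfrac13+2\cdot\tfrac23=\tfrac{10}{3}>(k+2)\cdot\tfrac23=\tfrac{8}{3}$. What the paper adds is that the very same rerouting, applied with $C_1=C_2=C$ and the two attachment points adjacent along the cycle, forbids consecutive $x_i,x_{i+1}$ from going to \emph{adjacent vertices on the same cycle}; a short case analysis by cycle size $c\in\{3,4,5,6\}$ (Observations~\ref{samecycle2}--\ref{samecyclek}) then shows that any contiguous block $X'\subseteq X$ transfers at most $\tfrac12|X'|$ points into $C$. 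Combined with the directional refinement that if $x_i$ goes to a cycle then $x_{i+1}$ can only go to $o_2$ or to that same cycle (Observation~\ref{cycleout}/\ref{cycleoutk}), the cycle-targeting vertices form a contiguous sub-block and the final estimate $b(X)\le(k-|X'|)\cdot\tfrac23+\tfrac{|X'|}{2}+2\cdot\tfrac23\le(k+2)\cdot\tfrac23$ goes through. Your items (1)--(3) as stated do not sum to this bound; the ``adjacent cycle vertices'' observation and the per-cycle-size counting are the missing ingredients.
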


\begin{lemma}\label{lem:has_to_dangerous}	
	Let $P=o_1P_1XP_2o_2$  be a path in a canonical path partition where $X = x_1x_2...x_k$ is a sequence of vertices in $V_2^b$.  If there is a vertex $x_i\in X$ that has an edge to a dangerous vertex then $X$ transfers at most $(k+1)\cdot\frac{2}{3} +\frac{5}{12}$ points. Consequently, $X$  ends with at least $\frac{1}{3}\cdot k-\frac{13}{12}$ points.
	
\end{lemma}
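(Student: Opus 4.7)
The plan is to exploit the structural constraints that \hyperref[lem:dangerous]{\text{Lemma} \ref{lem:dangerous}} imposes whenever a vertex of $X$ is incident via a free edge to a dangerous $V_3$-vertex, and then to rerun (with a modest sharpening) the pairwise balanced-edge bookkeeping that proves \hyperref[lem:none_to_dangerous]{\text{Lemma} \ref{lem:none_to_dangerous}}.

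First I would apply \hyperref[lem:dangerous]{\text{Lemma} \ref{lem:dangerous}} to the free edge $(x_i,v_3)$ given by the hypothesis. Since $X\subseteq V_2^b$, every $x_i$ has a $V_2$ path neighbor on $P$, so that lemma applies with $u=x_i$. Item~2 forces $x_i$ to have only a single $V_2$ path neighbor, hence $x_i$ must be an endpoint of $X$, i.e.\ $i\in\{1,k\}$. By reversing the orientation of $P$ if necessary I may assume $i=1$. Item~3 then yields the key structural gain: $x_2$ has exactly one balanced edge, and it is directed to the far endpoint $o_2$ of $P$.

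Second I would bound the transfers leaving $X$ vertex by vertex. The vertex $x_1$ transfers $\frac{1}{12}$ on its free edge to $v_3$ via \ref{rule:v4_dangerous} (applicable since $x_1\in V_2^b$ has exactly one $V_2$ path neighbor, namely $x_2$), and at most $\frac{2}{3}$ on each of its three other free edges via \ref{rule:v2_cycle} or \ref{rule:v2_path}, for a total of at most $3\cdot\frac{2}{3}+\frac{1}{12}=\frac{25}{12}$. The vertex $x_2$ transfers at most $\frac{2}{3}$ on its single balanced edge; if $k=2$ its three remaining free edges may each contribute $\frac{1}{12}$ via \ref{rule:v4_dangerous} for a total of $\frac{11}{12}$, while for $k\geq 3$ the vertex $x_2$ has two $V_2$ path neighbors so \ref{rule:v4_dangerous} does not apply to it. For the suffix $x_3,\ldots,x_k$ I would invoke the pairwise argument underlying \hyperref[lem:none_to_dangerous]{\text{Lemma} \ref{lem:none_to_dangerous}}, together with an extra allowance of at most $3\cdot\frac{1}{12}=\frac{1}{4}$ to account for possible \ref{rule:v4_dangerous} transfers at $x_k$. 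Summing these bounds and comparing with $(k+1)\cdot\frac{2}{3}+\frac{5}{12}$ is a routine arithmetic check; the ``consequently'' conclusion of the lemma then follows by subtracting the total transfer from the $k$ starting points of the vertices in $X$.

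The main obstacle is that the suffix $x_3,\ldots,x_k$ is not itself a maximal $V_2^b$-run on any path, so \hyperref[lem:none_to_dangerous]{\text{Lemma} \ref{lem:none_to_dangerous}} cannot be invoked on it as a black box. The natural remedy I would attempt is to apply \hyperref[lem:none_to_dangerous]{\text{Lemma} \ref{lem:none_to_dangerous}} to all of $X$ as a baseline and then tally three corrections against the bound $(k+2)\cdot\frac{2}{3}$: a savings of $\frac{2}{3}-\frac{1}{12}=\frac{7}{12}$ because the $v_3$-edge contributes $\frac{1}{12}$ rather than $\frac{2}{3}$, an additional savings because $x_2$ is restricted to a single balanced edge (slack previously allocated to $x_2$ in the Lemma~\ref{lem:none_to_dangerous} bound), and a cost of at most $\frac{1}{2}$ accounting for the \ref{rule:v4_dangerous} transfers that may appear at the two endpoints of $X$. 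Verifying that these corrections combine to give exactly the promised $\frac{1}{4}$-improvement in the bound, uniformly over all $k\geq 2$ and handling the symmetric case in which $x_k$ also has a free edge to a dangerous vertex, is the delicate technical step.
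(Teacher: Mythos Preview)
Your opening move is correct: \hyperref[lem:dangerous]{Lemma~\ref{lem:dangerous}} forces the vertex with a free edge to a dangerous $V_3$ to be an endpoint of $X$ (so $i\in\{1,k\}$), and item~3 pins $N_b(x_2)=\{o_2\}$. But you stop there, and that is the gap. The paper does not treat the suffix $x_3,\ldots,x_k$ by re-invoking Lemma~\ref{lem:none_to_dangerous}; instead it \emph{propagates} the constraint: since $x_2$ goes to $o_2$ and $x_3$ is its $V_2$ path-neighbour, \hyperref[inacceptor1]{Observation~\ref{inacceptor1}} gives $N_b(x_3)=\{o_2\}$, and inductively $N_b(x_j)=\{o_2\}$ for \emph{every} $j\ge 2$ (this is \hyperref[onlyone]{Observation~\ref{onlyone}}). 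That one line collapses the whole suffix to a contribution of exactly $(k-1)\cdot\frac{2}{3}$ and makes the bookkeeping trivial; your fallback plan of applying Lemma~\ref{lem:none_to_dangerous} as a black box to the suffix and then ``tallying corrections'' would have to recover nearly a full point of slack, and the corrections you list do not do this (for $k=2$ your bound is already $\frac{36}{12}$ against a target of $\frac{29}{12}$).

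There is a second, smaller gap in your treatment of $x_1$: you allow it $3\cdot\frac{2}{3}+\frac{1}{12}=\frac{25}{12}$, but since $x_2$ goes to $o_2$, \hyperref[pathout]{Observation~\ref{pathout}} (contrapositive) forbids $x_1$ from going to any external path end-vertex. Hence $x_1$'s three remaining free edges can reach only $o_1$, $o_2$, and cycles, yielding at most $\frac{2}{3}+\frac{2}{3}+\frac{1}{3}+\frac{1}{12}=\frac{21}{12}$. With the propagation above this gives exactly $(k-1)\cdot\frac{2}{3}+\frac{21}{12}=(k+1)\cdot\frac{2}{3}+\frac{5}{12}$. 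Finally, the symmetric situation you flag as ``delicate'' (both $x_1$ and $x_k$ incident to dangerous vertices) is in fact easy once you have the propagation: it forces $N_b(x_2)=\{o_2\}$ and $N_b(x_{k-1})=\{o_1\}$ simultaneously, which is impossible unless $k=2$, and that case is bounded directly by \hyperref[only]{Observation~\ref{only}}.
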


\setcounter{lemma}{8}

At first we refine our definitions for vertices in $V_2$.

\begin{itemize}
	
	\item We say that $x\in V_2$ \textbf{ goes} to $y$ if the edges $(x,y)\in E$ is a balanced edge.
	\begin{itemize}
		\item We say that $x\in V_2$ \textbf{goes} a path, if $x$ goes to $y\in V_1$ and $y$ is an end vertex of a path.
		\item We say that $x\in V_2$ \textbf{goes} a cycle, if $x$ goes to $y\in V_1$ and $y$ is lies on a cycle.
	\end{itemize}

	\item A vertex $y\in V_2$ in some path $P\in \mathcal{S}$  is an \text{\em inner} if it \textbf{goes} to one of the end-vertices of $P$.
	\item A pair of vertices $x_1,x_2\in V_2$ in some path $ P=o_1P_1x_1P_2x_2P_3o_2\in \mathcal{S}$ will be {\em crossing-inners} if $x_1$ \textbf{goes} to $o_2$ and $x_2$ \textbf{goes} to $o_1$.
	\item A pair of vertices $x_1,x_2\in V_2$ in some path $ P=o_1P_1x_1P_2x_2P_3o_2\in \mathcal{S}$ will be {\em splitting-inners} if $x_1$ \textbf{goes} to $o_1$ and $x_2$ \textbf{goes} to $o_2$.

\end{itemize}

Given a set $X \subset V_2$ we use the following notation:
\begin{itemize}
	
	\item $N_b(X)$ - The set of all vertices in $V_1$ that $X$ \textbf{goes} to.
	
	\item $b(X)$ - the total points that vertices of $X$ transfer by the transition rules using balanced edges (by \ref{rule:v2_cycle} and \ref{rule:v2_path}).
	
\end{itemize}

Hence, in this new notation \hyperref[lem:none_to_dangerous]{\text{Lemma }\ref{lem:none_to_dangerous}} states that $\displaystyle b(X)=\sum_{j=1}^{k}b(x_j)\leq \frac{2}{3}\cdot(2+k)$.
In order to prove the lemma we use some observations on path neighbor vertices in $V_2^b$. In the following observations we consider a path $P=o_1P_1x_1x_2P_2o_2$ in a canonical path partition, in which $x_1,x_2\in V_2^b$ are path neighbors. In some of the observations, we will prove things for $x_1$. These observations will imply the same results for $x_2$, but in a symmetric fashion.

\begin{obs}\label{crossinginacceptor}
	There are no path neighbors that are crossing-inners in a canonical path partition.	
\end{obs}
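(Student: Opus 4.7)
\begin{proofing}{Observation \ref{crossinginacceptor}}
{My plan is to derive a direct contradiction with Property~2 of a canonical path partition by showing that if $x_1,x_2$ were crossing-inners, then the path $P$ could be converted into a cycle component without changing the number of components in the partition. Suppose for contradiction that $P = o_1P_1x_1x_2P_2o_2$ is a path in a canonical path partition $\mathcal{S}$ in which $x_1,x_2\in V_2^b$ are path neighbors and form crossing-inners. By definition, this means that $(x_1,o_2)\in E_F$ and $(x_2,o_1)\in E_F$ are both balanced edges.

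First I would describe the modification explicitly. Remove the path edge $(x_1,x_2)$ from $P$, and add the two free edges $(x_1,o_2)$ and $(x_2,o_1)$ as path edges. The resulting subgraph on the vertices of $P$ is the closed walk $o_1\,P_1\,x_1\,o_2\,P_2^{\mathrm{rev}}\,x_2\,o_1$, where $P_2^{\mathrm{rev}}$ denotes $P_2$ traversed in the opposite direction. The next step is to verify that this is in fact a simple cycle: a short degree check shows that each of $o_1,o_2,x_1,x_2$ has exactly two incident edges in the new structure (each of $o_1$ and $o_2$ gains one incident edge and each of $x_1,x_2$ keeps two), while every other vertex of $P$ retains its two path edges. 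The walk is connected and all internal vertices have degree~$2$, so it is a cycle spanning $V(P)$.

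Replacing the path component $P$ in $\mathcal{S}$ by this cycle (and keeping all other components untouched) yields a new path partition $\mathcal{S}'$ with $|\mathcal{S}'|=|\mathcal{S}|$ and with exactly one more cycle than $\mathcal{S}$. This contradicts Property~2 of the canonical path partition $\mathcal{S}$ (that, conditioned on minimum cardinality, the number of cycles is maximized). Hence no such crossing-inners can exist.

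The only subtle point is ensuring that the proposed modification is a legal path partition, i.e.\ that the edges we insert are genuine edges of $G$ (which holds because they are balanced edges of $\mathcal{S}$) and that the resulting structure is indeed a single cycle rather than, say, two disjoint cycles or a multigraph; this is precisely what the degree/connectivity check above takes care of. No case analysis is needed, since the configuration of crossing-inners is completely symmetric and the transformation is unique.}
\end{proofing}
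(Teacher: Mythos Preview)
Your proof is correct and follows the same idea as the paper's: exhibit the spanning cycle $o_1P_1x_1o_2P_2x_2$ on $V(P)$ using the two crossing-inner edges. The only difference is in the framing of the contradiction: the paper observes directly that the existence of this spanning cycle means $P$ is by definition a cycle component rather than a path component (recall that a component is classified as a \emph{cycle} precisely when its induced subgraph contains a spanning cycle), whereas you route the contradiction through Property~2. Since your $\mathcal{S}'$ has the same vertex partition as $\mathcal{S}$, strictly speaking it has the same number of cycle components under the paper's definition; the cleaner way to phrase your conclusion is simply that the spanning cycle you built contradicts $P$ being a path component in the first place.
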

\begin{figure}[H]
	\centering
	\scalebox{.8}{\crossing
	}
	\caption{\hyperref[crossinginacceptor]{\text{Observation} \ref{crossinginacceptor}} - Crossing-inners}\label{fig12}
\end{figure}

\begin{proofing}{Observation \ref{crossinginacceptor}}
	{ Let $P=o_1P_1x_1x_2P_2o_2$ where $x_1,x_2$ are crossing-inners. Then we create from $P$ the cycle $o_1P_1x_1o_2P_2x_2$, contradicting the assumption that $P$ is a path and not a cycle.		
	}
\end{proofing}

\begin{obs}\label{inacceptor1}
	If $x_1$ \textbf{goes} to $o_{2}$, then $N_b(x_2)= \{o_2\}$.
\end{obs}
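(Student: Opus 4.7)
The plan is to argue by contradiction: assume $x_2$ has a balanced edge to some vertex $o\in V_1$ with $o\neq o_2$, and produce a path partition that either has fewer components than $\mathcal{S}$ or the same number of components but more cycles, contradicting canonicity.

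The key trick is that the chord $(x_1,o_2)$ lets us ``relocate'' the free endpoint of the path. Specifically, starting from $P = o_1 P_1 x_1 x_2 P_2 o_2$, I would delete the path edge $(x_1,x_2)$ and add the balanced edge $(x_1,o_2)$. Traversing $P_1$ from $o_1$ to $x_1$, then the chord to $o_2$, then $P_2$ in reverse down to $x_2$, yields a new path $P^\star = o_1\,P_1\,x_1\,o_2\,\overline{P_2}\,x_2$ on exactly the vertices of $P$, but with endpoints $o_1$ and $x_2$. Now the second balanced edge $(x_2,o)$ is incident to a new endpoint of $P^\star$ and can be used to merge further.

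I would then split into cases on the location of $o$. If $o=o_1$, the chord $(x_2,o_1)$ closes $P^\star$ into a spanning cycle on $V(P)$; this produces a partition with the same size as $\mathcal{S}$ but one additional cycle, violating Property~2. If $o$ is an endpoint of a different path $P'$, then $(x_2,o)$ concatenates $P^\star$ with $P'$ into a single path, reducing the number of components by one and violating Property~1. If $o$ lies on a cycle component $C$, then opening $C$ at $o$ and appending it via $(x_2,o)$ to $P^\star$ fuses the path $P$ and the cycle $C$ into a single path, again reducing the component count and violating Property~1. Since $o\in V_1$ and $V_1$ consists only of cycle vertices and path endpoints, these cases are exhaustive (within $P$ itself, $o\in V_1$ forces $o\in\{o_1,o_2\}$, and $o=o_2$ is excluded by assumption).

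This shows $N_b(x_2)\subseteq\{o_2\}$. Finally, because $x_2\in V_2$, $x_2$ has at least one balanced edge by definition, so $N_b(x_2)=\{o_2\}$. I do not anticipate any serious obstacle: the whole argument is a routine rerouting using a single chord plus an additional absorption step, all in the spirit of \hyperref[lem:derived]{Lemma~\ref{lem:derived}} and \hyperref[crossinginacceptor]{Observation~\ref{crossinginacceptor}} (which already handles the special sub-case $o=o_1$).
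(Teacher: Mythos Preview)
Your proof is correct and follows essentially the same approach as the paper: reroute $P$ via the chord $(x_1,o_2)$ to make $x_2$ an endpoint, then absorb according to where $o$ lies. The paper merges your ``external path'' and ``external cycle'' sub-cases into a single ``external component'' case, and for $o=o_1$ it appeals to Observation~\ref{crossinginacceptor} (phrased as ``$P$ would already be a cycle component'') rather than to Property~2 directly, but these are only cosmetic differences.
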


\begin{proofing}{Observation \ref{inacceptor1}}
	{
		Let $P=o_1P_1x_1x_2P_2o_2$ and $(x_1,o_2)\in E$. Let $o_{x_2}$ be a vertex that $x_2$ \textbf{goes} to. We will show that $o_{x_2} = o_2$. Here too, there are some cases to consider.
		\begin{enumerate}[label=\roman*]
			\item If $o_{x_2}= o_1$, then $x_1$ and $x_2$ are crossing-inners and this contradicts \hyperref[crossinginacceptor]{\text{Observation} \ref{crossinginacceptor}}.
			\item If $o_{x_2}$ lies on a component $P_{x_2}\neq P$, we will create $o_1P_1x_1o_2P_2x_2P_{x_2}$, making a path partition with less components, contradicting Property $1$.
		\end{enumerate}
		Hence $o_{x_2}= o_2$.
		\begin{figure}[H]
			\begin{subfigure}[t]{0.4\columnwidth} \centering \resizebox{\linewidth}{!}
				{	\crossing}
				\caption{$o_{x_2} = o_1$ } \label{fig13:subfig1}
			\end{subfigure}
			\hspace{20mm}
			\begin{subfigure}[t]{0.4\columnwidth} 	 \centering \resizebox{\linewidth}{!}
				{\inacceptor}
				\caption{$P_{x_2} \neq P$ } \label{fig13:subfig2}
			\end{subfigure}
			\caption{\hyperref[inacceptor1]{\text{Observation} \ref{inacceptor1}}}\label{fig13}
		\end{figure}
	}
\end{proofing}

\begin{obs}\label{inacceptor2}
	If $x_1$ \textbf{goes} to $o_{1}$, then $x_2$ can \textbf{go} either to end-vertices of $P$, to vertices of cycles, or to both.
\end{obs}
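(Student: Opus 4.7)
My plan is to proceed by contradiction, in the same style as Observation~\ref{inacceptor1}. Assume the premise, so that $x_1\in V_2^b$ reaches $o_1$ via a balanced edge, and write $P=o_1P_1x_1x_2P_2o_2$. The statement of the observation forbids exactly one destination type for a balanced edge of $x_2$: an end-vertex $o_{x_2}$ of some path $P_{x_2}\neq P$. (Balanced edges to $o_1$, $o_2$, or to cycle vertices are all permitted.) Hence it suffices to assume that such an edge $(x_2,o_{x_2})\in E_F$ exists, with $P_{x_2}=o_{x_2}Ro'_{x_2}$, and to contradict canonicity.

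The rearrangement I have in mind uses both balanced edges simultaneously. First observe that $|P_1|\geq 1$, for otherwise $(x_1,o_1)$ would coincide with a path edge of $P$ rather than being a free edge. Now I cut the path edge between $x_1$ and $x_2$: this produces the segment $o_1P_1x_1$, which I close into a cycle $C$ using $(x_1,o_1)$, and the segment $x_2P_2o_2$, which I splice onto $P_{x_2}$ along $(x_2,o_{x_2})$ to form the new path $P'=o'_{x_2}R^{-1}o_{x_2}x_2P_2o_2$.

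The resulting partition $(\mathcal{S}\setminus\{P,P_{x_2}\})\cup\{C,P'\}$ replaces two paths by one cycle and one path, so its number of components equals $|\mathcal{S}|$ while it contains exactly one more cycle than $\mathcal{S}$. This contradicts Property~2 of canonicity, ruling out the forbidden destination for $x_2$, and the observation follows.

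I do not foresee a serious obstacle: the argument is a direct instance of the ``two balanced edges yield one cycle plus one path'' pattern already employed in Observations~\ref{crossinginacceptor} and~\ref{inacceptor1}. The only point requiring any care is the verification that $|P_1|\geq 1$, which is automatic from the assumption that $(x_1,o_1)$ is a free rather than a path edge, so that the cycle $C$ is well-defined and genuinely new.
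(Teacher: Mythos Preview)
Your proof is correct and follows essentially the same route as the paper: assume $x_2$ goes to an end-vertex of an external path, cut the path edge $x_1x_2$, close $o_1P_1x_1$ into a cycle via $(x_1,o_1)$, and merge $x_2P_2o_2$ with $P_{x_2}$, yielding the same number of components but one more cycle, contradicting Property~2. The only addition is your explicit check that $|P_1|\ge 1$ so the cycle has at least three vertices, a detail the paper leaves implicit.
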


\begin{proofing}{Observation \ref{inacceptor2}}
	{
		Let $P=o_1P_1x_1x_2P_2o_2$, and suppose that $x_1$ \textbf{goes} to $o_1$. Assume in contradiction that $x_2$ \textbf{goes} to a vertex on an external path $P_{x_2}\neq P$. Then a path $o_2P_2x_2P_{x_2}$ and a cycle $x_1P_1o_1$ will be created, making a path partition with the same number of components and one more cycle, contradicting Property $2$.
	}
\end{proofing}

\begin{obs}\label{cycleout}
	If $x_1$ \textbf{goes} to a cycle $C_1$, then $x_2$ \textbf{goes} to $o_2$ or to a vertex on $C_1$ or to both.	
\end{obs}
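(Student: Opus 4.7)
\begin{proofing}{Observation \ref{cycleout}}
{The plan is to argue by contradiction: assume that $x_2$ has a balanced edge to some vertex $o_{x_2}$ that lies neither on $C_1$ nor equals $o_2$, and exhibit a modification of $\mathcal{S}$ that contradicts either Property~1 or Property~2 of the canonical path partition. The basic surgical move, familiar from the preceding observations in this subsection, is to delete the path edge $(x_1,x_2)$ (splitting $P$ into $P' = o_1 P_1 x_1$ and $P'' = x_2 P_2 o_2$), delete one cycle edge of $C_1$ incident to the vertex $c$ where the balanced edge $(x_1,c)$ lands (opening $C_1$ into a path with $c$ as one endpoint), and then use the balanced edges $(x_1,c)$ and $(x_2,o_{x_2})$ to reconnect the pieces. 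This operation removes two edges and adds two edges, so it produces a legal path partition; the question is how many components it has.

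I would then run the case analysis on the location of $o_{x_2}$. If $o_{x_2}$ lies on an external path $P_{x_2} \neq P$ (including $P_{x_2} = P_1$), then attaching the opened $C_1$ to $P'$ via $(x_1,c)$ and attaching $P_{x_2}$ to $P''$ via $(x_2,o_{x_2})$ turns the three components $P,\, C_1,\, P_{x_2}$ into two paths, contradicting Property~1. If $o_{x_2}$ lies on a cycle $C_2 \neq C_1$, then opening $C_2$ at $o_{x_2}$ and performing the same attachments merges $P,\, C_1,\, C_2$ into two paths, again contradicting Property~1. If $o_{x_2} = o_1$, then the same two attachments, plus the fact that $o_1$ is already an endpoint of $P'$, merge $P$ and $C_1$ into the single path $c_t \ldots c\, x_1 \ldots o_1 x_2 \ldots o_2$ (where $c_t$ is the other endpoint of the opened $C_1$), reducing two components to one and again contradicting Property~1.

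The remaining cases, $o_{x_2} \in V(C_1)$ and $o_{x_2} = o_2$, are exactly those allowed by the statement, so no contradiction is sought there. Since every balanced edge of $x_2$ that leads outside $V(C_1) \cup \{o_2\}$ produces a contradiction, all balanced edges of $x_2$ must land inside $V(C_1) \cup \{o_2\}$, which is the conclusion. I do not anticipate a real obstacle here: the argument is a direct instance of the same two-edge swap used in Observation~\ref{inacceptor1} and Observation~\ref{inacceptor2}, with the only new wrinkle being the bookkeeping of opening the cycle $C_1$ (and, in one sub-case, a second cycle $C_2$) so that the resulting object is a simple path rather than a figure-eight.}
\end{proofing}
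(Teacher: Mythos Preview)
Your proof is correct and follows essentially the same case-analysis as the paper's. The one noteworthy difference is the sub-case $o_{x_2}=o_1$: the paper disposes of it by invoking Observation~\ref{inacceptor1} (reversed) to conclude $N_b(x_1)\subseteq\{o_1\}$, contradicting the assumption that $x_1$ goes to $C_1$; you instead give a direct construction merging $P$ and $C_1$ into a single path, which also works and is self-contained. One small slip: the parenthetical ``(including $P_{x_2}=P_1$)'' is confusing, since $P_1$ is a subpath of $P$, not an external component---just drop it.
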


\begin{proofing}{Observation \ref{cycleout}}
	{
		Let $P=o_1P_1x_1x_2P_2o_2$ and suppose that $x_1$ \textbf{goes} to the vertex $o_{x_1}$ which lies on a cycle $C_1$. Let $o_{x_2}$ be the vertex $x_2$ \textbf{goes} to. We consider the following cases:
		\begin{enumerate}
			\item If $o_{x_2} = o_1$, then applying  \hyperref[inacceptor1]{\text{Observation} \ref{inacceptor1}} on $x_2$ implies that $N_b(x_1)= \{o_2\}$, which is a contradiction.
			\item If $o_{x_2}$ is on $P_{x_2} \neq C_1 $, then we can create two paths $o_2P_2x_2P_{x_2}$ and $o_1P_1x_1C_1$  making a new path partition of size $|\mathcal{S}|-1$, contradiction to Property~1.
		\end{enumerate}
	}
\end{proofing}
\begin{obs}\label{pathout}
	If $x_1$ \textbf{goes} to a vertex $o_{x_1}$, which is an end-vertex of a path $P_{x_1}\neq P$, then $N_b(x_2)=\{o_{x_1}\}$.
\end{obs}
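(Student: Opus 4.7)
The plan is to enumerate all possibilities for a vertex $o_{x_2} \in N_b(x_2)$ and show that only $o_{x_2} = o_{x_1}$ survives. Since $x_2 \in V_2^b \subseteq V_2$, it has at least one balanced edge, so it suffices to show that every such $o_{x_2}$ must coincide with $o_{x_1}$. The five cases to rule out are: (i) $o_{x_2}=o_1$, (ii) $o_{x_2}=o_2$, (iii) $o_{x_2}$ lies on a cycle, (iv) $o_{x_2}$ is an end-vertex of some external path $P_{x_2}$ with $P_{x_2}\neq P_{x_1}$, and (v) $o_{x_2}$ is the other end-vertex of $P_{x_1}$.

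Cases (i)--(iii) I would dispose of using the previously proved observations applied with the roles of $x_1$ and $x_2$ swapped (which is legitimate by reversing the orientation of $P$). Concretely, if $o_{x_2}=o_1$ then the swapped form of \hyperref[inacceptor1]{\text{Observation} \ref{inacceptor1}} forces $N_b(x_1)=\{o_1\}$, contradicting $o_{x_1}\in V(P_{x_1})$ with $P_{x_1}\neq P$. If $o_{x_2}=o_2$, the swapped form of \hyperref[inacceptor2]{\text{Observation} \ref{inacceptor2}} restricts $N_b(x_1)$ to end-vertices of $P$ or cycle vertices, again contradicting the location of $o_{x_1}$. If $o_{x_2}$ lies on a cycle $C$, then by the swapped form of \hyperref[cycleout]{\text{Observation} \ref{cycleout}} we would need $N_b(x_1)\subseteq\{o_1\}\cup V(C)$, a contradiction.

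Cases (iv) and (v) require explicit reconfigurations that violate Property 1 of canonicity. For case (iv): delete the path edge $(x_1,x_2)$ from $P$ and splice the two halves onto $P_{x_1}$ and $P_{x_2}$ using the balanced edges $(x_1,o_{x_1})$ and $(x_2,o_{x_2})$, producing the two paths $P_{x_1}x_1 P_1 o_1$ and $o_2 P_2 x_2 P_{x_2}$. This replaces the three components $P,P_{x_1},P_{x_2}$ by two, contradicting Property 1. For case (v), let $o_{x_1}'$ denote the other end-vertex of $P_{x_1}$ and assume $o_{x_2}=o_{x_1}'$; delete $(x_1,x_2)$ and use both balanced edges to traverse $P_{x_1}$ from $o_{x_1}$ to $o_{x_1}'$, producing the single path $o_1 P_1 x_1 P_{x_1} x_2 P_2 o_2$ and thereby replacing two components by one, again contradicting Property 1.

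I do not expect any deep obstacle here; the argument is a straightforward case analysis of the Magnant--Martin reconfiguration type. The main subtlety I will need to handle carefully is the symmetry invocation in cases (i)--(iii): I must confirm that each swapped observation is valid even though the statements of \hyperref[inacceptor1]{\text{Observation} \ref{inacceptor1}} through \hyperref[cycleout]{\text{Observation} \ref{cycleout}} are phrased asymmetrically in terms of $x_1$ and $x_2$, which reduces to the fact that reversing the orientation of $P$ produces an isomorphic configuration in the canonical path partition. The second point requiring minor care is case (v), to ensure that the swap through $P_{x_1}$ indeed yields a simple path (which it does, since $P_{x_1}$ and $P$ are vertex-disjoint components).
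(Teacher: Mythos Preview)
Your proposal is correct and follows essentially the same case-analysis approach as the paper. The only minor difference is in how the sub-cases are dispatched: for $o_{x_2}=o_2$ and $o_{x_2}$ on a cycle, you invoke the swapped forms of Observations~\ref{inacceptor2} and~\ref{cycleout}, whereas the paper handles these by direct reconfigurations (the former via Property~2, creating the cycle $x_2P_2o_2$ and the path $o_1P_1x_1P_{x_1}$; the latter folded into the external-component case via Property~1). Your symmetry invocations are valid, and cases (iv) and (v) match the paper's reconfigurations exactly.
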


\begin{proofing}{Observation \ref{pathout}}
	{
		Let $P=o_1P_1x_1x_2P_2o_2$ and suppose that $x_1$ \textbf{goes}  to $o_{x_1}$, which is an end-vertex of the path $P_{x_1}\neq P$. Let $o_{x_2}$ be a vertex that $x_2$ \textbf{goes} to. We will show that $o_{x_1}=o_{x_2}$ by excluding all other cases.
		\begin{enumerate}
			\item If $o_{x_2}=o_2$, then we create a cycle $x_2P_2o_2$ and a path $o_1P_1x_1P_{x_1}$, making a new path partition of size  $|\mathcal{S}|$ with one more cycle (because $P_{x_1}$ is not a cycle), contradicting Property~2.
			\item If $o_{x_2}=o_1$ then applying  \hyperref[inacceptor1]{\text{Observation} \ref{inacceptor1}} on $x_2$ implies that $N_b(x_1)= \{o_2\}$, which is a contradiction.
			\item If $o_{x_2}$ is the other end-vertex of path $P_{x_1}$, then we concatenate $P$ and $P_{x_1}$ into one path  $o_1P_2x_2P_{x_1}x_1P_1o_1$, contradicting Property~1.
			\item If $o_{x_2}$ is on $P_{x_2} \neq P_{x_1}$ (it can be a cycle), we then create two paths $o_2P_2x_2P_{x_2}$,  $o_1P_1x_1P_{x_1}$, making a new path partition of size  $|\mathcal{S}|-1$, contradicting Property~1.
		\end{enumerate}
	}
\end{proofing}

Now we will generalize our observations for any $k\geq 2$ vertices.
In the following observations $P=o_1P_1XP_2o_2$ is a path in a canonical path partition where $X=x_1x_2...x_k$ is a sequence of $k\geq 2$ vertices in $V_2^b$.

\begin{obs}\label{pathoutk}
	If $x_i$ \textbf{goes} to a vertex $o_{x_i}$ that is an end-vertex of a path $P_{x_i}\neq P$, then $N_b(X)=\{o_{x_i}\}$. Consequently, $b(X)=k\cdot\frac{2}{3}$.
\end{obs}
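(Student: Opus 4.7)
The plan is to reduce Observation \ref{pathoutk} to iterated applications of Observation \ref{pathout} along the consecutive pairs of $X$. Since all vertices of $X$ lie in $V_2^b$, every adjacent pair $x_j x_{j+1}$ in $X$ is a pair of path-neighbors both belonging to $V_2^b$, and hence Observation \ref{pathout} applies to it. Although that observation is stated with the \emph{left} vertex of the pair supplying the balanced edge to the external path, it is clearly symmetric under reversing the orientation of $P$, so the same conclusion holds when the right vertex of the pair carries the edge.

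First I would fix the index $i$ for which $x_i$ goes to the end-vertex $o_{x_i}$ of some external path $P_{x_i}\neq P$. Applying Observation \ref{pathout} to the pair $(x_i,x_{i+1})$ (when $i<k$) forces $N_b(x_{i+1})=\{o_{x_i}\}$; in particular $x_{i+1}$ itself now has a balanced edge to the same external path-end $o_{x_i}$, so the hypotheses of the observation are met again for $(x_{i+1},x_{i+2})$. Iterating yields $N_b(x_j)=\{o_{x_i}\}$ for every $j\geq i$. The symmetric iteration, run leftward on the pairs $(x_{i-1},x_i),(x_{i-2},x_{i-1}),\dots$, gives the same equality for $j\leq i$. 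Combining, $N_b(X)=\{o_{x_i}\}$.

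For the consequence, since each $x_j\in V_2^b\subset V_2$ has at least one balanced edge by definition of $V_2$, and since $N_b(x_j)\subseteq\{o_{x_i}\}$ in a simple graph, each $x_j$ has exactly one balanced edge, namely to $o_{x_i}$, which is an end-vertex of a path. Rule \ref{rule:v2_path} therefore transfers $\frac{2}{3}$ points across this edge, and no other free edge of $x_j$ contributes to $b(x_j)$. Summing over $j=1,\dots,k$ gives $b(X)=k\cdot\frac{2}{3}$, as claimed. There is no substantial obstacle here; the only subtle point is the leftward propagation, which relies on the symmetric form of Observation \ref{pathout} obtained by reversing the orientation of $P$.
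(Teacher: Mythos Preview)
Your argument is correct and follows essentially the same inductive strategy as the paper: propagate Observation~\ref{pathout} along consecutive pairs of $X$ in both directions. One small point worth tightening: your forward and backward iterations give $N_b(x_j)=\{o_{x_i}\}$ for all $j\neq i$, but not directly for $x_i$ itself (your statement ``for every $j\geq i$'' overshoots what the forward pass alone yields). The paper closes this by observing that once a neighbor of $x_i$ is known to go only to $o_{x_i}$, one more application of Observation~\ref{pathout} from that neighbor back to $x_i$ forces $N_b(x_i)=\{o_{x_i}\}$ as well.
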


\begin{proofing}{Observation \ref{pathoutk}} {
		For $o_{x_i}$ as above, \hyperref[pathout]{\text{Observation} \ref{pathout}} implies that the two path-neighbors of $x_i$ must \textbf{go} to $o_{x_i}$, which in turn implies that $x_i$ goes only to $o_{x_i}$. By induction we get that $N_b(X)=\{o_{x_i}\}$. Hence $b(X)=k\cdot\frac{2}{3}$.
	}
\end{proofing}

\begin{obs}\label{pathink}
	If $x_i$ \textbf{goes} to $o_2$ then $\forall j>i, N_b(x_j)=\{o_2\}$.
	If $x_i$ \textbf{goes} to $o_1$ then $\forall j<i, N_b(x_j)=\{o_1\}$.
	
\end{obs}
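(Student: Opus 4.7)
The plan is to prove both halves of the observation by a short induction on $|j-i|$, using \hyperref[inacceptor1]{\text{Observation} \ref{inacceptor1}} as simultaneously the base case and the inductive step. For the first statement, I would start from the hypothesis that $x_i$ goes to $o_2$. The pair of consecutive vertices $x_i,x_{i+1}\in V_2^b$ sits inside $P$ in exactly the left-to-right configuration demanded by \hyperref[inacceptor1]{\text{Observation} \ref{inacceptor1}} (with $o_1$ to the left of $x_i$ and $o_2$ to the right of $x_{i+1}$), so applying that observation immediately gives $N_b(x_{i+1})=\{o_2\}$. In particular $x_{i+1}$ itself goes to $o_2$, which lets me reapply the observation to the pair $x_{i+1},x_{i+2}$, and then to $x_{i+2},x_{i+3}$, and so on. A routine induction on $j-i$ then delivers $N_b(x_j)=\{o_2\}$ for every $j>i$.

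For the second statement I would argue symmetrically. Reversing the left-to-right orientation of $P$ converts \hyperref[inacceptor1]{\text{Observation} \ref{inacceptor1}} into the dual statement that if the rightmost of two consecutive $V_2^b$-neighbors goes to $o_1$, then the leftmost of the pair satisfies $N_b=\{o_1\}$. Starting from the hypothesis that $x_i$ goes to $o_1$ and applying this dual version to the pair $x_{i-1},x_i$, then to $x_{i-2},x_{i-1}$, and so on, an analogous induction yields $N_b(x_j)=\{o_1\}$ for every $j<i$.

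I do not anticipate any real obstacle. The only thing to verify is that the hypotheses of \hyperref[inacceptor1]{\text{Observation} \ref{inacceptor1}} continue to hold at each inductive step, but this is immediate from the setup: every vertex of $X$ is in $V_2^b$ by assumption, consecutive members of $X$ are genuine path neighbors inside the canonical partition $\mathcal{S}$, and the reference end-vertices $o_1,o_2$ of $P$ do not change as the induction proceeds along $X$.
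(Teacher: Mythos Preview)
Your proposal is correct and follows essentially the same approach as the paper's own proof: both argue by induction along the sequence $X$, invoking \hyperref[inacceptor1]{\text{Observation} \ref{inacceptor1}} at each step to propagate the conclusion $N_b(x_j)=\{o_2\}$ (respectively $\{o_1\}$) from one vertex to its neighbor, and both handle the second statement by symmetry.
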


\begin{proofing}{Observation \ref{pathink}} {
		If $x_i$ goes to $o_2$ then from \hyperref[inacceptor1]{\text{Observation} \ref{inacceptor1}} $N_b(x_{i+1})=\{o_2\}$ and by induction $\forall j>i, N_b(x_j)=\{o_2\}$.
		Likewise, if $x_i$ \textbf{goes} to $o_1$, then $\forall j<i, N_b(x_j)=\{o_1\}$.
		
	}
\end{proofing}

\begin{obs}\label{cycleoutk}
	If $x_i$ \textbf{goes} to a cycle, then:
	\begin{enumerate}
		\item $\forall j>i$, $x_j$ \textbf{goes} to the same cycle as $x_i$, or \textbf{goes} to $o_2$, or \textbf{goes} to both.
		\item $\forall j<i$, $x_j$ \textbf{goes} to the same cycle as $x_i$, or \textbf{goes} to $o_1$, or \textbf{goes} to both.
	\end{enumerate}
\end{obs}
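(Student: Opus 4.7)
The plan is to prove both parts by a short induction on $|j-i|$, using the already-established pairwise Observation \ref{cycleout} as the base case and Observation \ref{pathink} to propagate the ``goes to an end-vertex'' case along $X$. By the $o_1 \leftrightarrow o_2$ symmetry obtained by reversing the orientation of $P$ (and with it the listing $x_1, x_2, \dots, x_k$), it suffices to prove part~1; part~2 then follows by applying the same argument to the reversed path $o_2 P_2 x_k x_{k-1} \cdots x_1 P_1 o_1$, in which $o_1$ plays the role of $o_2$.

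For the base case $j=i+1$ of part~1, note that $x_i$ and $x_{i+1}$ are consecutive in $X$, hence are path-neighbors and both lie in $V_2^b$. Since $x_i$ goes to the cycle $C_1$, Observation \ref{cycleout} applied to the ordered pair $(x_i, x_{i+1})$ in place of $(x_1, x_2)$ yields that $x_{i+1}$ goes to $o_2$, to a vertex of $C_1$, or to both, which is exactly the conclusion.

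For the inductive step, fix $j > i+1$ and assume the claim holds for $x_{j-1}$. If $x_{j-1}$ goes to $o_2$, then Observation \ref{pathink} (with $x_{j-1}$ in the role of ``$x_i$ goes to $o_2$'') forces $N_b(x_j) = \{o_2\}$, and we are done. Otherwise $x_{j-1}$ goes to $C_1$ (and possibly also to $o_2$); again $x_{j-1}$ and $x_j$ are consecutive $V_2^b$ vertices, so Observation \ref{cycleout} applied to $(x_{j-1}, x_j)$ shows that $x_j$ goes to $o_2$, to $C_1$, or to both, completing the induction.

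I do not anticipate any real obstacle: all of the structural work is already hidden inside Observations \ref{cycleout} and \ref{pathink}, and the only task is to string them together along $X$. The one thing to be careful about is the direction of Observation \ref{cycleout}, which is stated with $x_1$ on the left and $x_2$ on the right; reversing $P$ (so that $o_1$ and $o_2$ swap roles) gives the mirror statement needed for part~2, and this is automatic from the symmetry of the proof of Observation \ref{cycleout} itself.
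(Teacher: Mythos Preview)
Your proposal is correct and follows essentially the same approach as the paper: induction along $X$, with Observation~\ref{cycleout} handling the step whenever the previous vertex goes to the cycle, and the ``goes to $o_2$'' case propagated forward (you cite Observation~\ref{pathink}, the paper cites the pairwise Observation~\ref{inacceptor1}, which is the same content). The symmetry argument for part~2 matches the paper's ``similar proof'' remark.
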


\begin{proofing}{Observation \ref{cycleoutk}} {
		Let $o_{x_i}$ be a vertex on a cycle that $x_i$ \textbf{goes} to. \hyperref[cycleout]{\text{Observation} \ref{cycleout}} $x_{i+1}$ implies then that $x_{i+1}$ must \textbf{go} to the same cycle, or \textbf{go} to $o_2$. By induction, using either  \hyperref[cycleout]{\text{Observation} \ref{cycleout}} or \hyperref[inacceptor1]{\text{Observation} \ref{inacceptor1}} we get that $\forall j>i$, $x_j$ must \textbf{go} either to the same cycle, or to $o_2$.
		A similar proof can be given to $x_j$ with $j<i$.	
	}
\end{proofing}

Let $X'\subseteq X$ be a sequence of adjacent nodes in $X$ that all \textbf{go} to the same cycle $C$.
The following two observations show that on average, the number of points each vertex in $X'$ transfers to $C$ is at most $\frac{1}{2}$.

\begin{obs}\label{samecycle2}
	If $X'= x_1, x_2$  \textbf{go} to the same cycle $C$, then they cannot \textbf{go} to neighboring vertices along the cycle. Moreover, $X'$ transfer to $C$ (by \ref{rule:v2_cycle}) at most one point ($2\cdot\frac{1}{2}$).
\end{obs}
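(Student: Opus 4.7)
The plan is to prove the two parts separately, in each case by showing that failure of the claim forces a local modification that contradicts canonicity.

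For the first part, suppose $x_1$ has a balanced edge to $u_1\in V(C)$ and $x_2$ has a balanced edge to $u_2\in V(C)$ with $(u_1,u_2)$ an edge of $C$. I would delete the path edge $(x_1,x_2)$ and the cycle edge $(u_1,u_2)$, and insert the free edges $(x_1,u_1)$ and $(x_2,u_2)$. The cycle $C$ minus $(u_1,u_2)$ becomes a $u_1$--$u_2$ path, and composed with the two pieces of $P$ it yields a single path $o_1P_1x_1u_1\cdots u_2x_2P_2o_2$. Replacing $P$ and $C$ in $\mathcal{S}$ by this new path produces a partition of size $|\mathcal{S}|-1$, contradicting Property~1.

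For the second part, first observe that Rule~\ref{rule:v2_cycle} transfers points only when $|C|\leq 6$, so assume $|C|\leq 6$. Let $A_j=N(x_j)\cap V(C)$ for $j=1,2$; then $X'$ transfers exactly $(|A_1|+|A_2|)/|C|$ points to $C$, so it suffices to prove $|A_1|+|A_2|\leq |C|$. Write $B_1=A_1\setminus A_2$, $B_2=A_2\setminus A_1$, $B_3=A_1\cap A_2$. The first part says that whenever $u\in A_1$ and $w\in A_2$ with $u\neq w$, $(u,w)$ is not a cycle edge. Applied to pairs among these three sets, this gives: no cycle edge joins $B_1$ and $B_2$; no cycle edge is incident to $B_3$ unless its other endpoint lies outside $A_1\cup A_2$; and in particular $B_3$ is independent in $C$. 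Hence the cycle-neighborhood $N_C(B_3)$ is disjoint from $B_1\cup B_2\cup B_3$, so $B_1\cup B_2\cup B_3\cup N_C(B_3)\subseteq V(C)$ yields $|B_1|+|B_2|+|B_3|+|N_C(B_3)|\leq |C|$.

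A double-count $2|B_3|=\sum_{v\in B_3}\deg_C(v)\leq 2|N_C(B_3)|$ gives $|N_C(B_3)|\geq |B_3|$. Substituting,
\[
|A_1|+|A_2|=|B_1|+|B_2|+2|B_3|\leq \bigl(|C|-|B_3|-|N_C(B_3)|\bigr)+2|B_3|=|C|+|B_3|-|N_C(B_3)|\leq |C|,
\]
as required. The only delicate step is carefully deriving the three restrictions on $B_1,B_2,B_3$ from the non-adjacency assertion; the subsequent counting is elementary, and the bound is tight (e.g., on $C_6$ with $A_1=A_2=\{c_1,c_3,c_5\}$, where the transfer equals exactly~$1$).
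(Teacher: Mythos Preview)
Your proof is correct. The first part is identical to the paper's argument. For the second part, the paper proceeds by an explicit case analysis on the cycle length $c\in\{3,4,5,6\}$, in each case bounding by hand how many cycle vertices can receive edges from $X'$ and how many such edges there can be (obtaining the slightly sharper values $\tfrac{2}{3},1,\tfrac{4}{5},1$ respectively). Your uniform counting argument, showing $|A_1|+|A_2|\le |C|$ via the decomposition into $B_1,B_2,B_3$ and the inequality $|N_C(B_3)|\ge |B_3|$, is a genuinely different and cleaner route: it avoids case-checking, works for all cycle lengths at once, and yields exactly the bound of one point that the lemma requires. The paper's per-case bounds are occasionally tighter but are never used elsewhere, so nothing is lost.
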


\begin{proofing}{Observation \ref{samecycle2}}
	{
		Assume that $X'= x_1, x_2$ \textbf{go} to vertices $o_{x_1}$ and $o_{x_2}$, that lie on the same cycle, $C$.
		If $(o_{x_2},o_{x_1})\in E$ is an edge that is part of the cycle, we open the cycle into a path with end-vertices $o_{x_1}$ and $o_{x_2}$. Thereafter, we crate the path  $o_2P_2x_2Cx_1P_1o_1$ out of $P$ and  $C$. Consequently, we have a new path partition of size $|\mathcal{S}|-1$, contradicting Property~1.

		Let $c$ be the size of cycle $C$.
		\begin{enumerate}
			\item If $c=3$ then only one vertex in $C$ can have an edges incident to $X'$, because any two vertices on $C$ are neighbors.
			Hence the number of edges that can transfer points from $X'$ to $C$ is at most $2$.  Consequently the number of points transferred to $C$ is at most $\frac{2}{3}$.		
			\item  If $c=4$ then at most two vertices in $C$ can have edges incident to $X'$. 	Hence the number of edges that can transfer points from $X'$ to $C$ is at most $4$.  Consequently the number of points transferred to $C$ is at most $1$.	
			

			\item  If $c=5$ then at most three vertices in $C$ have edges incident to $X'$, and the number of edges that can transfer points from $X'$ to $C$ is at most $4$.  Consequently the number of points transferred to $C$ is at most $\frac{4}{5}$.

			
			\begin{figure}[H]
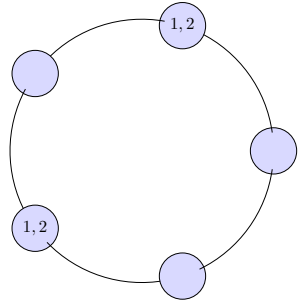
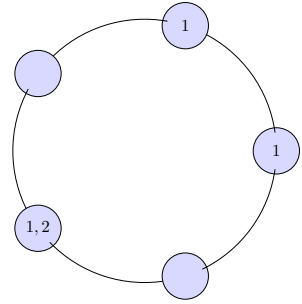

				\centering
				\begin{subfigure}[t]{0.3\columnwidth}  \centering \resizebox{0.8\linewidth}{!}
					{\fivecycletwo}
					\caption{\centering two vertices in $C$ have edges incident to $X'$ } \label{fig100:subfig1}
				\end{subfigure}
				\hspace{10mm}
				\begin{subfigure}[t]{0.3\columnwidth}
					\centering \resizebox{0.8\linewidth}{!}
					{\fivecyclethree}
					\caption{\centering three vertices in $C$ have edges incident to $X'$} \label{fig100:subfig2}
				\end{subfigure}
				\caption{The two cases of $c=5$ }\label{fig100}
			\end{figure}
			
			\item  If $c=6$ then at most four vertices in $C$ have edges incident to $X'$, and the number of edges that can transfer points from $X'$ to $C$ is at most $6$.  Consequently the number of points transferred to $C$ is at most $1$.

			
			\begin{figure}[H]
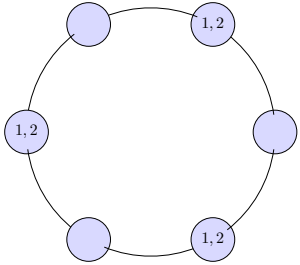
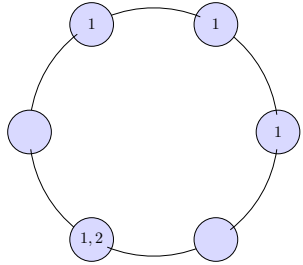

				\centering
				\begin{subfigure}[t]{0.3\columnwidth}  \centering \resizebox{0.8\linewidth}{!}
					{\sixcyclethree}
					\caption{\centering three vertices in $C$ have edges incident to $X'$ } \label{fig201:subfig1}
				\end{subfigure}
				\hspace{10mm}
				\begin{subfigure}[t]{0.3\columnwidth}
					\centering \resizebox{0.8\linewidth}{!}
					{\sixcyclefour}
					\caption{\centering four vertices in $C$ have edges incident to $X'$ } \label{fig201:subfig2}
				\end{subfigure}
				\caption{The two cases of $c=6$ }\label{fig201}
			\end{figure}

		\end{enumerate}

	}
\end{proofing}

\begin{obs}\label{samecycle3}
	If $X'= x_1,x_2,x_3$ in $X$ \textbf{go} to the same cycle $C$, then  $X'$ transfers to $C$ (by \ref{rule:v2_cycle}) at most $\frac{3}{2}$ points ($3\cdot\frac{1}{2}$).
	
\end{obs}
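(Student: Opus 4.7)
The plan is a case analysis on the cycle size $c = |C|$. Because rule~\ref{rule:v2_cycle} transfers no points to cycles of size larger than $6$, it suffices to handle $c \in \{3,4,5,6\}$. Setting $Y_j := N(x_j) \cap C$ for $j \in \{1,2,3\}$, the total transfer from $X'$ to $C$ equals $(|Y_1|+|Y_2|+|Y_3|)/c$, so the task reduces to proving $|Y_1|+|Y_2|+|Y_3| \le 3c/2$.

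First I would extract from the proof of Observation~\ref{samecycle2} the following constraint, which is exactly what that argument establishes when applied to each consecutive pair $(x_j,x_{j+1})$ in $X'$: if $y \in Y_j$ and $y' \in Y_{j+1}$ are distinct and adjacent on $C$, then opening $C$ at the cycle-edge $(y,y')$ and reconnecting $P$ through the balanced edges $(x_j,y)$ and $(y',x_{j+1})$ yields a path partition of size $|\mathcal{S}|-1$, contradicting canonicity. Equivalently, for every $v \in Y_{j+1}$ we have $Y_j \subseteq C \setminus N_C(v)$ (and symmetrically), where $N_C(v)$ denotes $v$'s two cycle-neighbors. Since $x_2 \in V_2^b$ has a balanced edge to $C$, $Y_2 \ne \emptyset$; fixing any $v_0 \in Y_2$ immediately yields $|Y_1|,|Y_3| \le c-2$.

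Next I would run through the four cycle lengths separately. For $c=3$ every two distinct cycle vertices are adjacent, which collapses $Y_1=Y_2=Y_3$ into a common singleton, so $|Y_1|+|Y_2|+|Y_3|\le 3$ and the transfer is at most $1$. For $c=4$ a two-sided application of the constraint forces $Y_1,Y_2,Y_3$ all into a common two-element independent set of $C_4$, giving $|Y_1|+|Y_2|+|Y_3|\le 6$ and transfer $\le 3/2$. For $c=5$ a short enumeration over $|Y_2|\in\{1,2,3\}$ gives $|Y_1|+|Y_2|+|Y_3|\le 7$ (attained at $|Y_2|=1$ with $|Y_1|=|Y_3|=3$), so the transfer is at most $7/5 < 3/2$. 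For $c=6$ a similar enumeration yields $|Y_1|+|Y_2|+|Y_3|\le 9$, realized both by $Y_2=\{v_0\}$ with $Y_1=Y_3=C\setminus N_C(v_0)$ of size $4$ and by $Y_1=Y_2=Y_3=\{v_1,v_3,v_5\}$, so the transfer is at most $9/6=3/2$. Taking the maximum across the four cases proves the observation.

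The main subtlety I would be careful about is that the pair constraint restricts only \emph{distinct} cross pairs between $Y_j$ and $Y_{j+1}$, and therefore does not forbid a single $x_j$ from having two balanced edges to adjacent cycle vertices; in particular $Y_j$ itself need not be independent, and the bounds on $|Y_1|$ and $|Y_3|$ must be derived from the precise shape of $Y_2$ (via $|C\setminus N_C(Y_2)|$) rather than from $\alpha(C)$. Because $c\le 6$ the enumeration is nevertheless very short, and the bound $3c/2$ is attained exactly at $c\in\{4,6\}$ and with slack at $c\in\{3,5\}$.
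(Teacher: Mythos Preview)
Your proposal is correct and follows essentially the same route as the paper: a case analysis on the cycle length $c\in\{3,4,5,6\}$, using the adjacency constraint from Observation~\ref{samecycle2} applied to the consecutive pairs $(x_1,x_2)$ and $(x_2,x_3)$, to bound the number of balanced edges from $X'$ to $C$ by $3,6,7,9$ respectively. The only cosmetic difference is the bookkeeping: the paper bounds $|Y_1\cup Y_2\cup Y_3|$ first and reads off the edge count from the extremal configurations shown in the figures, whereas you condition on the shape of $Y_2$ and compute $|C\setminus N_C(Y_2)|$; both organizations yield the same numbers and the same bound $3/2$.
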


\begin{proofing}{Observation \ref{samecycle3}}
	{
		Suppose that $X'= x_1,x_2,x_3$ \textbf{go} to the same cycle $C$.
		Let $c$ be the size of cycle $C$. By \hyperref[samecycle2]{\text{Observation} \ref{samecycle2}} any two adjacent vertices among $x_1,x_2,x_3$ cannot \textbf{go} to neighboring vertices along the cycle.
		\begin{enumerate}
			\item If $c=3$ then only one vertex in $C$ can have an edges incident to $X'$, because any two vertices on $C$ are neighbors.			 Hence the number of edges that can transfer points from $X'$ to $C$ is at most $3$. Consequently, the number of points transferred to $C$ is at most $1$.	
			
			
			\item  If $c=4$ then at most two vertices in $C$ can have edges incident to $X'$.  Hence the number of edges that can transfer points from $X'$ to $C$ is at most $6$. Consequently, the number of points transferred to $C$ is at most $\frac{6}{4}$.	
			
			\item  If $c=5$ then at most three vertices in $C$ have edges incident to $X'$. Hence the number of edges that can transfer points from $X'$ to $C$ is at most $7$. Consequently, the number of points transferred to $C$ is at most $\frac{7}{5}$.

			
			\begin{figure}[H]
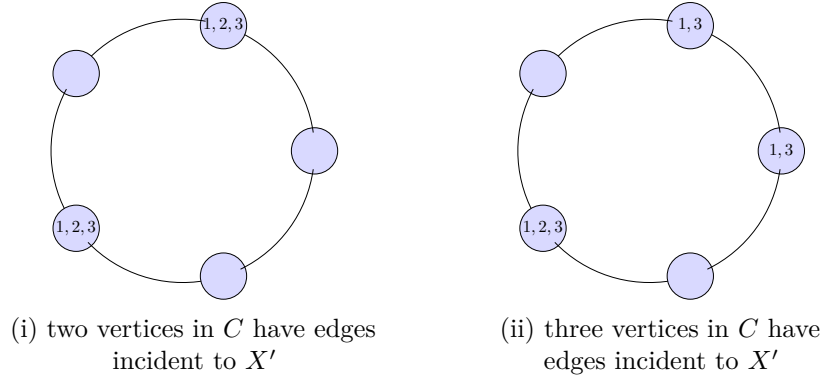

				\centering
				\begin{subfigure}[t]{0.3\columnwidth}  \centering \resizebox{0.8\linewidth}{!}
					{\fivecycletwob}
					\caption{\centering two vertices in $C$ have edges incident to $X'$ } \label{fig101:subfig1}
				\end{subfigure}
				\hspace{10mm}
				\begin{subfigure}[t]{0.3\columnwidth}
					\centering \resizebox{0.8\linewidth}{!}
					{\fivecyclethreeb}
					\caption{\centering three vertices in $C$ have edges incident to $X'$} \label{fig101subfig2}
				\end{subfigure}
				
				\caption{The two cases of $c=5$ }\label{fig101}
			\end{figure}

			\item  If $c=6$ then at most four vertex in $C$ have edges incident to $X'$. Therefore, the number of edges that can transfer points from $X'$ to $C$ is at most $9$.  Consequently, the number of points transferred to $C$ is at most $\frac{9}{6}$.


			\begin{figure}[H]
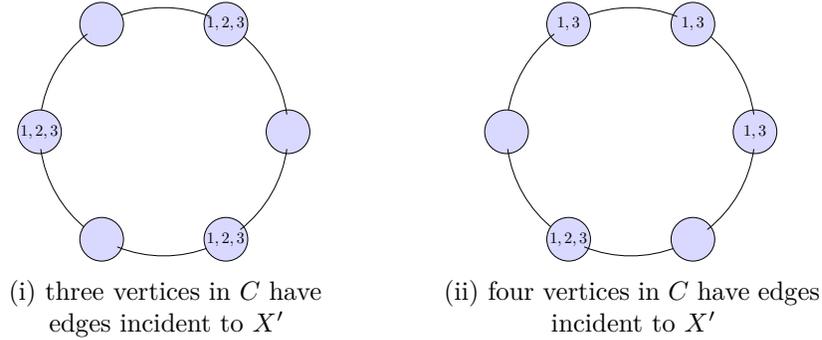

				\centering
				\begin{subfigure}[t]{0.3\columnwidth}  \centering \resizebox{0.8\linewidth}{!}
					{\sixcyclethreeb}
					\caption{\centering three vertices in $C$ have edges incident to $X'$ } \label{fig301:subfig1}
				\end{subfigure}
				\hspace{10mm}
				\begin{subfigure}[t]{0.3\columnwidth}
					\centering \resizebox{0.8\linewidth}{!}
					{\sixcyclefourb}
					\caption{\centering four vertices in $C$ have edges incident to $X'$} \label{fig301:subfig2}
				\end{subfigure}
				\caption{The two cases of $c=6$ }\label{fig301}
			\end{figure}

		\end{enumerate}

	}
\end{proofing}

\begin{obs}\label{samecyclek}
	Let $X'\subseteq X$ be all the vertices in $X$ that \textbf{go} to vertices on cycles. If $|X'|>1$, then all vertices in $X'$ are adjacent and \textbf{go} to the same cycle, and the the number of points $X'$ transfers to $C$ is at most $\frac{1}{2}\cdot|X'|$. Consequently, $b(X')\leq |X'|\cdot\frac{1}{2}+2\cdot\frac{2}{3}$.
\end{obs}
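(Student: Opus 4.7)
My plan is to first derive the structural claims of the observation (same cycle, contiguity) from the earlier observations in the subsection, then bound the cycle transfer by a combinatorial case analysis on the cycle size, and finally assemble the bound on $b(X')$ from these pieces together with the balanced-edge structure.

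\textbf{Step 1 (same cycle and contiguity).} For any two vertices $x_i, x_j \in X'$ with $i<j$, \hyperref[cycleoutk]{\text{Observation} \ref{cycleoutk}} applied to $x_i$ forces $x_j$ to either go to the same cycle as $x_i$ or go to $o_2$; since $x_j$ goes to some cycle by assumption, it must go to the same cycle $C$. So all of $X'$ targets one common cycle $C$. For contiguity, suppose some $x_m$ with $i < m < j$ is not in $X'$; then \hyperref[cycleoutk]{\text{Observation} \ref{cycleoutk}} applied to $x_i$ forces $x_m$ to go to $o_2$, and then \hyperref[pathink]{\text{Observation} \ref{pathink}} forces $N_b(x_j)=\{o_2\}$, contradicting $x_j\in X'$.

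\textbf{Step 2 (cycle transfer bound).} Write $X' = x_a, x_{a+1}, \ldots, x_b$ and $k'=|X'|$. Rule~\ref{rule:v2_cycle} only awards points for cycles of size $c\leq 6$, so for $c\geq 7$ there is nothing to prove. For $c\in\{3,4,5,6\}$, \hyperref[samecycle2]{\text{Observation} \ref{samecycle2}} applied to each consecutive pair $(x_i,x_{i+1})$ in $X'$ gives the key constraint: the targets of $x_i$ and $x_{i+1}$ on $C$ cannot be $C$-adjacent. I would extend the case analysis used in Observations~\ref{samecycle2} and~\ref{samecycle3} by counting, for each $c$, the number of vertices of $C$ that can simultaneously be targeted by $X'$ and the number of edges each such vertex can absorb, showing that the total number of $X'$-to-$C$ edges is at most $ck'/2$, so the transfer is at most $k'/2$ points. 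The simplest sub-case ($c=3$) forces all of $X'$ to target a single fixed vertex of $C$, giving at most $k'$ edges and $k'/3\leq k'/2$ points; the larger cases require the full structural enumeration.

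\textbf{Step 3 (full bound on $b(X')$).} Apart from edges to $C$, any remaining balanced edge of an $X'$-vertex goes to a path end-vertex. Edges to an external path are ruled out by \hyperref[pathoutk]{\text{Observation} \ref{pathoutk}}, since such an edge would force $N_b(X)=\{o_{x_i}\}$ and leave no room for the cycle edges $X'$ already uses. By \hyperref[pathink]{\text{Observation} \ref{pathink}}, only $x_a$ can go to $o_1$ and only $x_b$ can go to $o_2$, each contributing at most $\frac{2}{3}$ points by rule~\ref{rule:v2_path}. Combining with Step~2 yields $b(X')\leq \frac{k'}{2}+2\cdot\frac{2}{3}$, which is exactly the claimed bound.

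\textbf{Main obstacle.} The technical heart is Step~2 for $c=5$ and $c=6$, where the adjacency constraint leaves genuine combinatorial slack. The target subsets $S_i\subseteq V(C)$ may shift along $X'$, and one must show that $\sum_{i}|S_i|\leq ck'/2$ under the rule that $S_i$ and $S_{i+1}$ contain no $C$-adjacent pair. This is where a careful enumeration analogous to the figures in the proofs of Observations~\ref{samecycle2} and~\ref{samecycle3} is needed, now parameterized by $k'$; the rest of the argument is direct bookkeeping using the earlier observations.
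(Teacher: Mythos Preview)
Your proposal is correct and Steps~1 and~3 match the paper's argument closely (your contiguity argument via \hyperref[pathink]{\text{Observation}~\ref{pathink}} is a minor variant of the paper's two-sided application of \hyperref[cycleoutk]{\text{Observation}~\ref{cycleoutk}}, but equally valid).

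The one substantive difference is Step~2, which you flag as the ``main obstacle''. The paper does \emph{not} redo a case analysis on $c$ for general $k'$; it simply invokes \hyperref[samecycle2]{\text{Observation}~\ref{samecycle2}} and \hyperref[samecycle3]{\text{Observation}~\ref{samecycle3}} directly. The point is that once you know $X'$ is a contiguous run $x_a,x_{a+1},\ldots,x_b$, you can partition it into consecutive blocks of size~2 and~3 (any $k'\geq 2$ can be written as $2a+3b$), apply the two observations to each block separately, and sum: each block of size~2 transfers at most $1=\tfrac{2}{2}$ to $C$, each block of size~3 transfers at most $\tfrac{3}{2}$, so the total is at most $\tfrac{k'}{2}$. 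This bypasses entirely the enumeration you anticipate for $c=5,6$ with shifting target sets $S_i$. Your approach would also work, but it is considerably more labor for no gain; the tiling reduction is what makes the paper's proof a two-line citation rather than a new combinatorial lemma.
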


\begin{proofing}{Observation \ref{samecyclek}} {
		Consider $X'\subseteq X$ to be all the vertices in $X$ that \textbf{go} to vertices on cycles. We prove, that if $|X'|=2$, the vertices of $X'$ will be adjacent and \textbf {go} to the same cycle.	 Let $x_i$ and $x_j$ be the two vertices that make $X'$ where $i<j$. Assume towards contradiction that $j\neq i+1$, hence there is $i<l<j$ ,$x_l$ that does not \textbf{go} to any cycles. Applying  \hyperref[cycleoutk]{\text{Observation} \ref{cycleoutk}} on $x_i$ we get that $x_l$ must \textbf{go} to $o_2$, and only to it. Applying  \hyperref[cycleoutk]{\text{Observation} \ref{cycleoutk}} on $x_j$ implies that $x_l$ must go to $o_1$, which is contradiction.
		Hence $j=i+1$ and moreover both vertices \textbf{go} to the same cycle.  In the same way, a simple induction will prove that for any size of $|X'|\geq 2$, the vertices of $X'$ are adjacent and all have edges to the same cycle.
		From \hyperref[samecycle2]{\text{Observation} \ref{samecycle2}} and  \hyperref[samecycle3]{\text{Observation} \ref{samecycle3}} we get that $X'$ transfers at most $\frac{1}{2}\cdot|X'|$
		points to $C$.
		
		Let $x_i$ be the first vertex from left in $X'$ and $x_j$ be the last vertex from left. $x_i$ can \textbf{go} to $o_1$ and $x_j$ can \textbf{go} to $o_2$, in addition to \textbf{going} to $C$. Hence in total $b(X')\leq |X'|\cdot\frac{1}{2}+2\cdot\frac{2}{3}$.
		

	}
\end{proofing}

Now we prove \hyperref[lem:none_to_dangerous]{\text{Lemma} \ref{lem:none_to_dangerous}}.

\begin{proofing}{Lemma \ref{lem:none_to_dangerous}}
	{	
		
		Let $P=o_1P_1XP_2o_2$  be a path in a canonical path partition where $X = x_1x_2...x_k$ is a sequence of vertices in $V_2^b$. According to this lemma, $X$ has no free edges incident to dangerous vertices. There are four cases we need to consider (see Figure \ref{fig16}):
		\begin{enumerate}
			\item If there is $i$ such that $x_i$ \textbf{goes} to a path $P_{x_i}\neq P$, then from \hyperref[pathoutk]{\text{Observation} \ref{pathoutk}} we get that $b(X) = k\cdot\frac{2}{3}$. (See Figure \ref{fig16:subfig1}.)
			
			\item If all vertices in $X$ \textbf{go} to $o_1$ or $o_2$ then we will show that there can be only one vertex in $X$ that has two balanced edges.
			Assume in contradiction, that there are two vertices, $x_i$ and $x_j$ in which each one of them \textbf{goes} to $o_1$ and $o_2$. W.l.o.g suppose that $i<j$. Hence
			by \hyperref[pathink]{\text{Observation} \ref{pathink}} and the fact that $x_i$ \textbf{goes} to $o_2$ then $x_j$ cannot \textbf{go} to $o_1$ - contradiction. Hence $b(X)= (k+1)\cdot\frac{2}{3}$. (See Figure \ref{fig16:subfig2}.)\label{item2:lemma}

			\item If there is exactly one vertex $x_i\in X$ that \textbf{goes} to a cycle. (See Figure \ref{fig16:subfig3}.) From \hyperref[cycleoutk]{\text{Observation} \ref{cycleoutk}} we know that all $j<i$ must \textbf{go} to $o_1$ and all $j>i$ must \textbf{go} to $o_2$. Moreover, $x_i$ can \textbf{go} to $o_1$, $o_2$ as well and to two cycles of size three.  Hence $b(X) = b(x_i)+ (k-1)\cdot\frac{2}{3}\leq  2\cdot\frac{2}{3}+2\cdot\frac{1}{3} +(k-1)\cdot\frac{2}{3} = (k+2)\cdot\frac{2}{3}$. \label{item3:lemma}

			\item Let $X'\subseteq X$ be the set of all the vertices in $X$ that \textbf{go} to cycles. (See Figure \ref{fig16:subfig4}.) If  $|X'|\geq 2$ then from \hyperref[samecyclek]{\text{Observation} \ref{samecyclek}} it must be that all vertices in $X'$  are adjacent and $b(X')\leq|X'|\cdot\frac{1}{2}+2\cdot\frac{2}{3}$. Let $x_i$ be the first in $X'$ and $x_j$ be the last one.
			Then from  \hyperref[cycleoutk]{\text{Observation} \ref{cycleoutk}} for all $l<i$ $x_l$, must \textbf{go} to $o_1$ and for all $l>j$ $x_l$, must go to $o_2$.
			Hence each vertex in this group will transfer $\frac{2}{3}$ points.   Hence we can conclude that $b(X) =(k-|X'|)\cdot\frac{2}{3}+ |X'|\cdot\frac{1}{2}+2\cdot\frac{2}{3} = (k+2)\cdot\frac{2}{3} -\frac{1}{6}\cdot|X'|$.	$|X'|\geq 2$ and hence $b(X) \leq (k+2)\cdot\frac{2}{3} -\frac{1}{3}$.\label{item4:lemma}


			

		\end{enumerate}

		Hence we can conclude that in all cases $b(X)\leq (k+2)\cdot\frac{2}{3}$. Consequently, $X$  ends with at least $\frac{1}{3}\cdot k-\frac{4}{3}$ points.
		
		\begin{figure}[H]
			\centering
			\begin{subfigure}[t]{0.45\columnwidth} \centering \resizebox{\linewidth}{!}
				{\kouterpath}
				\caption{$x_i$ goes to an external path } \label{fig16:subfig1}
			\end{subfigure}
			\hspace{5mm}
			\begin{subfigure}[t]{0.45\columnwidth} \centering \resizebox{\linewidth}{!}
				{\knonouter}
				\caption{All vertices in $X$ go to $o_1$ or $o_2$} \label{fig16:subfig2}
			\end{subfigure}
			\hspace{5mm}
			\begin{subfigure}[t]{0.45\columnwidth} \centering	  \resizebox{\linewidth}{!}
				{\ksequence}
				\caption{only $x_i$ goes to a vertex on a cycle} \label{fig16:subfig3}
			\end{subfigure}
			\hspace{5mm}
			\begin{subfigure}[t]{0.45\columnwidth} 	\centering  \resizebox{\linewidth}{!}
				{\koutercycles}
				\caption{At least two nodes in $X$ go to cycles} \label{fig16:subfig4}
			\end{subfigure}
			\caption{\hyperref[lem:none_to_dangerous]{\text{Lemma} \ref{lem:none_to_dangerous}}} \label{fig16}
		\end{figure}

	}
\end{proofing}



To prove \hyperref[lem:has_to_dangerous]{\text{Lemma} \ref{lem:has_to_dangerous}} we will need some observations.
\refstepcounter{lemma}

\begin{obs}\label{onlyone}
	Let $X$ be $k$ adjacent vertices $x_1,x_2....x_k \in V_2^b$ on a path $P=o_1P_1x_1x_2...x_kP_2o_2$. If there is $x_i$ that has an edge to a dangerous vertex, then $i\in \{1,k\}$. Moreover,
	\begin{itemize}
		
		\item 	if $x_1$ has an edge to a dangerous vertex, then $N_b(X \setminus \{x_1\})= \{o_2\}$.
		
		\item 	if $x_k$ has an edge to a dangerous vertex, then $N_b(X \setminus \{x_k\})= \{o_1\}$.
		
	\end{itemize}
\end{obs}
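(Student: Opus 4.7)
The plan is to invoke Lemma \ref{lem:dangerous} directly with $u=x_i$. Since $x_i\in V_2^b$ has a path neighbor in $V_2$, the hypotheses of the lemma are satisfied, and item~\ref{itm:1} tells us the dangerous vertex $v_3$ lies on the same path $P$.

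Next I would use item~\ref{itm:2} of Lemma \ref{lem:dangerous}, which asserts that $u=x_i$ has exactly one path neighbor in $V_2$. However, when $i\in\{2,\dots,k-1\}$, both $x_{i-1}$ and $x_{i+1}$ are path neighbors of $x_i$ lying in $V_2^b\subseteq V_2$, so $x_i$ would have two $V_2$ path neighbors. This contradiction forces $i\in\{1,k\}$, which is the first conclusion of the observation.

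Now suppose $i=1$. The unique $V_2$ path neighbor of $x_1$ is $x_2$, sitting to the right of $x_1$ in $P$. Item~\ref{itm:2} of Lemma \ref{lem:dangerous} describes two mirror configurations, depending on whether the heavy neighbor of $v_3$ lies to the left or right of $v_3$; only the left case places the $V_2$ path neighbor on the right of $u=x_1$, so that is necessarily our configuration. Item~\ref{itm:3} then states that $x_2$ has exactly one balanced edge, and it goes to the end-vertex of $P$ closer to $x_2$ than to $x_1$, which is $o_2$. Hence $N_b(x_2)=\{o_2\}$. Applying Observation \ref{pathink} to the fact that $x_2$ goes to $o_2$ yields $N_b(x_j)=\{o_2\}$ for every $j>2$, and putting these together gives $N_b(X\setminus\{x_1\})=\{o_2\}$.

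The case $i=k$ is handled by a completely symmetric argument: $x_k$'s unique $V_2$ path neighbor is $x_{k-1}$ on its left, so the configuration of Lemma \ref{lem:dangerous} has the heavy vertex of $v_3$ on the right of $v_3$, item~\ref{itm:3} forces $N_b(x_{k-1})=\{o_1\}$, and Observation \ref{pathink} propagates this to $N_b(X\setminus\{x_k\})=\{o_1\}$. The only mildly tricky part is keeping track of the left/right orientation when matching the generic labelling in Lemma \ref{lem:dangerous} to the concrete labelling along $P$; beyond that, no new case analysis is needed, since all the structural work has already been absorbed into Lemma \ref{lem:dangerous} and Observation \ref{pathink}.
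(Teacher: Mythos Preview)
Your proposal is correct and follows essentially the same approach as the paper: both invoke Lemma~\ref{lem:dangerous} on $u=x_i$ to force $i\in\{1,k\}$ and to pin down $N_b(x_2)=\{o_2\}$, and then propagate along $X$. The only cosmetic difference is that the paper propagates by applying Observation~\ref{inacceptor1} inductively, whereas you cite the already-packaged Observation~\ref{pathink}; since the latter is proved from the former, the two arguments are equivalent.
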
	

\begin{proofing}{Observation \ref{onlyone}}
	{Let $X$ be $k$ adjacent vertices $x_1,x_2....x_k \in V_2^b$ on a path $P=o_1P_1x_1x_2...x_kP_2o_2$. If $x_i$ has an edge that is incident to a dangerous vertex $v_3$, then  \hyperref[lem:dangerous]{\text{Lemma} \ref{lem:dangerous}} implies that $v_3$ is on $P$ and that $x_i$ cannot have  both its neighbors in $V_2$. Hence $i\in \{1,k\}$. W.l.o.g assume that $x_1$ has an edge to dangerous vertex. \hyperref[lem:dangerous]{\text{Lemma} \ref{lem:dangerous}} implies that $N_b(x_2)= \{o_2\}$.
		Using the fact that $x_3\in V_2$ we get from \hyperref[inacceptor1]{\text{Observation} \ref{inacceptor1}} that $N_b(x_3)= \{o_2\}$. With an easy induction we get that $N_b(X \setminus \{x_1\})= \{o_2\}$. The same proof will work for $x_k$.
		
	}
	
\end{proofing}

\begin{obs}\label{only}
	Let $X$ be two adjacent vertices $x_1,x_2\in V_2^b$ on a path $P=o_1P_1x_1x_2P_2o_2$.
	If both of them have an edge to a dangerous vertex, then $N_b(x_2)= \{o_2\}$ and $N_b(x_1)= \{o_1\}$. Consequently, $X$ transfers at most $2\cdot\frac{2}{3}+\frac{1}{2}$ points.

\end{obs}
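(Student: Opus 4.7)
The plan is to apply Lemma \ref{lem:dangerous} twice---once for the free edge from $x_1$ to a dangerous vertex $v_3^{(1)}$ and once for the free edge from $x_2$ to a dangerous vertex $v_3^{(2)}$---using the two symmetric variants of that lemma, and then to tally the transfers. Fix the path $P = o_1 P_1 x_1 x_2 P_2 o_2$, with $x_1, x_2 \in V_2^b$.

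First I would instantiate Lemma \ref{lem:dangerous} with $u := x_1$ and the free edge $(v_3^{(1)}, x_1)$; its hypothesis is met because the path-neighbor $x_2$ of $x_1$ lies in $V_2$. The lemma has two variants depending on whether $u$ sits to the right or to the left of the dangerous vertex, and item~2 forces $u$ to have a \emph{unique} $V_2$ path-neighbor, located on the opposite side of $u$ from the dangerous vertex. Since $x_2$ is a $V_2$ path-neighbor of $x_1$ already sitting to the right, the only consistent variant is the one in which $v_3^{(1)}$ lies to the left of $x_1$; then item~3 supplies $N_b(x_2) = \{o_2\}$. The mirror application with $u := x_2$ places $v_3^{(2)}$ to the right of $x_2$, makes $x_1$ the unique $V_2$ path-neighbor of $x_2$, and yields $N_b(x_1) = \{o_1\}$.

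With these two identities in hand, the transfer bound reduces to a short bookkeeping step. Each $x_i$ has four free edges, exactly one of which is balanced and incident to a path end-vertex; Rule \ref{rule:v2_path} then contributes $\frac{2}{3}$ points on that edge. The remaining three free edges at each $x_i$ are non-balanced, and because $x_i \in V_2^b$ has exactly one $V_2$ path-neighbor (by item~2 of the lemma), the only rule that can fire on any such edge is Rule \ref{rule:v4_dangerous}, contributing at most $\frac{1}{12}$ per edge. Summing over $x_1$ and $x_2$ bounds the total transferred points by $2 \cdot \frac{2}{3} + 6 \cdot \frac{1}{12} = 2 \cdot \frac{2}{3} + \frac{1}{2}$, as claimed.

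I expect no serious obstacle here, because Lemma \ref{lem:dangerous} has absorbed the combinatorial heavy lifting. The only delicate point is matching each of the two invocations of the lemma to the correct orientation; that matching is forced automatically by the uniqueness-of-$V_2$-neighbor clause in item~2, which rules out the ``wrong-sided'' variant in each application.
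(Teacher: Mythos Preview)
Your proposal is correct and follows essentially the same approach as the paper's own proof: the paper likewise deduces $N_b(x_1)=\{o_1\}$ and $N_b(x_2)=\{o_2\}$ directly from Lemma~\ref{lem:dangerous}, then invokes Rule~\ref{rule:v4_dangerous} to cap each non-balanced free edge at $\frac{1}{12}$ and arrives at the same total $2\cdot\bigl(\frac{2}{3}+3\cdot\frac{1}{12}\bigr)=2\cdot\frac{2}{3}+\frac{1}{2}$. Your write-up is in fact more explicit than the paper's about why the orientation in each application of Lemma~\ref{lem:dangerous} is forced (via the uniqueness clause in item~2) and about why Rule~\ref{rule:v4_dangerous} rather than Rule~\ref{rule:v2a_dangerous} is the relevant one; these clarifications are helpful but do not change the underlying argument.
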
	

\begin{proofing}{Observation \ref{only}}
	{	Let $X$ be two adjacent vertices $x_1,x_2\in V_2^b$ on a path $P=o_1P_1x_1x_2P_2o_2$. If both of them have an edge to a dangerous vertex, then $N_b(x_2)= \{o_2\}$ and $N_b(x_1)= \{o_1\}$. (This follows from \hyperref[lem:dangerous]{\text{Lemma} \ref{lem:dangerous}}.) According to $\ref{rule:v4_dangerous}$ an edge that is not balanced can transfer at most $\frac{1}{12}$ points.
		
		Hence, in the case above, $X$ will transfer at most $2\cdot(\frac{2}{3}+3\cdot\frac{1}{12})=2\cdot\frac{2}{3}+\frac{1}{2}$ points.
		
	}
\end{proofing}

\begin{proofing}{Lemma \ref{lem:has_to_dangerous}} {	Let $\mathcal{S}$ be a canonical path partition.  Let $X$ be $k\geq 2$ adjacent vertices $x_1,x_2....x_k \in V_2^b$ on a path $P=o_1P_1x_1x_2...x_kP_2o_2$. Suppose that there is $i$ such that $x_i$ has an edge to a dangerous vertex. We prove that $X$ transfers at most $(k+1)\cdot\frac{2}{3} +\frac{5}{12}$ points. We break the proof into two cases:
		
		\begin{enumerate}
			\item If there is exactly one $x_i$ which has an edge to a dangerous vertex, then by \hyperref[onlyone]{\text{Observation} \ref{onlyone}} we can assume w.l.o.g that $x_1$ is the only vertex that has an edge to a dangerous vertex and $N_b(X \setminus \{x_1\})= \{o_2\}$. Hence for every $2\leq i\leq k$, $x_i$ will transfer $\frac{2}{3}$ points. In particular, $x_2$ \textbf{goes} to $o_2$ and hence \hyperref[pathout]{\text{Observation} \ref{pathout}} implies $x_1$ cannot \textbf{go} to an end-vertex of an external path. Hence, $x_1$ can have its remaining three free edges either to cycles, to ends vertices of $P$ ($o_1$ or $o_2$) or to dangerous vertices. If $x_1$ has an edge to $o_2$, $o_1$ and to a cycle of size three it will transfer  $\frac{1}{12}+2\cdot\frac{2}{3}+\frac{1}{3}$ points. Any other case will lead $x_1$ to transfer less points.
			Hence we get that $X$ will transfer at most
			$\frac{1}{12}+2\cdot\frac{2}{3}+\frac{1}{3}+(k-1)\cdot\frac{2}{3}$
			$=(k+1)\cdot\frac{2}{3} +\frac{1}{3}+\frac{1}{12}=(k+1)\cdot\frac{2}{3} +\frac{5}{12}$ points.
			
			
			\item If there are at least two vertices in $X$ that have an edge to dangerous vertices, then we show that it must be that $k=2$. Assume in contradiction that $k>2$.
			By \hyperref[onlyone]{\text{Observation} \ref{onlyone}} and because $k>2$, $x_2\neq x_k$ and $N_b(x_2)= \{o_1\}$ and $N_b(x_2)= \{o_2\}$ leading to a contradiction. Hence
			we can assume that $k=2$ and both vertices have edges to dangerous vertices. By \hyperref[only]{\text{Observation} \ref{only}}, $X$ transfers at most $2\cdot\frac{2}{3} +\frac{1}{2}$ points.
			
			
		\end{enumerate}

		Hence we can conclude that in all cases, $X$ transfers at most $(k+1)\cdot\frac{2}{3} +\frac{5}{12}$ points. Consequently, $X$  ends with at least $\frac{1}{3}\cdot k-\frac{13}{12}$ points.
		
	}
\end{proofing}

\subsubsection{Proof of \hyperref[lem:special_case]{\text{Lemma} \ref{lem:special_case}}} \label{subsub:lemma5}
\refstepcounter{lemma}
\setcounter{lemma}{5}
We prove now the following observations for the specific case we need to handle.

\begin{obs}\label{onlyonetocycles}
	Let $P=o_1P_1x_1x_2P_2o_2$ be a path, where $x_1$ and $x_2$ are in $V_2$ and have no free edges to dangerous vertices. If  $x_1$ and $x_2$ transfer more than $\frac{7}{3}$ points ($b(x_2)+b(x_1) >\frac{7}{3}$), then $x_2$ and $x_1$ are splitting-inners.

\end{obs}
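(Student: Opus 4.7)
The plan is to reuse the case analysis of \hyperref[lem:none_to_dangerous]{\text{Lemma} \ref{lem:none_to_dangerous}} specialized to $k=2$, and note that the hypothesis $b(x_1)+b(x_2)>\frac{7}{3}$ eliminates every configuration except one, from which splitting-inners falls out. First I would observe that since $x_1,x_2\in V_2$ are path-neighbors, each has a path-neighbor in $V_2$, so both lie in $V_2^b$; together with the assumption that they have no free edges to dangerous vertices, this puts $X=x_1x_2$ exactly in the setting of \hyperref[lem:none_to_dangerous]{\text{Lemma} \ref{lem:none_to_dangerous}}.

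Next I would rule out three of the four cases of that lemma. In its case~1 (some $x_i$ \textbf{goes} to an external path) we get $b(X)=k\cdot\frac{2}{3}=\frac{4}{3}$; in case~2 (all balanced edges hit only $o_1$ or $o_2$) we get $b(X)\le(k+1)\cdot\frac{2}{3}=2$; and in case~4 (at least two vertices \textbf{go} to cycles, i.e.\ $|X'|=2$) the sharpened bound $b(X)\le(k+2)\cdot\frac{2}{3}-\frac{1}{6}|X'|$ gives $b(X)\le\frac{7}{3}$. All three are incompatible with the strict inequality $b(x_1)+b(x_2)>\frac{7}{3}$.

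So we must be in case~3: exactly one of $x_1,x_2$ \textbf{goes} to a cycle; by symmetry assume it is $x_1$. By \hyperref[cycleout]{\text{Observation} \ref{cycleout}}, $x_2$ must \textbf{go} to that same cycle or to $o_2$, and since $x_2$ is not one of the cycle-going vertices (we are in case~3 with only $x_1$ hitting a cycle), $x_2$ \textbf{goes} to $o_2$. Moreover $x_2$ cannot also \textbf{go} to $o_1$: by the symmetric form of \hyperref[inacceptor1]{\text{Observation} \ref{inacceptor1}}, $x_2\to o_1$ would force $N_b(x_1)=\{o_1\}$, contradicting $x_1$'s edge to a cycle. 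Hence $b(x_2)=\frac{2}{3}$.

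The hypothesis now forces $b(x_1)>\frac{5}{3}$. But $x_1$ has only four free edges, each contributing at most $\frac{2}{3}$ (when it is a balanced edge to a path end-vertex) and at most $\frac{1}{3}$ (when it is a balanced edge to a cycle). If $x_1$ did not \textbf{go} to $o_1$, then at most two of the four edges would \textbf{go} to path end-vertices (in fact, only $o_2$ is available, by \hyperref[pathoutk]{\text{Observation} \ref{pathoutk}} applied with the external-path case of $x_1$ already excluded), and we would bound $b(x_1)\le\frac{2}{3}+3\cdot\frac{1}{3}=\frac{5}{3}$, a contradiction. Therefore $x_1\to o_1$, and combined with $x_2\to o_2$ this exactly says that $x_1,x_2$ are splitting-inners. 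The symmetric subcase (where $x_2$ is the cycle-going vertex) is identical with the roles of $x_1$ and $x_2$ swapped, yielding again $x_1\to o_1$ and $x_2\to o_2$. I do not expect any real obstacle here: the argument is essentially a refinement of the calculation already carried out in \hyperref[lem:none_to_dangerous]{\text{Lemma} \ref{lem:none_to_dangerous}}, with the bookkeeping simplified by the fact that $k=2$ leaves only two subcases in the cycle-going case.
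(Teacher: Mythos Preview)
Your proposal is correct and follows essentially the same route as the paper: both reduce to the case analysis of \hyperref[lem:none_to_dangerous]{\text{Lemma}~\ref{lem:none_to_dangerous}} with $k=2$, observe that only case~3 survives the hypothesis $b(X)>\frac{7}{3}$, pin down $b(x_2)=\frac{2}{3}$ via \hyperref[cycleout]{\text{Observation}~\ref{cycleout}}, and then force $x_1\to o_1$ from $b(x_1)>\frac{5}{3}$. The only cosmetic differences are that the paper invokes \hyperref[inacceptor2]{\text{Observation}~\ref{inacceptor2}} (rather than the exclusion of case~1) to rule out external paths for $x_1$, and your argument that $x_2\not\to o_1$ is redundant since \hyperref[cycleout]{\text{Observation}~\ref{cycleout}} already confines $N_b(x_2)$ to $\{o_2\}\cup C_1$.
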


\begin{proofing}{Observation \ref{onlyonetocycles}}
	{ Let $P=o_1P_1x_1x_2P_2o_2$ where $x_1$ and $x_2$ are in $V_2$ and have no free edges to dangerous vertices. According to the premises of the lemma $b(x_2)+b(x_1)> \frac{7}{3}$. The proof of \hyperref[lem:none_to_dangerous]{\text{Lemma} \ref{lem:none_to_dangerous}} implies that we must be in case similar to case  \ref{item3:lemma} of the proof, because in all other cases $b(x_1)+b(x_2)\leq  \frac{7}{3}$. Hence we are in the case where either $x_1$ or $x_2$ go to a vertex on a cycle. W.l.o.g we will assume that it is $x_1$. \hyperref[cycleout]{\text{Observation} \ref{cycleout}} implies that $x_2$ must go to $o_2$, and only to $o_2$. Hence $b(x_2)=\frac{2}{3}$.
		Consequently, $b(x_1) > \frac{5}{3}$. Since any balanced edge to a cycle transfers at most $\frac{1}{3}$ points, $x_1$ must have at least two balanced edges to paths. Moreover, $x_1$ cannot go to vertices that lie on other paths, by \hyperref[inacceptor2]{\text{Observation} \ref{inacceptor2}}. Hence $x_1$ must have an edge to $o_1$, making $x_1$ and $x_2$ splitting-inners.
	}
\end{proofing}

\begin{obs}\label{splittinginacceptors}
	Let $P=o_1P_1x_1x_2P_2o_2$ be a path. If $x_1$ and $x_2$ are adjacent splitting-inners, then no vertex $u$ on $ P_1$ or on $P_2$ is heavy.
\end{obs}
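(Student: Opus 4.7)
The plan is to argue by contradiction: assume some vertex $u$ on $P_1\cup P_2$ is heavy, and construct from $\mathcal{S}$ an alternative path partition of the same size as $\mathcal{S}$ but containing strictly more cycle components, violating Property~$2$. By the symmetry of $P$, I may place $u$ on $P_1$ and write $P=o_1 P_1' u P_1'' x_1 x_2 P_2 o_2$.

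The key leverage is that heaviness supplies many balanced edges to path end-vertices. Since $u$ has at least three such edges and at most two of them can hit $\{o_1,o_2\}$, at least one balanced edge $(u,o_u)$ points to an end-vertex $o_u$ of a path $P_u = o_u Q q \ne P$. Combined with the splitting-inner edges $(x_1,o_1)$ and $(x_2,o_2)$, I now have three free edges at my disposal for rerouting.

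I then construct $\mathcal{S}'$ by replacing $P$ and $P_u$ with two new components, leaving all other components of $\mathcal{S}$ unchanged. Component~A is the spliced path obtained by starting at $q$, traversing $P_u$ to $o_u$, jumping across $(o_u,u)$ to $u$, following $P$'s path edges from $u$ through $P_1''$ to $x_1$, jumping across $(x_1,o_1)$ to $o_1$, and traversing $P_1'$ to its far end; the two $P$-edges dropped in this process are $(x_1,x_2)$ and the $P$-edge from $u$ to its neighbor on the $P_1'$ side. Component~B has vertex set $\{x_2\}\cup P_2\cup\{o_2\}$ and its edges are the $P$-edges running from $x_2$ through $P_2$ to $o_2$ together with the free edge $(x_2,o_2)$, forming a spanning cycle. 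Two path components of $\mathcal{S}$ are thus replaced by a path and a cycle, so $|\mathcal{S}'|=|\mathcal{S}|$ while $\mathcal{S}'$ has strictly more cycle components.

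The main obstacle to check is that Component~A is a genuine simple path in every configuration. Vertex-disjointness between $P_u$ and the relevant part of $P$ is immediate since $P_u$ and $P$ are distinct components of $\mathcal{S}$, and the degenerate subcases collapse cleanly: if $P_1'=\emptyset$ then Component~A terminates at $o_1$, and if $P_1''=\emptyset$ then the $u$-to-$x_1$ step uses the single path edge $(u,x_1)$. The nonemptiness of $P_2$ is automatic because $(x_2,o_2)$ is free rather than a path edge, so Component~B is a well-defined spanning cycle of length at least three. Once these verifications are in place, $\mathcal{S}'$ contradicts Property~$2$ of the canonical partition $\mathcal{S}$, and the symmetric case $u\in P_2$ follows by interchanging the roles of $(x_1,o_1)$ and $(x_2,o_2)$.
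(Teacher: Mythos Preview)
Your argument is correct and follows essentially the same approach as the paper: use heaviness of $u$ to find a balanced edge to an external path $P_u$, then reroute to form the cycle $x_2P_2o_2$ and a single path covering the rest, contradicting Property~2. The only cosmetic difference is that you drop the $P$-edge on the $P_1'$ side of $u$ whereas the paper drops the one on the $P_1''$ side, yielding the same vertex set traversed in the opposite order; your handling of the degenerate cases $P_1'=\emptyset$, $P_1''=\emptyset$, and $P_2\neq\emptyset$ is a welcome addition of rigor.
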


\begin{proofing} {Observation \ref{splittinginacceptors}}{
		Let $P=o_1P_1x_1x_2P_2o_2$ be a path where $x_1$ and $x_2$ are splitting inners. Assume towards contradiction that $P_1= P_1^{'}uP_1^{''}$ where $u$ is heavy. Therefore, $u$ must go to a vertex on a path $P_{u}\neq P$. We will create a cycle component $x_2P_2o_2$, and the path $P_1^{''}x_1o_1P_1^{'}uP_u$, making a path partition with the same number of components and one more cycle, contradicting Property $2$.
		\begin{figure}[H]
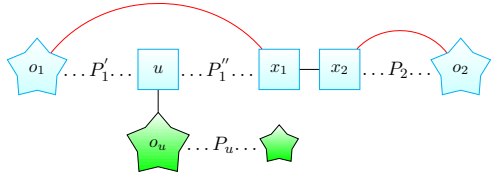

			\centering
			\resizebox{0.4\textwidth}{!}{%
				
				\splittinginacceptors
				
			}
			\caption{splitting-inners}
		\end{figure}
		
	}
\end{proofing}

\begin{proofing}{Lemma \ref{lem:special_case}}
	{	Let $\mathcal{S}$ be a canonical path partition and let $P$ be a path in $\mathcal{S}$. Let $B_iB_{i+1}$ be two consecutive blocks in path $P$ such that $B_i$ is of kind $1$ ($X_i=x_i$) and $|X_{i+1}|=2$ ($X_{i+1}=x_{i+1}^1x_{i+1}^2$). 		Suppose that $x_i$ is heavy. There are three cases we need to handle:
		\begin{enumerate}
			\item If all free edges of $X_{i+1}$ are balanced edges then we show that $b(X_{i+1})\leq \frac{7}{3}$ and hence transfers at most  $\frac{7}{3}$ points. Assume in contradiction that $b(X_{i+1})>\frac{7}{3}$. Hence, by  \hyperref[onlyonetocycles]{\text{Observation} \ref{onlyonetocycles}}, $x_{i+1}^1$ and $x_{i+1}^2$ are splitting-inners. Therefore, from \hyperref[splittinginacceptors]{\text{Observation} \ref{splittinginacceptors}} $x_i$ is not heavy, contradiction.
			\item If both $x_{i+1}^1$ and $x_{i+1}^2$ have an edge to a dangerous vertex, then by \hyperref[only]{\text{Observation} \ref{only}} they transfer at most $2\cdot\frac{2}{3}+\frac{1}{2}$ points.
			\item If there is only one vertex that has an edge to a dangerous vertex, then w.l.o.g assume it is $x_{i+1}^1$. (The same proof will apply for $x_{i+1}^2$.)
			
			Let $o_1$ be the left end-vertex of path $P$ and $o_2$ be the right one. 		 \hyperref[onlyone]{\text{Observation} \ref{onlyone}} implies that $N_b(x_{i+1}^2)= \{o_2\}$. As a result, there is no other edge through which $x_{i+1}^2$ can transfer points, and hence $x_{i+1}^2$ will transfer $\frac{2}{3}$ points. On the other hand, $x_{i+1}^1$ cannot have an edge to $o_1$ because then they will be splitting inners, contradicting the fact that $x_i$ is heavy (\hyperref[splittinginacceptors]{\text{Observation} \ref{splittinginacceptors}}).
			\hyperref[inacceptor2]{\text{Observation} \ref{inacceptor2}} implies that
			the remaining three balanced edges incident to $x_{i+1}^1$  are incident to cycles, to $o_2$ or to dangerous vertices. On an edge to $o_1$, $x_{i+1}^1$ transfers $\frac{2}{3}$ points by \ref{rule:v2_path}. A balanced edge to a cycle will transfer at most $\frac{1}{3}$ points by \ref{rule:v2_cycle}. A free edge that is not balanced can transfer at most $\frac{1}{12}$ points by \ref{rule:v4_dangerous}.
			Hence the worst case is if the three edges incident to $x_{i+1}^1$ are to $o_2$ and to two cycles of size three.
			Hence in total $x_{i+1}^1$ will transfer at most $\frac{1}{12}+\frac{2}{3}+2\cdot\frac{1}{3}=2\cdot\frac{2}{3}+\frac{1}{12}$ points.
			
			Hence $X_{i+1}$ will transfer at most $2\cdot\frac{2}{3}+\frac{1}{12}+\frac{2}{3}=3\cdot\frac{2}{3}+\frac{1}{12}$  points.
			
		\end{enumerate}
		Hence in all cases, $X_{i+1}$ will transfer at most $\frac{7}{3}$ points. Consequently, as $X_{i+1}$ starts with two points, it has at least $-\frac{1}{3}$ points.

	}
\end{proofing}

\section{Proof of \hyperref[thm:main5]{\text{Theorem} \ref{thm:main5}} }

To prove \hyperref[thm:main5]{\text{Theorem} \ref{thm:main5}}
we modify the point transfer rules in the following way: \ref{rule:v2_path} and \ref{rule:v5_rest} remain unchanged,
\ref{rule:v4_dangerous} and \ref{rule:v2a_dangerous} are deleted, and \ref{rule:v2_cycle} is changed such that the points that are transferred to cycles on each edge is multiplied by $ \frac{4}{3}$ (hence $\frac{1}{i}\cdot\frac{4}{3}$). \label{rule:v2_cycle_changed}

For completeness we provide here the changed point transfer rules:

\begin{mdframed}[linecolor=black!40,
	outerlinewidth=1pt,
	roundcorner=.5em,
	innertopmargin=1.3ex,
	innerbottommargin=.5\baselineskip,
	innerrightmargin=1em,
	innerleftmargin=0.4em,
	backgroundcolor=blue!10,
	shadow=true,
	shadowsize=6,
	shadowcolor=black!20,
	frametitle={\Large Changed Transfer rules:},
	frametitlebackgroundcolor=cyan!40,
	frametitlerulewidth=14pt
	]

	\begin{description}[leftmargin=100pt
		]
		\item[\namedlabel{changed_rule:v2_cycle}{Changed Rule 1}]  From $V_2$ to $V_1\cap \mathcal{C}$. $v\in V_2$ transfers $\frac{1}{i}\cdot\frac{4}{3}$ points to $u\in V_1\cap \mathcal{C}$ , if  $u$ is a vertex of a cycle of size $i \le 6$ and $(u,v)$ is a balanced edge.
		
		\item [\namedlabel{changed_rule:v2_path}{Changed Rule 2}] From $V_2$ to $V_1\cap \mathcal{P}$. $v\in V_2$ transfers $\frac{2}{3}$ points to $u\in V_1$ if $u$ is an end-vertex of a path and $(u,v)$ is a balanced edge.

		\item [\namedlabel{changed_rule:v5_rest}{Changed Rule 5}] From $V_5$ to $V \setminus V_5$. $v\in V_5$ transfers $\frac{1}{4}$ points to $u\in (V \setminus V_5)$ if $(u,v)$ is a free edge.
		
	\end{description}
\end{mdframed}

This change gives us the following propositions:

\begin{pro}
	\label{pro:easy6}
	Let $C$ be a cycle in a canonical path partition. Then after applying the transfer rules $C$ has at least~$6+\frac{1}{3}$ points.
\end{pro}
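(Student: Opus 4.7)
The plan is to imitate the proof of Proposition \ref{pro:easy7} in the $6$-regular setting and perform a case analysis on the size $i$ of the cycle $C$, now using the scaled-up per-edge transfer amount $\frac{4}{3i}$ prescribed by \ref{changed_rule:v2_cycle}. No transfer rule ever removes points from a cycle vertex (all its vertices lie in $V_1$ and the only applicable rule pushes points into them), so the only contribution to $C$'s tally beyond the $i$ starting points will come from balanced edges entering $C$.

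The easy case is $i \ge 7$: here $C$ already has at least $7 > 6 + \tfrac{1}{3}$ points from its vertex count alone. For $3 \le i \le 6$, I would perform a standard double count: writing $|E_C|$ for the number of edges of $G$ with both endpoints in $C$ and using that $G$ is $5$-regular, summing vertex degrees shows that the number of balanced edges incident to $C$ equals $5i - 2|E_C|$. Plugging this into \ref{changed_rule:v2_cycle} yields the closed-form expression $i + \tfrac{4(5i-2|E_C|)}{3i}$ for the number of points on $C$, so the problem reduces to upper-bounding $|E_C|$ for each $i \in \{3,4,5,6\}$.

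For $i \le 5$ the trivial bound $|E_C| \le \binom{i}{2}$ (the induced subgraph is at densest $K_i$, which is allowed under $5$-regularity) yields respectively $7$, $\tfrac{20}{3}$, and $\tfrac{19}{3}$ points, all at least $6 + \tfrac{1}{3}$; note that $i = 5$ is tight and is attained when the component is a $K_5$. I expect the single real obstacle to lie in the case $i = 6$: the adversarial extreme is a component whose induced subgraph is $K_6$, which would have $|E_C| = 15$, leave zero balanced edges, and give only $6$ points. This is exactly where the $K_6$-free hypothesis of \hyperref[thm:main5]{\text{Theorem} \ref{thm:main5}} becomes indispensable: it forces $|E_C| \le 14$, and plugging this in gives at least $6 + \tfrac{4}{9} > 6 + \tfrac{1}{3}$ points. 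Combining the four cases completes the proof.
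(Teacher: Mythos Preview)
Your argument is correct and follows essentially the same route as the paper's proof: case split on the cycle length $i$, observe cycle vertices only receive points, and for small $i$ count the balanced edges forced by $5$-regularity, invoking the $K_6$-free hypothesis precisely at $i=6$. The only cosmetic difference is that you parameterize by the total internal edge count $|E_C|$ whereas the paper counts per-vertex (each vertex has at least $6-i$ balanced edges when $i\le 5$); these are equivalent and yield the same numerical bounds, with $i=5$ tight at $\tfrac{19}{3}=6+\tfrac{1}{3}$.
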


\begin{proofing}{Proposition \ref{pro:easy6}}
	{
		Let $i$ denote the size of the cycle. If $i \ge 7$ then we are done, because according to the point transfer rules, a vertex of a cycle can only receive points. If $i=6$, then there are at least two balanced edges incident to the cycle, as there are no $K_6$ in the graph. Hence cycles of size $6$ ends with at least $6+\frac{4}{9}$ points. If $i \leq 5$ (and necessarily $i \ge 3$), then each vertex has at least $6-i$ balanced edges connecting it to vertices in $V_2$, because the degree of each vertex is~5. Each balanced edge contributes to the vertex $\frac{1}{i}\cdot \frac{4}{3}$ points (by \ref{changed_rule:v2_cycle}), hence, each vertex on the cycle has at least $1 + (6-i)\frac{1}{i}\cdot \frac{4}{3}$ points. Hence the cycle has at least $i\cdot(1 + (6-i)\frac{1}{i}\cdot \frac{4}{3}) = 8-\frac{i}{3}$ points. Hence, $i\leq 5$ there are at least $6+\frac{1}{3}$ points as desired.
		
	}
\end{proofing}

\begin{pro}
	\label{pro:easyouter6}
	Let $P$ be a path in a canonical path partition. Then after applying the transfer rules the end-vertices of $P$ have $6+ \frac{4}{3}$ points.
\end{pro}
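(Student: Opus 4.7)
The plan is to mimic the proof of Proposition \ref{pro:easyouter}, adjusting only for the fact that the graph is now $5$-regular rather than $6$-regular, and that the points transferred on a balanced edge to a cycle vertex are rescaled by $\frac{4}{3}$ (but that rule plays no role here since we are only accounting for what end-vertices \emph{receive}).

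First, I would invoke Lemma \ref{lem:canonical} to guarantee that a canonical path partition exists, so in particular $P$ has no isolated vertex and each of its two end-vertices actually has exactly one path edge. Next, I would count the balanced edges incident to an end-vertex $o$ of $P$. Since the graph is $5$-regular and exactly one of $o$'s five incident edges is a path edge, the other four are free edges. By definition $o \in V_1$, and the earlier remark that $V_1$ is an independent set with respect to $E_F$ (together with the fact that none of these free edges can have its other endpoint on a cycle in a canonical partition) forces each such free edge to have its other endpoint on a path vertex, hence in $V_2$ by the definition of $V_2$. Therefore all four free edges at $o$ are balanced.

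Finally, I would account for the points. Each end-vertex starts with $1$ point and never transfers points out (none of the Changed Rules apply to a vertex of $V_1$). By \ref{changed_rule:v2_path}, each of the $4$ balanced edges at $o$ brings in $\frac{2}{3}$ points from the corresponding $V_2$ vertex. Summing over the two end-vertices yields
\[
2 \;+\; 2\cdot 4\cdot \frac{2}{3} \;=\; 2 + \frac{16}{3} \;=\; \frac{22}{3} \;=\; 6+\frac{4}{3},
\]
which is the claimed bound. There is no real obstacle here: the only point that warrants a brief justification is that all four non-path edges at an end-vertex really are balanced, and this follows directly from the canonicity of the partition exactly as in the $6$-regular case.
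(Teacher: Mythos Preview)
Your proof is correct and follows essentially the same approach as the paper's: count the $8$ balanced edges at the two end-vertices (four each, since the graph is $5$-regular and one edge is the path edge), apply \ref{changed_rule:v2_path}, and add the two starting points to get $2+8\cdot\frac{2}{3}=6+\frac{4}{3}$. Your parenthetical about cycle vertices is redundant, since cycle vertices already belong to $V_1$ and are therefore excluded by the independence of $V_1$ in $E_F$; otherwise the argument matches the paper's.
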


\begin{proofing}{Proposition \ref{pro:easyouter6}}
	{In a canonical path partition, there are no isolated vertices. Hence $P$ has~8 balanced edges incident with its two end-vertices. Applying \ref{changed_rule:v2_path} together with the two starting points, sums up to $2 + 8\cdot \frac{2}{3}=6+ \frac{4}{3}$ points. }
\end{proofing}

\begin{pro}\label{pro:easypoints5}
	The only nodes that can end up with a negative number of points are those from $V_2$. In particular:
	\begin{enumerate}
		\item Every node in $V_2$ has at least $-1$ points.
		\item Every node in $V_4$ or $V_3$ has at least one point.
		\item Every node in $V_5$ has a non-negative number of points.

	\end{enumerate}
\end{pro}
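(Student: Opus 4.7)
The plan is to go through the five vertex classes and, for each, bound the worst-case total number of points a vertex can transfer out under the three surviving rules (\ref{changed_rule:v2_cycle}, \ref{changed_rule:v2_path}, and \ref{changed_rule:v5_rest}). The key structural input is that because $G$ is $5$-regular, an internal path vertex has exactly $2$ path edges and hence $3$ free edges (one fewer than in the $6$-regular setting analysed earlier), while vertices in $V_1$ have all $5$ edges free; the latter is only needed to record that no rule designates a $V_1$ vertex as a sender.

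For $V_2$, I would note that the two applicable rules both act on balanced edges, and per edge the maximum transfer is $\max\bigl(\tfrac{1}{3}\cdot\tfrac{4}{3},\,\tfrac{2}{3}\bigr)=\tfrac{2}{3}$, attained by \ref{changed_rule:v2_path} (the cycle rule \ref{changed_rule:v2_cycle} tops out at $\tfrac{4}{9}$ since $i\geq 3$). Over the $3$ free edges the worst-case outgoing total is $2$, so a $V_2$ vertex ends with at least $1-2=-1$ points, yielding claim~1.

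For $V_3$ and $V_4$, the original Rules~3 and~4 (the only rules that previously made these vertices senders) have been removed, and no other rule involves them as senders, so they retain their initial point. For $V_5$, only \ref{changed_rule:v5_rest} applies, contributing at most $3\cdot\tfrac{1}{4}=\tfrac{3}{4}$ outgoing, so the vertex ends with at least $\tfrac{1}{4}\geq 0$, yielding claims~2 and~3. The opening clause of the proposition — that only $V_2$ vertices can go negative — then follows by combining these bounds with the observation that $V_1$ vertices are never senders.

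There is no real obstacle: the proposition is a short, case-by-case accounting check, and simply records the bookkeeping consequence of deleting the dangerous-vertex rules and re-weighting the cycle rule by $\tfrac{4}{3}$ for the $5$-regular setting. The only point worth double-checking is that every vertex in $V_2,V_3,V_4,V_5$ is genuinely an internal path vertex (so has exactly $3$ free edges and not $5$), but this is immediate from the class definitions, since all path end-vertices and cycle vertices are placed in $V_1$.
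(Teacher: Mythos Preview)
Your proposal is correct and matches the paper's own proof essentially line for line: the paper also argues that a $V_2$ vertex has at most three balanced edges each transferring at most $\tfrac{2}{3}$, that $V_3\cup V_4$ vertices can only receive points once Rules~3 and~4 are deleted, and that a $V_5$ vertex with three free edges loses at most $3\cdot\tfrac{1}{4}$. Your additional remarks (the explicit $\tfrac{4}{9}$ cap from \ref{changed_rule:v2_cycle}, and the check that $V_2,\dots,V_5$ vertices are internal and hence have exactly three free edges) are accurate elaborations but not new ideas.
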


\begin{proofing}{Proposition \ref{pro:easypoints5}}
	{	Each vertex starts with one point.
		\begin{enumerate}
			\item	Vertices in $V_2$ have at most three balanced edges, and may transfer at most $\frac{2}{3}$ points on each balanced edge, hence they have at least $-1$ points.
			\item	Vertices in $V_3\cup V_4$ can only receive points, hence they have at least one point.
			\item Vertices in $V_5$ have three free edges, and may transfer at most $\frac{1}{4}$ points on each free edge.
		\end{enumerate}
	}
\end{proofing}

For completeness we state the following proposition, that is identical to \hyperref[lem:kadjacent]{\textbf{Lemma} \ref{lem:kadjacent}} .

\begin{pro}
	\label{pro:lemmafour}
	Let $X$ be a sequence of $k\geq 2$ vertices in $V_2^b$, then $X$ ends with at least $\frac{1}{3}\cdot k-\frac{4}{3}$ points.
\end{pro}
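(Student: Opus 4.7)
The plan is to establish the inequality $b(X) \leq (k+2) \cdot \tfrac{2}{3}$, where $b(X)$ is defined as in Section \ref{subsec:pro2}. Subtracting from the $k$ initial points of $X$ then yields the claimed bound $\tfrac{k}{3} - \tfrac{4}{3}$. Since the modified rule set retains only \ref{changed_rule:v2_cycle}, \ref{changed_rule:v2_path}, and \ref{changed_rule:v5_rest} (rules~3 and 4 being deleted in the $5$-regular setting), no non-balanced transfers leave $V_2^b$, so bounding $b(X)$ suffices. The argument is therefore a direct translation of the proof of Lemma \ref{lem:none_to_dangerous} using the updated numerical transfer values.

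The key point is that all structural observations of Section \ref{subsec:pro2}, namely Observations \ref{crossinginacceptor}, \ref{inacceptor1}, \ref{inacceptor2}, \ref{cycleout}, \ref{pathout} together with their multi-vertex generalizations (Observations \ref{pathoutk}, \ref{pathink}, \ref{cycleoutk}, and \ref{samecyclek}), depend only on canonicality (Properties 1 and 2) of the partition and not on the graph's degree or on the numerical values in the transfer rules, so they apply verbatim. The case analysis then splits the same way as in Lemma \ref{lem:none_to_dangerous}: in Case 1, some $x_i$ has a balanced edge to an external path, and Observation \ref{pathoutk} pins $N_b(X)$ to a single endpoint so $b(X) \leq k \cdot \tfrac{2}{3}$; in Case 2, all balanced edges from $X$ target $\{o_1,o_2\}$, where Observation \ref{pathink} allows at most one vertex of $X$ to reach both ends, giving $b(X) \leq (k+1) \cdot \tfrac{2}{3}$; in Case 3, exactly one vertex $x_i$ has balanced edges to a cycle, and since in the $5$-regular setting $x_i$ has only three free edges while each cycle-edge now transfers at most $\tfrac{4}{9}$, the worst contribution is $\tfrac{2}{3} + \tfrac{2}{3} + \tfrac{4}{9} = \tfrac{16}{9}$ (from edges to $o_1$, $o_2$ and a triangle), yielding $b(X) \leq \tfrac{6k+10}{9} \leq (k+2)\cdot\tfrac{2}{3}$.

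The hard part is Case 4, where at least two vertices of $X$ have balanced edges to cycles. Observation \ref{samecyclek} forces those vertices to form a consecutive block $X'$ all incident to a single cycle $C$, with only the two endpoints of $X'$ possibly also reaching $o_1$ or $o_2$. What needs to be shown is $b(X') \leq |X'| \cdot \tfrac{2}{3} + \tfrac{4}{3}$; this is weaker than the $|X'| \cdot \tfrac{1}{2} + \tfrac{4}{3}$ used for $6$-regular graphs but still a nontrivial check because the per-cycle-edge transfer is now $\tfrac{4}{3c}$ rather than $\tfrac{1}{c}$. I would reverify this for each $c \in \{3,4,5,6\}$ via analogs of Observations \ref{samecycle2} and \ref{samecycle3}: the constraint that consecutive vertices of $X'$ cannot have balanced edges to adjacent vertices of $C$ limits how many edges $X'$ can have to $C$, and the fact that each $V_2^b$ vertex has only three (rather than four) free edges in the $5$-regular case keeps the totals within range, with equality in the bound occurring at $c=4$ where the two endpoints of $X'$ may each send two balanced edges to opposite cycle vertices in addition to one edge to $o_1$ or $o_2$. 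Summing with $(k-|X'|)\cdot\tfrac{2}{3}$ from the vertices of $X$ outside $X'$ then completes $b(X) \leq (k+2)\cdot\tfrac{2}{3}$, as required.
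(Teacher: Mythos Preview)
Your proposal is correct and follows essentially the same route as the paper: reduce to bounding $b(X)\le (k+2)\cdot\tfrac{2}{3}$, observe that deleting Rules~3 and~4 makes this sufficient, reuse the structural Observations \ref{crossinginacceptor}--\ref{samecyclek} verbatim, and redo the four-case analysis of Lemma~\ref{lem:none_to_dangerous} with the updated cycle-transfer value $\tfrac{4}{3c}$. The paper is slightly terser in Case~4: rather than reverifying each $c\in\{3,4,5,6\}$ as you propose, it simply notes that the edge-count bounds in Observations~\ref{samecycle2} and~\ref{samecycle3} are purely structural, so the old per-vertex average of $\tfrac{1}{2}$ scales by the factor $\tfrac{4}{3}$ to $\tfrac{2}{3}$, giving $b(X')\le |X'|\cdot\tfrac{2}{3}+2\cdot\tfrac{2}{3}$ directly; your remark about the reduced free-edge count is true but not needed for the bound.
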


\begin{proofing}{Proposition \ref{pro:lemmafour}}
	{

		We prove this, mimicking the proof of \hyperref[lem:kadjacent]{\textbf{Lemma} \ref{lem:kadjacent}} with a change in the following places:
		
		\begin{itemize}

			\item \hyperref[samecycle2]{\text{Changed Observation} \ref{samecycle2}} - the same proof works, but \ref{changed_rule:v2_cycle} will imply that $X'$ transfers to $C$ up to $\frac{4}{3}$ points.
			
			\item \hyperref[samecycle3]{\text{Changed Observation} \ref{samecycle3}}-  the same proof works, but \ref{changed_rule:v2_cycle} will imply that $X'$ transfers to $C$ up to $2$ points.
			
			\item \hyperref[samecyclek]{\text{Changed Observation} \ref{samecyclek}} - the same is true and by \hyperref[samecycle2]{\text{Changed Observation} \ref{samecycle2}} and  \hyperref[samecycle3]{\text{Changed Observation} \ref{samecycle3}} we get that  $b(X')\leq |X'|\cdot\frac{2}{3}+2\cdot\frac{2}{3}$.

			\item Changed Case \ref{item3:lemma} in the proof of \hyperref[lem:none_to_dangerous]{\text{Lemma} \ref{lem:none_to_dangerous}} - now $x_i$ has only three free edges and hence can \textbf{go} only to one cycle (instead of two) in addition to $o_1$ and $o_2$. The points $x_i$ will transfer to a cycle is strictly less than $\frac{2}{3}$.   Hence, $b(X)< (k+2)\cdot\frac{2}{3}$.

			\item Changed Case \ref{item4:lemma} in the proof of \hyperref[lem:none_to_dangerous]{\text{Lemma} \ref{lem:none_to_dangerous}} - by \hyperref[samecyclek]{\text{Changed Observation} \ref{samecyclek}} then
			$b(X')\leq|X'|\cdot\frac{2}{3}+2\cdot\frac{2}{3}$, and hence $b(X) = (k+2)\cdot\frac{2}{3}$.
		\end{itemize}
		
		All the rest of the proof remains the same. Hence $X$ ends with at least  $\frac{1}{3}\cdot k-\frac{4}{3}$ points.
	}
	
\end{proofing}

\begin{pro}
	\label{pro:intenal}
	Let $P$ be a path in a canonical path partition. The number of points on internal nodes in $P$ is at least $-1$.
\end{pro}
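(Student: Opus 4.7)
The plan is to reuse the block decomposition from Section~\ref{sec:overview}, taking advantage of the fact that in the modified rules only vertices in $V_2$ can finish with a negative number of points: by \hyperref[pro:easypoints5]{\text{Proposition} \ref{pro:easypoints5}}, every vertex outside $V_2$ is non-negative, and every vertex in $V_2$ loses at most $1$. I will write the internal vertices of $P$ as a disjoint union of (i) the prefix that precedes the first $V_2$ occurrence on $P$, (ii) the sequence of blocks $B_1 B_2 \cdots B_m$, and (iii) the $V_5$ vertices that lie strictly between two $V_2$ vertices but are discarded when forming the $P_i$'s. Parts (i) and (iii) consist entirely of vertices from $V_3 \cup V_4 \cup V_5$, each contributing a non-negative number of points. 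Hence it is enough to show that the contribution of $B_1 \cdots B_m$ is at least $-1$.

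Next I will bound each block $B_i = X_i P_i$ separately. If $|X_i| = 1$, the unique vertex of $X_i$ lies in $V_2^a$ by maximality, and contributes at least $-1$ by \hyperref[pro:easypoints5]{\text{Proposition} \ref{pro:easypoints5}}. If $|X_i| = k > 1$, then \hyperref[pro:lemmafour]{\text{Proposition} \ref{pro:lemmafour}} gives $X_i$ at least $\frac{k}{3} - \frac{4}{3} \ge -\frac{2}{3}$ points. The crucial step is the following observation: for every non-last block ($i < m$), $P_i$ is nonempty even after removing $V_5$ vertices. Indeed, the first vertex of $P_i$ in the path is a path-neighbor of the last vertex of $X_i$, which lies in $V_2$; but a vertex with a $V_2$ path-neighbor must lie in $V_3 \cup V_4$ by the definitions of the classes, hence cannot be in $V_5$ and contributes at least $1$ point. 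Consequently every non-last block contributes at least $-1 + 1 = 0$ when $|X_i| = 1$ or at least $-\frac{2}{3} + 1 > 0$ when $|X_i| \ge 2$.

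The only block that can be negative is the last one, $B_m$, where $P_m$ may be empty (when $X_m$ is immediately followed by the end-vertex of $P$). In the worst case $|X_m| = 1$ and $P_m$ is empty, giving $B_m \ge -1$; every other sub-case yields at least $-\frac{2}{3}$ or better, either from $|X_m| \ge 2$ together with the $X_m$ bound, or from $P_m$ nonempty contributing $\ge 1$. Summing the contributions of (i), (ii) and (iii) gives a total of at least $-1$ points on the internal vertices of $P$. The only non-trivial step in this plan is the class-based observation that the first vertex of $P_i$ must lie in $V_3 \cup V_4$, and that obstacle dissolves immediately because $V_5$ was defined to be exactly the set of non-$V_1 \cup V_2$ vertices with no $V_2$ path-neighbor; this is precisely the simplification that makes the $5$-regular proof substantially lighter than the $6$-regular one.
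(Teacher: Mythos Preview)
Your proof is correct and follows essentially the same approach as the paper: both use the block decomposition, observe that every non-last block has non-negative total (since $X_i$ contributes at least $-1$ and $P_i$ contributes at least $1$ from a $V_3\cup V_4$ vertex), and conclude that only the last block can be negative, giving the $-1$ bound. Your version spells out more explicitly why the first vertex after $X_i$ cannot be discarded as a $V_5$ vertex, whereas the paper simply cites the block structure already set up in Section~\ref{sec:overview}.
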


\begin{proofing}{Proposition \ref{pro:intenal}}
	{
		We show that except for blocks of kind $4$ (the last block) that may end with $-1$ points, the number of points on all blocks is non-negative.
		Because all internal vertices up to the beginning of the first block contribute a non-negative number of points (by item $3$ in \hyperref[pro:easypoints5]{\textbf{Proposition} \ref{pro:easypoints5}}), we have that the internal vertices of a path will have at least $-1$ points.
		
		By item $1$ in \hyperref[pro:easypoints5]{\textbf{Proposition} \ref{pro:easypoints5}} a vertex in $V_2$ has at least $-1$ point , hence blocks of kind $1$ and $2$ are not a problem because they are followed by one or two vertices with at least one point (item $2$ in \hyperref[pro:easypoints5]{\textbf{Proposition} \ref{pro:easypoints5}}). Hence the number of points on any block of kind $1$ or $2$ is non-negative.
		For blocks  of kind $3$,
		\hyperref[pro:lemmafour]{\textbf{Proposition} \ref{pro:lemmafour}} ensures that the number of points on $k\geq 2$ vertices in $V_2^b$ is at least $-1$ (even $-\frac{2}{3}$).

	}
\end{proofing}

Now we prove the theorem:

\begin{proofing}{Theorem \ref{thm:main5}}
	{Consider a canonical path partition $\mathcal{S}$. We show that applying the set of transfer rules to $\mathcal{S}$ leads to a situation where every component in $\mathcal{S}$ has at least $6+\frac{1}{3}$ points. \hyperref[pro:easy6]{\textbf{Proposition} \ref{pro:easy6}} implies that we only need to handle paths. For a path \hyperref[pro:intenal]{\textbf{Proposition} \ref{pro:intenal}} combined with \hyperref[pro:easyouter6]{\textbf{Proposition }\ref{pro:easyouter6}} will imply that the path has at least $6+\frac{1}{3}$ points, which proves the theorem.
		
	}
\end{proofing}

\section*{Acknowledgements}
Work supported in part by the Israel Science Foundation (grant No. 1388/16).

\bibliographystyle{alpha}
\bibliography{proposal_bib}

\end{document}